\newcommand{\R}{\ensuremath{\mathbf{R}}}
\newcommand{\Z}{\ensuremath{\mathbf{Z}}}
\newcommand{\Nn}{\ensuremath{\mathbf{N}}}
\newcommand{\E}{\ensuremath{\mathbb{E}}}
\newcommand{\dx}{\ensuremath{\mathrm{d}}}
\newcommand{\Cov}{\ensuremath{\mathrm{Cov}}}
\newtheorem{theorem}{Theorem}[section]
\newtheorem{proposition}{Proposition}[section]
\newtheorem{lemma}{Lemma}[section]
\newtheorem{corollary}{Corollary}[section]
\newtheorem{example}{Example}[section]
\newtheorem{remark}{Remark}[section]
\newtheorem{assumption}{Assumption}[section]
\newcommand{\blind}{0}
\begin{document}

\def\spacingset#1{\renewcommand{\baselinestretch}%
{#1}\small\normalsize} \spacingset{1}


\if0\blind
{
 
 \title{\bf Machine Learning Time Series Regressions With an Application to Nowcasting}
 
 \author{Andrii Babii\thanks{Department of Economics, University of North Carolina--Chapel Hill - Gardner Hall, CB 3305
 		Chapel Hill, NC 27599-3305. Email: babii.andrii@gmail.com} \and Eric Ghysels\thanks{Department of Economics and Kenan-Flagler Business School, University of North Carolina--Chapel Hill and CEPR. Email: eghysels@unc.edu.} \and
 	Jonas Striaukas\thanks{LIDAM UC Louvain and Research Fellow of the Fonds de la Recherche Scientifique - FNRS. Email: jonas.striaukas@gmail.com.}}
 
 \maketitle
 
\thispagestyle{empty}

} \fi

\if1\blind
{
  \bigskip
  \bigskip
  \bigskip
  \begin{center}
    {\LARGE\bf Machine Learning Time Series Regressions with an Application to Nowcasting}
\end{center}
  \medskip
} \fi

\bigskip
\begin{abstract}
	\noindent This paper introduces structured machine learning regressions for high-dimensional time series data potentially sampled at different frequencies. The sparse-group LASSO estimator can take advantage of such time series data structures and outperforms the unstructured LASSO. We establish oracle inequalities for the sparse-group LASSO estimator within a framework that allows for the mixing processes and recognizes that the financial and the macroeconomic data may have heavier than exponential tails. An empirical application to nowcasting US GDP growth indicates that the estimator performs favorably compared to other alternatives and that text data can be a useful addition to more traditional numerical data.
\end{abstract}

\noindent%
{\it Keywords:} high-dimensional time series, heavy-tails, tau-mixing, sparse-group LASSO, mixed frequency data, textual news data. \\
\vfill

\setcounter{page}{0}

\newpage
\spacingset{1.45} 
\section{Introduction}
\label{sec:intro}

The statistical imprecision of quarterly gross domestic product (GDP) estimates, along with the fact that the first estimate is available with a delay of nearly a month, pose a significant challenge to policy makers, market participants, and other observers with an interest in monitoring the state of the economy in real time; see, e.g., \cite{ghysels2018forecasting} for a recent discussion of macroeconomic data revision and publication delays. A term originated in meteorology, nowcasting pertains to the prediction of the present and very near future.  Nowcasting is intrinsically a mixed frequency data problem as the object of interest is a low-frequency data series (e.g., quarterly GDP), whereas  the real-time information (e.g., daily, weekly, or monthly) can be used to update the state, or to put it differently, to {\it nowcast} the low-frequency series of interest. Traditional methods used for nowcasting rely on dynamic factor models that treat the underlying low frequency series of interest as a latent process with high frequency data noisy observations. These models are naturally cast in a state-space form and inference can be performed using likelihood-based methods and Kalman filtering techniques; see \cite{banbura2013now} for a recent survey.

\smallskip

So far, nowcasting has mostly relied on the so-called standard macroeconomic data releases, one of the most prominent examples being the Employment Situation report released on the first Friday of every month by the US Bureau of Labor Statistics. This report includes the data on the nonfarm payroll employment, average hourly earnings, and other summary statistics of the labor market activity. Since most sectors of the economy move together over the business cycle, good news for the labor market is usually good news for the aggregate economy. In addition to the labor market data, the nowcasting models typically also rely on construction spending, (non-)manufacturing report, retail trade, price indices, etc., which we will call the traditional macroeconomic data. One prominent example of nowcast is produced by the Federal Reserve Bank of New York relying on a dynamic factor model with thirty-six predictors of different frequencies; see \cite{bok2018macroeconomic} for more details.

Thirty-six predictors of traditional macroeconomic series may be viewed as a small number compared to hundreds of other potentially available and useful nontraditional series.  For instance, macroeconomists increasingly rely on nonstandard data such as textual analysis via machine learning, which means potentially hundreds of series. A textual analysis data set based on {\it Wall Street Journal} articles that has been recently made available features a taxonomy of 180 topics; see \cite{bybee2019structure}. Which topics are relevant? How should they be selected? \cite{thorsrud2018words} constructs a daily business cycle index based on quarterly GDP growth and textual information contained in the daily business newspapers relying on a  dynamic factor model where time-varying sparsity is enforced upon the factor loadings using a latent threshold mechanism. His work shows the feasibility of traditional state space setting, yet the challenges grow when we also start thinking about adding other potentially high-dimensional data sets, such as payment systems information or GPS tracking data. Studies for Canada (\cite{galbraith2018nowcasting}),  Denmark
(\cite{carlsen2010dankort}), India (\cite{raju2019nowcasting}), Italy (\cite{aprigliano2019using}),  Norway (\cite{aastveit2020nowcasting}), Portugal (\cite{duarte2017mixed}), and the United States (\cite{barnett2016nowcasting}) find that payment transactions can help to nowcast and to forecast GDP and private consumption in the short term; see also \cite{moriwaki2019nowcasting} for nowcasting unemployment rates with smartphone GPS data, among others. We could quickly reach  numerical complexities involved with estimating high-dimensional state space models, making the dynamic factor model approach potentially computationally prohibitively complex and slow, although some alternatives to the Kalman filter exist for the large data environments; see e.g., \cite{chan2009efficient} and \cite{delle2019efficient}.

\smallskip

In this paper, we study nowcasting a low-frequency series -- focusing on the key example of US GDP growth -- in a data-rich environment, where our data not only includes conventional high-frequency series but also nonstandard data generated by textual analysis of financial press articles. We find that our nowcasts are either superior to or at par with those posted by the Federal Reserve Bank of New York (henceforth NY Fed). This is the case when (a) we compare our approach with the NY Fed using the same data, or (b) when we compare our approach using an expanded high-dimensional data set. The former is a comparison of methods, whereas the latter pertains to the value of the additional (nonstandard) big data. To deal with such massive nontraditional data sets, instead of using the likelihood-based dynamic factor models, we rely on a different approach that involves machine learning methods based on the regularized empirical risk minimization principle and data sampled at different frequencies. We adopt the MIDAS (Mixed Data Sampling) projection approach which is more amenable to  high-dimensional data environments. Our general framework also includes the standard same frequency time series regressions.

\smallskip

Several novel contributions are required to achieve our goal. First, we argue that the high-dimensional mixed frequency time series regressions involve certain data structures that once taken into account should improve the performance of unrestricted estimators in small samples. These structures are represented by groups covering lagged dependent variables and groups of lags for a single (high-frequency) covariate. To that end, we leverage on the sparse-group LASSO (sg-LASSO) regularization that accommodates conveniently such structures; see \cite{simon2013sparse}. The attractive feature of the sg-LASSO estimator is that it allows us to combine effectively the approximately sparse and dense signals; see e.g., \cite{carrasco2016sample} for a comprehensive treatment of high-dimensional dense time series regressions as well as \cite{mogliani2020bayesian} for a complementary to ours Bayesian view of penalized MIDAS regressions.

We recognize that the economic and financial time series data are persistent and often heavy-tailed, while the bulk of the machine learning methods assumes i.i.d.\ data and/or exponential tails for covariates and regression errors; see \cite{belloni2018high} for a comprehensive review of high-dimensional econometrics with i.i.d.\ data. There have been several recent attempts to expand the asymptotic theory to settings involving time series dependent data, mostly for the LASSO estimator. For instance, \cite{kock2015oracle} and \cite{uematsu2019high} establish oracle inequalities for regressions with i.i.d.\ errors with sub-Gaussian tails; \cite{wong2017lasso} consider $\beta$-mixing series with exponential tails; \cite{wu2016performance}, \cite{han2017high}, and \cite{chernozhukov2019lasso} establish oracle inequalities for causal Bernoulli shifts with independent innovations and polynomial tails under the functional dependence measure of \cite{wu2005nonlinear}; see also \cite{medeiros2016l1} and \cite{medeiros2017adaptive} for results on the adaptive LASSO based on the triplex tail inequality for mixingales of \cite{jiang2009uniform}.

\smallskip

Despite these efforts, there is no complete estimation and prediction theory for high-dimensional time series regressions under the assumptions comparable to the classical GMM and QML estimators. For instance, the best currently available results are too restrictive for the MIDAS projection model, which is typically an example of a causal Bernoulli shift with \textit{dependent innovations}. Moreover, the \textit{mixing processes} with \textit{polynomial tails} that are especially relevant for the financial and macroeconomic time series have not been properly treated due to the fact that the sharp Fuk-Nagaev inequality was not available in the relevant literature until recently. The Fuk-Nagaev inequality, see \cite{fuk1971probability}, describes the concentration of sums of random variables with a mixture of the sub-Gaussian and the polynomial tails. It provides sharp estimates of tail probabilities unlike Markov's bound in conjunction with the Marcinkiewicz-Zygmund or Rosenthal's moment inequalities.

This paper fills these gaps in the literature relying on the Fuk-Nagaev inequality for $\tau$-mixing processes of \cite{babiietalinference} and establishes the nonasymptotic and asymptotic estimation and prediction properties of the sg-LASSO projections under weak tail conditions and potential misspecification. The class of $\tau$-mixing processes is fairly rich covering the $\alpha$-mixing processes, causal linear processes with infinitely many lags of $\beta$-mixing processes, and nonlinear Markov processes; see \cite{dedecker2004coupling,dedecker2005new} for more details, as well as \cite{carrasco2002mixing} and \cite{francq2019garch} for mixing properties of various processes encountered in time series econometrics. Our weak tail conditions require at least $4+\epsilon$ finite moments for covariates, while the number of finite moments for the error process can be as low as $2+\nu$, provided that covariates are sufficiently integrable.  From the theoretical point of view, we impose {\it approximate sparsity}, relaxing the assumption of exact  sparsity of the projection coefficients and allowing for other forms of misspecification (see \cite{giannone2018economic} for further discussion on the topic of sparsity). Lastly, we cover the LASSO and the group LASSO as special cases.

\smallskip

The rest of the paper is organized as follows. Section \ref{sec:midas} presents the setting of (potentially mixed frequency) high-dimensional time series regressions. Section \ref{sec:oracle} characterizes nonasymptotic estimation and prediction accuracy of the sg-LASSO estimator for $\tau$-mixing processes with polynomial tails. We report on a Monte Carlo study in Section \ref{sec:mc} which provides further insights regarding the validity of our theoretical analysis in small sample settings typically encountered in empirical applications. Section \ref{sec:empirical} covers the empirical application. Conclusions appear in Section \ref{sec:conclusion}.

\paragraph{Notation:} For a random variable $X\in\R$, let $\|X\|_q=(\E|X|^q)^{1/q}$ be its $L_q$ norm with $q\geq 1$. For $p\in\Nn$, put $[p] = \{1,2,\dots,p\}$. For a vector $\Delta\in\R^p$ and a subset $J\subset [p]$, let $\Delta_J$ be a vector in $\R^p$ with the same coordinates as $\Delta$ on $J$ and zero coordinates on $J^c$. Let $\mathcal{G}$ be a partition of $[p]$ defining the group structure, which is assumed to be known to the econometrician. For a vector $\beta\in\R^p$, the sparse-group structure is described by a pair $(S_0,\mathcal{G}_0)$, where $S_0=\{j\in[p]:\;\beta_j\ne 0 \}$ and  $\mathcal{G}_0 = \left\{G\in\mathcal{G}:\; \beta_{G} \ne 0\right\}$ are the support and respectively the group support of $\beta$. We also use $|S|$ to denote the cardinality of arbitrary set $S$. For $b\in\R^p$, its $\ell_q$ norm is denoted as $|b|_q = \left(\sum_{j\in[p]}|b_j|^q\right)^{1/q}$ for $q\in[1,\infty)$ and $|b|_\infty = \max_{j\in[p]}|b_j|$ for $q=\infty$. For $\mathbf{u},\mathbf{v}\in\R^T$, the empirical inner product is defined as $\langle \mathbf{u},\mathbf{v}\rangle_T = T^{-1}\sum_{t=1}^T u_tv_t$ with the induced empirical norm $\|.\|_T^2=\langle.,.\rangle_T=|.|_2^2/T$. For a symmetric $p\times p$ matrix $A$, let $\mathrm{vech}(A)\in\R^{p(p+1)/2}$ be its vectorization consisting of the lower triangular and the diagonal elements. For $a,b\in\R$, we put $a\vee b = \max\{a,b\}$ and $a\wedge b = \min\{a,b\}$. Lastly, we write $a_n\lesssim b_n$ if there exists a (sufficiently large) absolute constant $C$ such that $a_n\leq C b_n$ for all $n\geq 1$ and $a_n\sim b_n$ if $a_n\lesssim b_n$ and $b_n\lesssim a_n$.

\section{High-dimensional mixed frequency regressions}\label{sec:midas}

Let $\{y_t:t\in[T]\}$ be the target low frequency series observed at integer time points $t\in[T]$. Predictions of $y_t$ can involve its lags as well as a large set of covariates and lags thereof. In the interest of generality, but more importantly because of the empirical relevance we allow the covariates to be sampled at higher frequencies - with same frequency being a special case. More specifically, let there be $K$ covariates $\{x_{t-(j-1)/m,k},j\in[m],t\in[T],k\in[K]\}$ possibly measured at some higher frequency with $m\geq 1$ observations for every $t$  and consider the following regression model
\begin{equation*}
\label{eq:ardlmidas}
\phi(L)y_t = \rho_0 + \sum_{k=1}^K\psi(L^{1/m}; \beta_k)x_{t,k} + u_t,\qquad t\in[T],
\end{equation*}
where $\phi(L) = I - \rho_1L - \rho_2L^2 - \dots - \rho_JL^J$ is a low-frequency lag polynomial and $\psi(L^{1/m};\beta_k)x_{t,k} = 1/m \sum_{j=1}^m\beta_{j,k}x_{t-(j-1)/m,k}$ is a high-frequency lag polynomial. For $m$ = 1, we have a standard autoregressive distributed lag (ARDL) model, which is the workhorse regression model of the time series econometrics literature. Note that the polynomial $\psi(L^{1/m}; \beta_k)x_{t,k}$ involves the same $m$ number of high-frequency lags for each covariate $k\in[K]$, which is done for the sake of simplicity and can easily be relaxed; see Section~\ref{sec:empirical}.

\smallskip

The ARDL-MIDAS model (using the terminology of \cite{andreou2013should}) features $J+1+m\times K$ parameters. In the big data setting with a large number of covariates sampled at high-frequency, the total number of parameters may be large compared to the effective sample size or even exceed it. This leads to poor estimation and out-of-sample prediction accuracy in finite samples. For instance, with $m$ = 3 (quarterly/monthly setting) and 35 covariates at 4 lagged quarters, we need to estimate $m\times K=420$ parameters. At the same time, say the post-WWII quarterly GDP growth series has less than 300 observations.

\smallskip

The LASSO estimator, see \cite{tibshirani1996regression}, offers an appealing convex relaxation of a difficult nonconvex best subset selection problem. It allows increasing the precision of predictions via the selection of sparse and parsimonious models. In this paper, we focus on the structured sparsity with additional dimensionality reductions that aim to improve upon the unstructured LASSO estimator in the time series setting.

\smallskip

First, we parameterize the high-frequency lag polynomial following the MIDAS regression or the distributed lag econometric literature (see  \cite{ghysels2006predicting}) as
\begin{equation*}
\label{eq:midaspoly}
\psi(L^{1/m};\beta_k)x_{t,k} = \frac{1}{m}\sum_{j=1}^m\omega((j-1)/m;\beta_k)x_{t-(j-1)/m,k},
\end{equation*}
where $\beta_k$ is $L$-dimensional vector of coefficients with $L\leq m$ and $\omega:[0,1]\times \R^L\to \R$ is some weight function. Second, we approximate the weight function as
\begin{equation}
\label{eq:dicleg}
\omega(u;\beta_k) \approx \sum_{l=1}^L\beta_{k,l}w_l(u),\qquad u\in[0,1],
\end{equation}
where $\{w_l:\; l=1,\dots,L \}$ is a collection of functions, called the \textit{dictionary}. The simplest example of the dictionary consists of algebraic power polynomials, also known as \cite{almon1965distributed} polynomials in the time series regression analysis literature. More generally, the dictionary may consist of arbitrary approximating functions, including the classical orthogonal bases of $L_2[0,1]$; see Online Appendix Section \ref{appendix:dictionaries} for more examples. Using orthogonal polynomials typically reduces the multicollinearity and leads to better finite sample performance. It is worth mentioning that the specification with dictionaries deviates from the standard MIDAS regressions and leads to a computationally attractive convex optimization problem, cf. \cite{FMO13}.

\smallskip

The size of the dictionary $L$ and the number of covariates $K$ can still be large and the \textit{approximate sparsity} is a key assumption imposed throughout the paper. With the approximate sparsity, we recognize that assuming that most of the estimated coefficients are zero is overly restrictive and that the approximation error should be taken into account. For instance, the weight function may have an infinite series expansion,  nonetheless, most can be captured by a relatively small number of orthogonal basis functions. Similarly, there can be a large number of economically relevant predictors, nonetheless, it might be sufficient to select only a smaller number of the most relevant ones to achieve good out-of-sample forecasting performance. Both model selection goals can be achieved with the LASSO estimator. However, the LASSO does not recognize that covariates at different (high-frequency) lags are temporally related. 

\smallskip

In the baseline model, all high-frequency lags (or approximating functions once we parameterize the lag polynomial) of a single covariate constitute a group. We can also assemble all lag dependent variables into a group. Other group structures could be considered, for instance combining various covariates into a single group, but we will work with the simplest group setting of the aforementioned baseline model. The sparse-group LASSO (sg-LASSO) allows us to incorporate such structure into the estimation procedure. In contrast to the group LASSO, see \cite{yuan2006model}, the sg-LASSO promotes sparsity \textit{between} and \textit{within} groups, and allows us to capture the predictive information from each group, such as approximating functions from the dictionary or specific covariates from each group.
\begin{figure}[h]
	\centering
	\begin{subfigure}{0.33\textwidth} 
		\includegraphics[width=\textwidth]{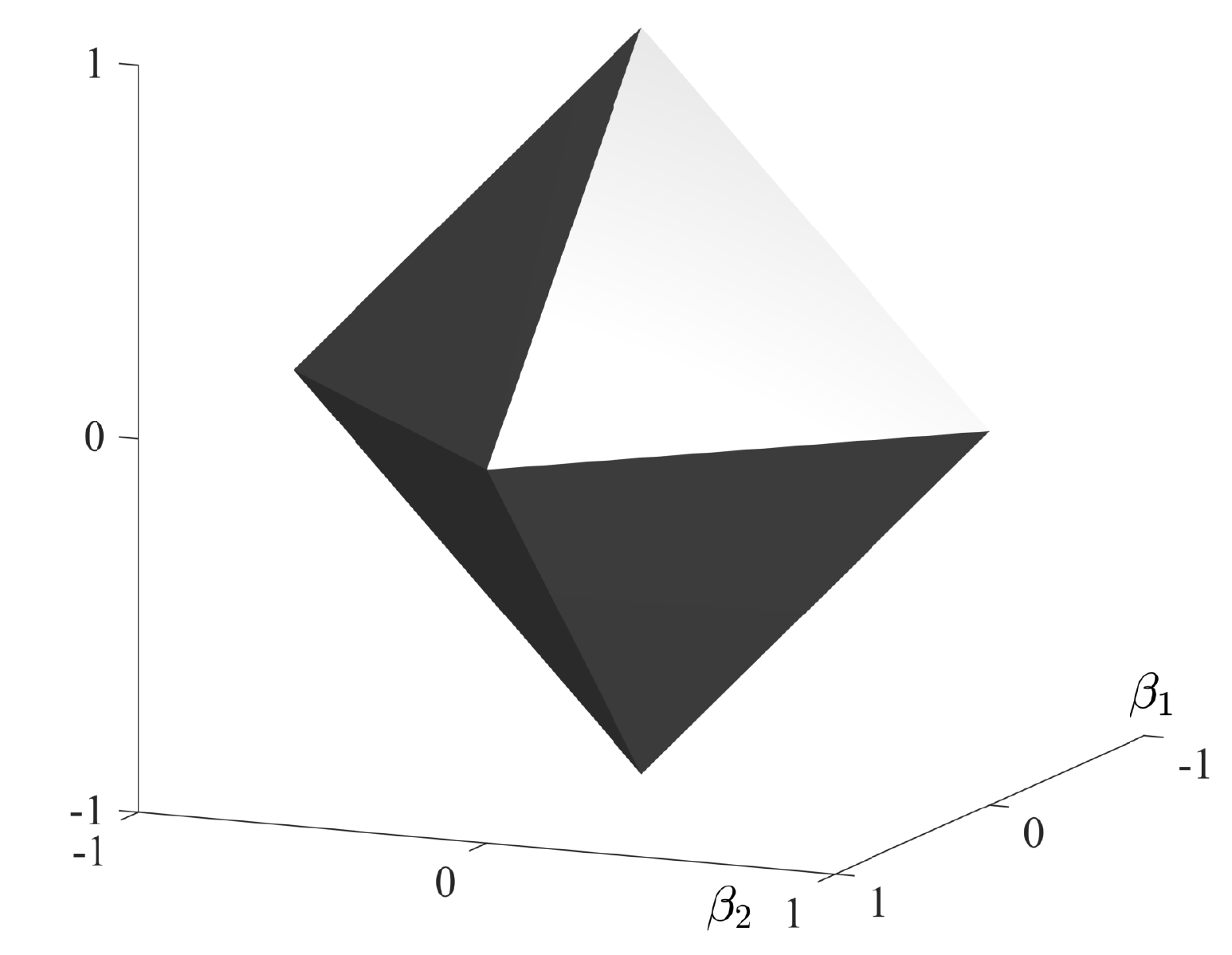}
		\caption{\footnotesize LASSO, $\alpha = 1$} 
	\end{subfigure}
	\begin{subfigure}{0.33\textwidth} 
		\includegraphics[width=\textwidth]{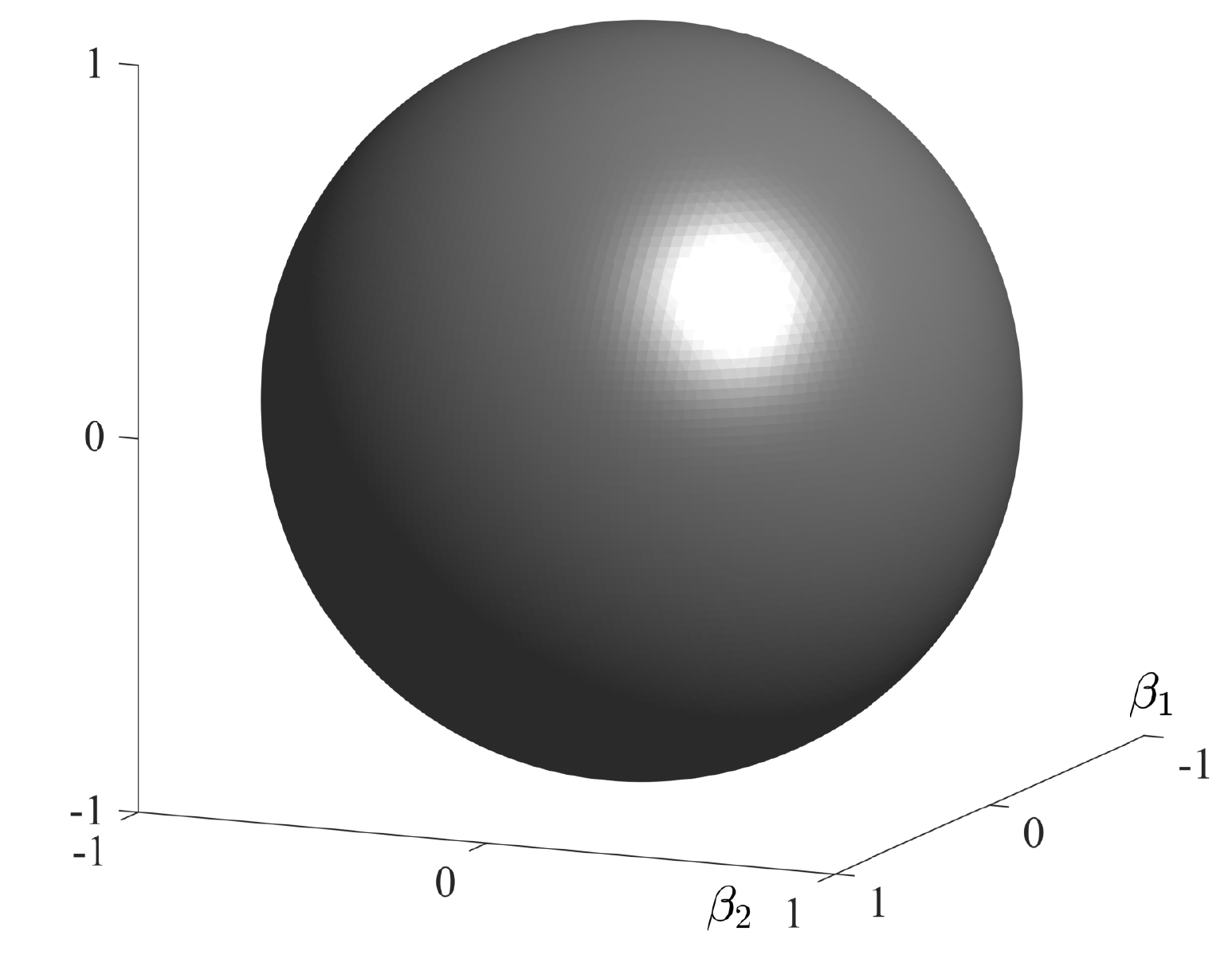}
		\caption{\footnotesize group LASSO with 1 group, $\alpha = 0$} 
	\end{subfigure}
	\begin{subfigure}{0.33\textwidth}
		\includegraphics[width=\textwidth]{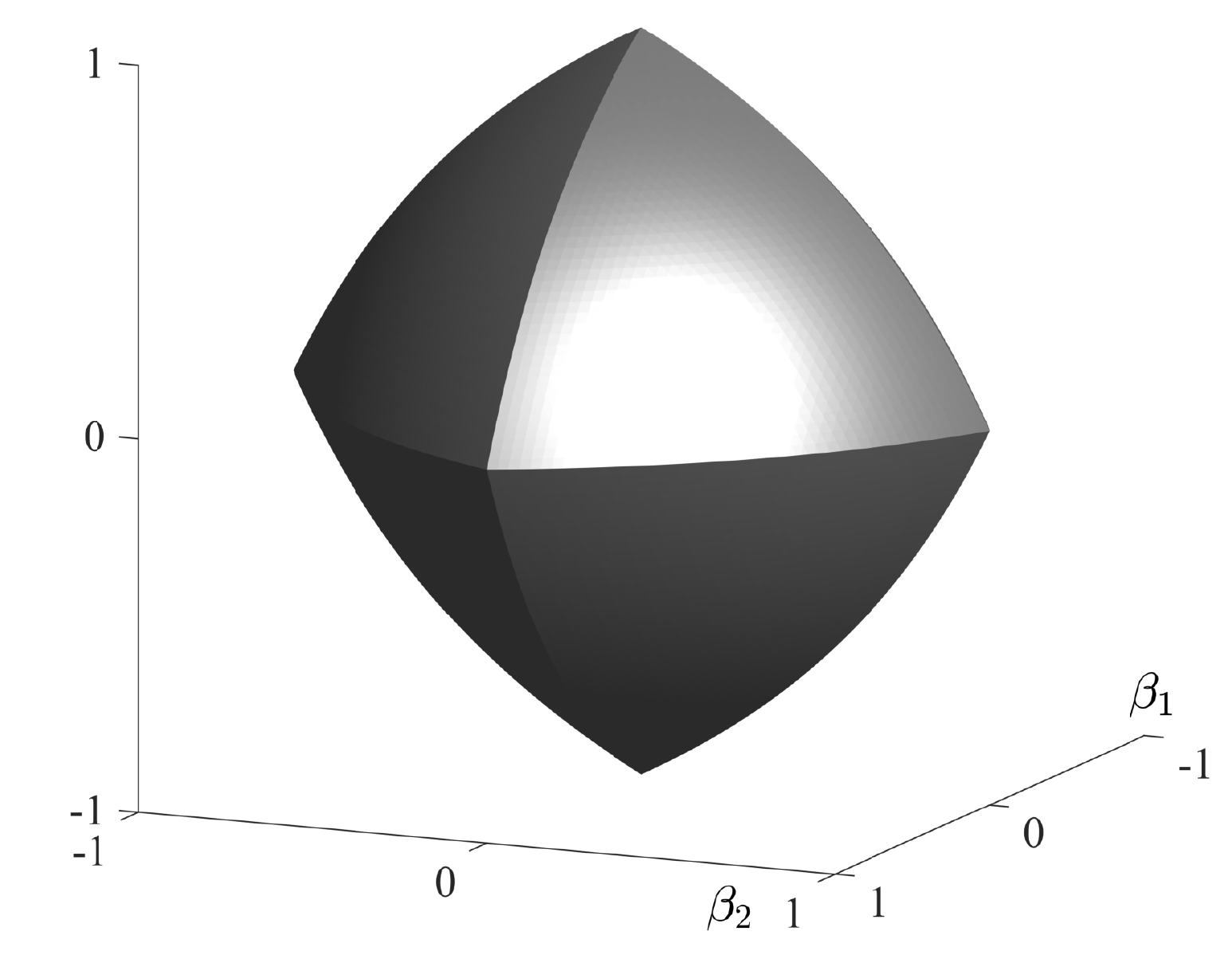}
		\caption{\footnotesize sg-LASSO with 1 group, $\alpha = 0.5$} 
	\end{subfigure}
	\begin{subfigure}{0.33\textwidth}
		\includegraphics[width=\textwidth]{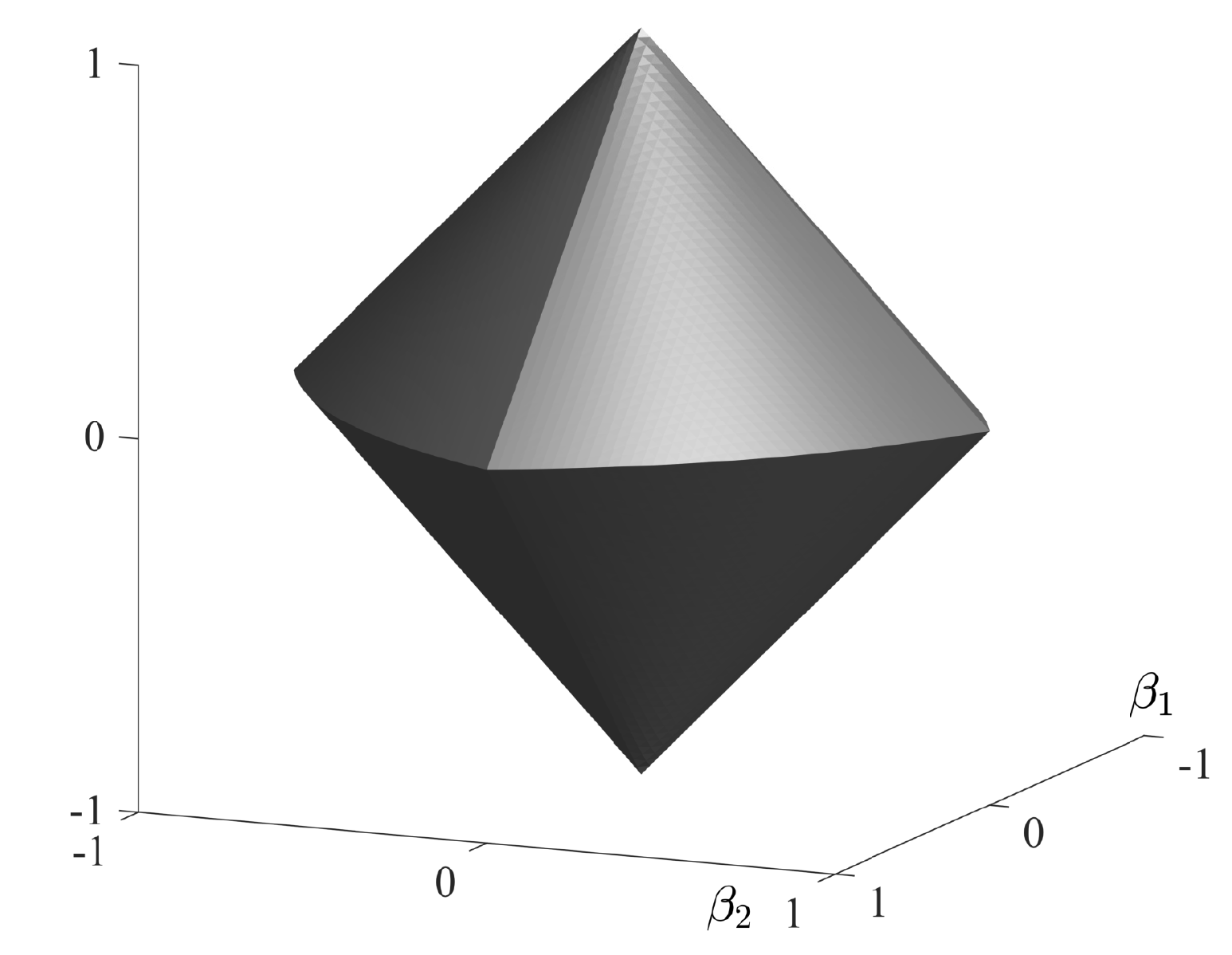}
		\caption{\footnotesize sg-LASSO with 2 groups, $\alpha = 0.5$} 
	\end{subfigure}
	\caption{Geometry of $\{\beta\in\R^2: \Omega(\beta)\leq 1 \}$ for different groupings and values of $\alpha$.} 
	\label{fig:contour_surf}
\end{figure}

To describe the estimation procedure, let
$\mathbf{y}$  = $(y_{1},\dots,y_T)^\top,$ be a vector of dependent variable and let $\mathbf{X}$ = $(\iota, \mathbf{y}_{1},\dots,\mathbf{y}_{J},Z_1W,\dots,Z_KW),$ be a design matrix, where $\iota = (1,1,\dots,1)^\top$ is a vector of ones, $\mathbf{y}_{j} = (y_{1-j},\dots, y_{T-j})^\top$, $Z_k = (x_{k,t-{(j-1)/m}})_{t\in[T],j\in[m]}$ is a $T\times m$ matrix of the covariate $k\in[K]$, and $W=\left(w_l\left((j-1)/m\right)/m\right)_{j\in[m],l\in[L]}$ is an $m\times L$ matrix of weights. In addition, put $\beta$  = $(\beta_0^\top,\beta_1^\top,\dots,\beta_K^\top)^\top$, where $\beta_0 = (\rho_0,\rho_1,\dots,\rho_J)^\top$ is a vector of parameters pertaining to the group consisting of the intercept and the autoregressive coefficients, and $\beta_k\in\R^L$ denotes parameters of the high-frequency lag polynomial pertaining to the covariate $k\geq 1$. Then, the sparse-group LASSO estimator, denoted $\hat\beta$, solves the penalized least-squares problem
\begin{equation}\label{eq:sgl}
	\min_{b\in\R^p}\|\mathbf{y} - \mathbf{X}b\|_T^2 + 2\lambda\Omega(b)
\end{equation}
with a penalty function that interpolates between the $\ell_1$ LASSO penalty and the group LASSO penalty
\begin{equation*}
	\Omega(b) = \alpha|b|_1 + (1-\alpha)\|b\|_{2,1},
\end{equation*}
where $\|b\|_{2,1} = \sum_{G\in\mathcal{G}}|b_G|_2$ is the group LASSO norm and $\mathcal{G}$ is a group structure (partition of $[p]$) specified by the econometrician. Note that estimator in equation~(\ref{eq:sgl}) is defined as a solution to the convex optimization problem and can be computed efficiently, e.g., using an appropriate coordinate descent algorithm; see \cite{simon2013sparse}.

The amount of penalization in equation~(\ref{eq:sgl}) is controlled by the regularization parameter $\lambda>0$ while $\alpha\in[0,1]$ is a weight parameter that determines the relative importance of the sparsity and the group structure. Setting $\alpha=1$, we obtain the LASSO estimator while setting $\alpha = 0$, leads to the group LASSO estimator, which is reminiscent of the elastic net. In Figure~\ref{fig:contour_surf} we illustrate the geometry of the penalty function for different groupings and different values of $\alpha$ covering (a) LASSO with $\alpha$ = 1, (b) group LASSO with one group, $\alpha$ = 0, and two sg-LASSO cases (c) one group and (d) two groups both with $\alpha$ = 0.5. In practice, groups are defined by a particular problem and are specified by the econometrician, while $\alpha$ can be fixed or selected jointly with $\lambda$ in a data-driven way such as using the cross-validation.

\section{High-dimensional time series regressions \label{sec:oracle}}
\subsection{High-dimensional regressions and $\tau$-mixing}
We focus on a generic high-dimensional linear projection model with a countable number of regressors
\begin{equation}\label{eq:hd_ts_model}
	y_t = \sum_{j=0}^\infty x_{t,j}\beta_j + u_t,\qquad\E[u_tx_{t,j}]=0,\quad\forall j\geq 1,\qquad t\in\Z,
\end{equation}
where $x_{t,0}=1$ and $m_t \triangleq \sum_{j=0}^\infty x_{t,j}\beta_j$ is a well-defined random variable. In particular, to ensure that $y_t$ is a well-defined economic quantity, we need $\beta_j\downarrow 0$ sufficiently fast, which is a form of the \textit{approximate sparsity} condition, see \cite{belloni2018high}. This setting nests the high-dimensional ARDL-MIDAS projections described in the previous section and more generally may allow for other high-dimensional time series models. In practice, given a (large) number of covariates, lags  thereof, as well as lags of the dependent variable, denoted $x_t\in\R^p$, we would approximate $m_t$ with $x_t^\top\beta \triangleq \sum_{j=0}^px_{t,j}\beta_j$, where $p<\infty$ and the regression coefficient $\beta\in\R^p$ could be sparse. Importantly, our settings allows for the approximate sparsity as well as other forms of misspecification and the main result of the following section allows for $m_t\ne x_t^\top\beta$.

Using the setting of equation (\ref{eq:sgl}), for a sample $(y_t,x_t)_{t=1}^T$, write
\begin{equation*}
	\mathbf{y} = \mathbf{m} + \mathbf{u},
\end{equation*}
where $\mathbf{y}=(y_1,\dots,y_T)^\top$, $\mathbf{m} = (m_1,\dots, m_T)^\top$, and $\mathbf{u}=(u_1,\dots,u_T)^\top$. The approximation to $\mathbf{m}$ is denoted $\mathbf{X}\beta$, where $\mathbf{X} = (x_1,\dots,x_T)^\top$ is a $T\times p$ matrix of covariates and $\beta=(\beta_1,\dots,\beta_p)^\top$ is a vector of unknown regression coefficients.

We measure the time series dependence with $\tau$-mixing coefficients. For a $\sigma$-algebra $\mathcal{M}$ and a random vector $\xi\in\R^l$, put
\begin{equation*}
	\tau(\mathcal{M},\xi) = \bigg\|\sup_{f\in\mathrm{Lip}_1}\left|\E(f(\xi)|\mathcal{M}) - \E(f(\xi))\right|\bigg\|_1,
\end{equation*}
where $\mathrm{Lip}_1=\left\{f:\R^l\to\R:\;|f(x)-f(y)|\leq |x-y|_1\right\}$ is a set of $1$-Lipschitz functions. Let $(\xi_t)_{t\in\Z}$ be a stochastic process and let $\mathcal{M}_t=\sigma(\xi_t,\xi_{t-1},\dots)$ be its canonical filtration. The \textit{$\tau$-mixing coefficient} of $(\xi_t)_{t\in\Z}$ is defined as
\begin{equation*}
	\tau_k = \sup_{j\geq 1}\frac{1}{j}\sup_{t+k\leq t_1<\dots<t_j}\tau(\mathcal{M}_t,(\xi_{t_1},\dots,\xi_{t_j})),\qquad k\geq 0.
\end{equation*}
If $\tau_k\downarrow0$ as $k\to\infty$, then the process $(\xi_t)_{t\in\Z}$ is called \textit{$\tau$-mixing}. The $\tau$-mixing coefficients were introduced in \cite{dedecker2004coupling} as dependence measures weaker than mixing. Note that the commonly used $\alpha$- and $\beta$-mixing conditions are too restrictive for the linear projection model with an ARDL-MIDAS process. Indeed, a causal linear process with dependent innovations is not necessary $\alpha$-mixing; see also \cite{andrews1984non} for an example of AR(1) process which is not $\alpha$-mixing. Roughly speaking, $\tau$-mixing processes are somewhere between mixingales and $\alpha$-mixing processes and can accommodate such counterexamples. At the same time, sharp Fuk-Nagaev inequalities are available for $\tau$-mixing processes which to the best of our knowledge is not the case for the mixingales or near-epoch dependent processes; see \cite{babiietalinference}.

\cite{dedecker2004coupling,dedecker2005new} discuss how to verify the $\tau$-mixing property for causal Bernoulli shifts with dependent innovations and nonlinear Markov processes. It is also worth comparing the $\tau$-mixing coefficient to other weak dependence coefficients. Suppose that $(\xi_t)_{t\in\Z}$ is a real-valued stationary process and let $\gamma_k = \|\E(\xi_{k}|\mathcal{M}_0) - \E(\xi_{k})\|_1$ be its $L_1$ mixingale coefficient. Then we clearly have $\gamma_k\leq \tau_k$ and it is known that
\begin{equation*}
	|\Cov(\xi_0,\xi_k)| \leq \int_0^{\gamma_k}Q\circ G(u)\dx u \leq \int_0^{\tau_k}Q\circ G(u)\dx u \leq \tau_k^\frac{q-2}{q-1}\|\xi_0\|_q^{q/(q-1)},
\end{equation*}
where $Q$ is the generalized inverse of $x\mapsto \Pr(|\xi_0|>x)$ and $G$ is the generalized inverse of $x\mapsto\int_0^x Q(u)\dx u$; see \cite{babiietalinference}, Lemma A.1.1. Therefore, the $\tau$-mixing coefficient provides a sharp control of autocovariances similarly to the $L_1$ mixingale coefficients, which in turn can be used to ensure that the long-run variance of $(\xi_t)_{t\in\Z}$ exists. The $\tau$-mixing coefficient is also bounded by the $\alpha$-mixing coefficient, denoted $\alpha_k$, as follows
\begin{equation*}
	\tau_k \leq 2\int_0^{2\alpha_k}Q(u)\dx u \leq 2\|\xi_0\|_q(2\alpha_k)^{1/r},
\end{equation*}
where the first inequality follows by \cite{dedecker2004coupling}, Lemma 7 and the second by H\"{o}lder's inequality with $q,r\geq 1$ such that $q^{-1}+r^{-1}=1$. It is worth mentioning that the mixing properties for various time series models in econometrics, including GARCH, stochastic volatility, or autoregressive conditional duration are well-known; see, e.g., \cite{carrasco2002mixing}, \cite{francq2019garch}, \cite{babii2019commercial}; see also \cite{dedecker2007weak} for more examples and a comprehensive comparison of various weak dependence coefficients.

\subsection{Estimation and prediction properties}
In this section, we introduce the main assumptions for the high-dimensional time series regressions and study the estimation and prediction properties of the sg-LASSO estimator covering the LASSO and the group LASSO estimators as special cases. The following assumption imposes some mild restrictions on the stochastic processes in the high-dimensional regression equation~(\ref{eq:hd_ts_model}).
\begin{assumption}[Data]\label{as:data}
	For every $j,k\in[p]$, the processes $(u_tx_{t,j,})_{t\in\Z}$ and $(x_{t,j}x_{t,k})_{t\in\Z}$ are stationary such that (i) $\|u_0\|_{q}<\infty$ and $\max_{j\in[p]}\|x_{0,j}\|_{r}=O(1)$ for some constants $q>2r/(r-2)$ and $r>4$; (ii) the $\tau$-mixing coefficients are $\tau_k \leq ck^{-a}$ and respectively $\tilde\tau_k\leq ck^{-b}$ for all $k\geq 0$ and some $c>0$, $a>(\varsigma-1)/(\varsigma-2)$, $b>(r-2)/(r-4)$, and $\varsigma = qr/(q+r)$.
\end{assumption}
\noindent It is worth mentioning that the stationarity condition is not essential and can be relaxed to the existence of the limiting variance of partial sums at costs of heavier notations and proofs. Condition (i) requires that covariates have at least $4$ finite moments, while the number of moments required for the error process can be as low as $2+\epsilon$, depending on the integrability of covariates. Therefore, (i) may allow for heavy-tailed distributions commonly encountered in financial and economic time series, e.g., asset returns and volatilities. Given the integrability in (i), (ii) requires that the $\tau$-mixing coefficients decrease to zero sufficiently fast; see Online Appendix, Section~\ref{sec:moments_mixing} for moments and $\tau$-mixing coefficients of ARDL-MIDAS. It is known that the $\beta$-mixing coefficients decrease geometrically fast, e.g., for geometrically ergodic Markov chains, in which case (ii) holds for every $a,b>0$. Therefore, (ii) allows for relatively persistent processes.

\smallskip
For the support $S_0$ and the group support $\mathcal{G}_0$ of $\beta$, put
\begin{equation*}
	\Omega_0(b) \triangleq \alpha|b_{S_0}|_1 + (1-\alpha)\sum_{G\in \mathcal{G}_0}|b_{G}|_2\qquad \text{and} \qquad  \Omega_1(b) \triangleq \alpha|b_{S^c_0}|_1 + (1-\alpha)\sum_{G\in \mathcal{G}_0^c}|b_{G}|_2.
\end{equation*}
For some $c_0>0$, define $\mathcal{C}(c_0) \triangleq \left\{\Delta\in\R^p:\; \Omega_1(\Delta)\leq c_0\Omega_0(\Delta) \right\}$. The following assumption generalizes the restricted eigenvalue condition of \cite{bickel2009simultaneous} to the sg-LASSO estimator and is imposed on the population covariance matrix $\Sigma = \E[\mathbf{X}^\top\mathbf{X}/T]$.
\begin{assumption}[Restricted eigenvalue]\label{as:covariance}
	There exists a universal constant $\gamma>0$ such that $\Delta^\top\Sigma\Delta \geq \gamma\sum_{G\in\mathcal{G}_0}|\Delta_{G}|_2^2$ for all $\Delta\in\mathcal{C}(c_0)$, where $c_0 = (c+1)/(c-1)$ for some $c>1$.
\end{assumption}
\noindent Recall that if $\Sigma$ is a positive definite matrix, then for all $\Delta\in\R^p$, we have $\Delta^\top\Sigma\Delta \geq \gamma|\Delta|_2^2$, where $\gamma$ is the smallest eigenvalue of $\Sigma$. Therefore, in this case Assumption~\ref{as:covariance} is trivially satisfied because $|\Delta|_2^2\geq \sum_{G\in\mathcal{G}_0}|\Delta_{G}|_2^2$. The positive definiteness of $\Sigma$ is also known as a completeness condition and Assumption~\ref{as:covariance} can be understood as its weak version; see \cite{babii2017completeness} and references therein. It is worth emphasizing that $\gamma>0$ in Assumption~\ref{as:covariance} is a universal constant independent of $p$, which is the case, e.g., when $\Sigma$ is a Toeplitz matrix or a spiked identity matrix. Alternatively, we could allow for $\gamma\downarrow 0$ as $p\to \infty$, in which case the term $\gamma^{-1}$ would appear in our nonasymptotic bounds slowing down the speed of convergence, and we may interpret $\gamma$ as a measure of ill-posedness in the spirit of econometrics literature on ill-posed inverse problems; see \cite{carrasco2007linear}.

The value of the regularization parameter is determined by the Fuk-Nagaev concentration inequality, appearing in the Online Appendix, see Theorem~\ref{appcor:fn_inequality}.

\begin{assumption}[Regularization]\label{as:tuning}
	For some $\delta\in(0,1)$
	\begin{equation*}
		\lambda \sim\left(\frac{p}{\delta T^{\kappa-1}}\right)^{1/\kappa}\vee\sqrt{\frac{\log(8p/\delta)}{T}},
	\end{equation*}
	where $\kappa = ((a+1)\varsigma-1)/(a+\varsigma-1)$ and $a,\varsigma$ are as in Assumption~\ref{as:data}.
\end{assumption}
\noindent The regularization parameter in Assumption~\ref{as:tuning} is determined by the persistence of the data, quantified by $a$, and the tails, quantified by $\varsigma = qr/(q+r)$. This dependence is reflected in the \textit{dependence-tails exponent} $\kappa$. The following result describes the nonasymptotic prediction and estimation bounds for the sg-LASSO estimator, see Online Appendix~\ref{appsec:proofs} for the proof.

\begin{theorem}\label{thm:estimation_bound}
	Suppose that Assumptions~\ref{as:data}, \ref{as:covariance}, and \ref{as:tuning} are satisfied. Then with probability at least $1 - \delta - O(p^2(T^{1-\mu}s_\alpha^\mu + \exp(-cT/s_\alpha^2)))$
	\begin{equation*}
	\|\mathbf{X}(\hat\beta - \beta)\|^2_T \lesssim s_\alpha\lambda^2 + \|\mathbf{m} - \mathbf{X}\beta\|_T^2
	\end{equation*}
	and
	\begin{equation*}
	\Omega(\hat\beta - \beta) \lesssim s_\alpha\lambda + \lambda^{-1}\|\mathbf{m} - \mathbf{X}\beta\|_T^2 + \sqrt{s_\alpha}\|\mathbf{m} - \mathbf{X}\beta\|_T
	\end{equation*}
	for some $c>0$, where $\sqrt{s_\alpha} = \alpha\sqrt{|S_0|} + (1-\alpha)\sqrt{|\mathcal{G}_0|}$ and $\mu = ((b + 1)r - 2)/(r+2(b-1))$.
\end{theorem}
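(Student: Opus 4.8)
The plan is to follow the standard two-step route for penalized $M$-estimators---deterministic cone analysis plus probabilistic control of the relevant empirical processes---specialized to the sg-LASSO penalty and to the $\tau$-mixing, heavy-tailed environment through the Fuk-Nagaev inequality of Theorem~\ref{appcor:fn_inequality}. Write $\Delta = \hat\beta - \beta$ and $\hat\Sigma = \mathbf{X}^\top\mathbf{X}/T$. Starting from the optimality of $\hat\beta$ in~(\ref{eq:sgl}) and the decomposition $\mathbf{y} = \mathbf{m} + \mathbf{u}$, the basic inequality reads
\[
\|\mathbf{X}\Delta\|_T^2 \leq 2\langle \mathbf{u}, \mathbf{X}\Delta\rangle_T + 2\langle \mathbf{m}-\mathbf{X}\beta, \mathbf{X}\Delta\rangle_T + 2\lambda\big(\Omega(\beta)-\Omega(\hat\beta)\big).
\]
The misspecification term is handled by Cauchy--Schwarz and Young's inequality, $2\langle \mathbf{m}-\mathbf{X}\beta, \mathbf{X}\Delta\rangle_T \leq \tfrac12\|\mathbf{X}\Delta\|_T^2 + 2\|\mathbf{m}-\mathbf{X}\beta\|_T^2$, while the penalty difference is controlled by the sub-additivity bound $\Omega(\beta)-\Omega(\hat\beta) \leq \Omega_0(\Delta) - \Omega_1(\Delta)$, which follows from the triangle inequality applied separately to the $\ell_1$ and group-$\ell_2$ parts of $\Omega$ using that $\beta$ is supported on $(S_0,\mathcal{G}_0)$.

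The first probabilistic input is control of the score $\langle \mathbf{u}, \mathbf{X}\Delta\rangle_T = \langle \mathbf{X}^\top\mathbf{u}/T, \Delta\rangle$. Splitting along the two pieces of $\Omega$ and applying H\"older group-by-group yields $|\langle \mathbf{u}, \mathbf{X}\Delta\rangle_T| \leq \alpha|\mathbf{X}^\top\mathbf{u}/T|_\infty|\Delta|_1 + (1-\alpha)\max_{G\in\mathcal{G}}|(\mathbf{X}^\top\mathbf{u}/T)_G|_2\,\|\Delta\|_{2,1}$. Since the groups have fixed size $L$, both factors are dominated by $|\mathbf{X}^\top\mathbf{u}/T|_\infty = \max_{j}|T^{-1}\sum_t x_{t,j}u_t|$ up to constants, so it suffices to show that on an event of probability at least $1-\delta$ this maximum is $\lesssim \lambda$. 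This is exactly where the Fuk-Nagaev inequality for $\tau$-mixing processes enters: applied to $(u_t x_{t,j})_t$ under Assumption~\ref{as:data}(i)--(ii) and combined with a union bound over $j\in[p]$, it produces a tail with a polynomial part and a sub-Gaussian part whose inversion yields precisely the threshold $\lambda \sim (p/\delta T^{\kappa-1})^{1/\kappa}\vee\sqrt{\log(8p/\delta)/T}$ of Assumption~\ref{as:tuning}, the dependence-tails exponent $\kappa$ being governed by $a$ and $\varsigma$. On this event $|\langle \mathbf{u}, \mathbf{X}\Delta\rangle_T| \leq c^{-1}\lambda\,\Omega(\Delta)$, and substituting into the basic inequality gives, after collecting terms,
\[
\tfrac12\|\mathbf{X}\Delta\|_T^2 + 2\lambda(1-c^{-1})\Omega_1(\Delta) \leq 2\lambda(1+c^{-1})\Omega_0(\Delta) + 2\|\mathbf{m}-\mathbf{X}\beta\|_T^2.
\]
A short case analysis then follows: either $\|\mathbf{m}-\mathbf{X}\beta\|_T^2$ dominates the right-hand side, in which case the prediction bound is immediate, or $\Delta\in\mathcal{C}(c_0)$ with $c_0=(c+1)/(c-1)$, placing us in the regime of Assumption~\ref{as:covariance}.

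The second, and I expect the most delicate, probabilistic input is transferring the restricted eigenvalue condition from the population matrix $\Sigma$ to the sample Gram matrix $\hat\Sigma$. Writing $\|\mathbf{X}\Delta\|_T^2 = \Delta^\top\hat\Sigma\Delta \geq \Delta^\top\Sigma\Delta - |\hat\Sigma-\Sigma|_\infty|\Delta|_1^2$ and using that on $\mathcal{C}(c_0)$ one has, by Cauchy--Schwarz and $S_0\subseteq\bigcup_{G\in\mathcal{G}_0}G$, both $\Omega_0(\Delta)\leq\sqrt{s_\alpha}\big(\sum_{G\in\mathcal{G}_0}|\Delta_G|_2^2\big)^{1/2}$ and $|\Delta|_1\lesssim\sqrt{s_\alpha}\big(\sum_{G\in\mathcal{G}_0}|\Delta_G|_2^2\big)^{1/2}$, it suffices to make $|\hat\Sigma-\Sigma|_\infty\,s_\alpha$ smaller than a fixed fraction of $\gamma$. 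Applying the Fuk-Nagaev inequality to the products $(x_{t,j}x_{t,k})_t$ under Assumption~\ref{as:data} at a threshold of order $1/s_\alpha$, and taking a union bound over the $p(p+1)/2$ entries of $\mathrm{vech}(\hat\Sigma-\Sigma)$, yields the failure probability $O\big(p^2(T^{1-\mu}s_\alpha^\mu + \exp(-cT/s_\alpha^2))\big)$, in which the polynomial and sub-Gaussian parts are exactly the two Fuk-Nagaev terms evaluated at that threshold and $\mu$ is the dependence-tails exponent of the product process, determined by $b$ and $r$. On the intersection of the two events, $\|\mathbf{X}\Delta\|_T^2\gtrsim\gamma\sum_{G\in\mathcal{G}_0}|\Delta_G|_2^2$, which together with the cone bound on $\Omega_0(\Delta)$ closes the loop and gives the prediction bound $\|\mathbf{X}\Delta\|_T^2\lesssim s_\alpha\lambda^2 + \|\mathbf{m}-\mathbf{X}\beta\|_T^2$.

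Finally, the estimation bound follows by reusing the cone geometry: on $\mathcal{C}(c_0)$, $\Omega(\Delta)\leq(1+c_0)\Omega_0(\Delta)\lesssim\sqrt{s_\alpha}\big(\sum_{G\in\mathcal{G}_0}|\Delta_G|_2^2\big)^{1/2}\lesssim\sqrt{s_\alpha}\,\gamma^{-1/2}\|\mathbf{X}\Delta\|_T$ by the restricted eigenvalue bound, and substituting the prediction bound (tracking the misspecification term through Young's inequality once more) produces the three terms $s_\alpha\lambda + \lambda^{-1}\|\mathbf{m}-\mathbf{X}\beta\|_T^2 + \sqrt{s_\alpha}\|\mathbf{m}-\mathbf{X}\beta\|_T$. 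The deterministic sg-LASSO algebra is routine; the main obstacle is the two Fuk-Nagaev steps---especially the restricted eigenvalue transfer---where the heavy-tail ($r>4$, $q>2r/(r-2)$) and persistence ($a,b$) assumptions must be matched precisely to the thresholds $\lambda$ and $1/s_\alpha$ so that the exponents $\kappa$ and $\mu$ and the stated probability emerge as claimed.
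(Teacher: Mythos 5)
Your architecture coincides with the paper's proof: the same basic inequality plus penalty decomposition $\Omega(\beta)-\Omega(\hat\beta)\leq\Omega_0(\Delta)-\Omega_1(\Delta)$, the same control of the score $|\mathbf{X}^\top\mathbf{u}/T|_\infty$ by the Fuk--Nagaev inequality with a union bound over $[p]$ (the paper packages the group-by-group H\"older step as a dual-norm lemma, Lemma~\ref{lemma:dual_norm}), the same cone/case analysis driven by the size of $\|\mathbf{m}-\mathbf{X}\beta\|_T$, and the same transfer of the restricted eigenvalue condition from $\Sigma$ to $\hat\Sigma$ via Fuk--Nagaev applied to $(x_{t,j}x_{t,k})_t$ at a threshold of order $1/s_\alpha$ with a union bound over the $p(p+1)/2$ entries of $\mathrm{vech}(\hat\Sigma-\Sigma)$, followed by the self-absorption trick; your identification of where $\kappa$ and $\mu$ come from is exactly right. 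Your two deviations --- deriving the basic inequality by comparing objective values rather than by Fermat's rule, and applying Young's inequality to the misspecification term up front instead of keeping $\|\mathbf{X}\Delta\|_T\|\mathbf{m}-\mathbf{X}\beta\|_T$ and splitting on $\|\mathbf{m}-\mathbf{X}\beta\|_T\leq 2^{-1}\|\mathbf{X}\Delta\|_T$ --- are cosmetic, affecting only constants (though you should check that the cone constant your case split produces is the one for which Assumption~\ref{as:covariance} is actually postulated).

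The genuine gap is in the estimation bound. On the cone, substituting the prediction bound into $\Omega(\Delta)\lesssim\sqrt{s_\alpha/\gamma}\,\|\mathbf{X}\Delta\|_T$ yields only the two terms
\[
\Omega(\Delta)\lesssim s_\alpha\lambda + \sqrt{s_\alpha}\|\mathbf{m}-\mathbf{X}\beta\|_T,
\]
and no application of Young's inequality can produce the middle term $\lambda^{-1}\|\mathbf{m}-\mathbf{X}\beta\|_T^2$ from these (nor is it needed there: by Young, the two-term bound is the stronger one). That middle term is required precisely in the complementary, misspecification-dominated regime where $\Delta$ falls \emph{outside} the cone. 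In that regime your plan delivers the prediction bound but says nothing about $\Omega(\Delta)$: the restricted eigenvalue condition is unavailable off the cone, so a small $\|\mathbf{X}\Delta\|_T$ gives no control of $\Omega(\Delta)$ whatsoever. The paper closes this case with a separate, purely deterministic argument based on a case split at $\mathcal{C}(2c_0)$ rather than $\mathcal{C}(c_0)$: if $\Omega_1(\Delta)>2c_0\Omega_0(\Delta)$, then the basic inequality (whose quadratic left side is nonnegative) combined with $\|\mathbf{m}-\mathbf{X}\beta\|_T>2^{-1}\|\mathbf{X}\Delta\|_T$ gives
\[
\Omega_1(\Delta)\;\leq\; c_0\Omega_0(\Delta) + \frac{2c}{\lambda(c-1)}\|\mathbf{m}-\mathbf{X}\beta\|_T^2\;\leq\;\tfrac12\Omega_1(\Delta)+\frac{2c}{\lambda(c-1)}\|\mathbf{m}-\mathbf{X}\beta\|_T^2,
\]
whence $\Omega(\Delta)\leq(1+(2c_0)^{-1})\Omega_1(\Delta)\lesssim\lambda^{-1}\|\mathbf{m}-\mathbf{X}\beta\|_T^2$. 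Note that the factor $2$ in the split is essential: it is what lets $c_0\Omega_0(\Delta)$ be absorbed into $\tfrac12\Omega_1(\Delta)$; splitting at $\mathcal{C}(c_0)$ would make this absorption vacuous. Your proposal has no counterpart to this step, so as written the bound on $\Omega(\hat\beta-\beta)$ is unproven exactly in the regime that generates the $\lambda^{-1}\|\mathbf{m}-\mathbf{X}\beta\|_T^2$ term.
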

Theorem~\ref{thm:estimation_bound} provides nonasymptotic guarantees for the estimation and prediction with the sg-LASSO estimator reflecting potential misspecification. In the special case of the LASSO estimator ($\alpha=1$), we obtain the counterpart to the result of \cite{belloni2012sparse} for the LASSO estimator with i.i.d.\ data taking into account that we may have $m_t\ne x_t^\top\beta$. At another extreme, when $\alpha=0$, we obtain the nonasymptotic bounds for the group LASSO allowing for misspecification which to the best of our knowledge are new, cf. \cite{negahban2012unified} and \cite{van2016estimation}. We call $s_\alpha$ the \textit{effective sparsity constant}. This constant reflects the benefits of the sparse-group structure for the sg-LASSO estimator that can not be deduced from the results currently available for the LASSO or the group LASSO.

\begin{remark}
	Since the $\ell_1$-norm is equivalent to the $\Omega$-norm whenever groups have fixed size, we deduce from Theorem~\ref{thm:estimation_bound} that
	\begin{equation*}
		|\hat\beta - \beta|_1 \lesssim s_\alpha\lambda + \lambda^{-1}\|\mathbf{m} - \mathbf{X}\beta\|_T^2 + \sqrt{s_\alpha}\|\mathbf{m} - \mathbf{X}\beta\|_T.
	\end{equation*}
\end{remark}

\smallskip

\noindent Next, we consider the asymptotic regime, in which the misspecification error vanishes when the sample size increases as described in the following assumption.
\begin{assumption}\label{as:rates}
	(i) $\|\mathbf{m} - \mathbf{X}\beta\|_T^2 = O_P\left(s_\alpha\lambda^2\right)$; and (ii) $p^2T^{1-\mu}s_\alpha^{\mu}\to 0$ and $p^2\exp(-cT/s_\alpha^2)\to 0$.
\end{assumption}

\noindent The following corollary is an immediate consequence of Theorem~\ref{thm:estimation_bound}.
\begin{corollary}\label{cor:rates_iid}
	Suppose that Assumptions~\ref{as:data}, \ref{as:covariance}, \ref{as:tuning}, and \ref{as:rates} hold. Then
	\begin{equation*}
	\|\mathbf{X}(\hat\beta - \beta)\|_T^2 = O_P\left(\frac{s_\alpha p^{2/\kappa}}{T^{2-2/\kappa}}\vee\frac{s_\alpha\log p}{T}\right)
	\end{equation*}
	and
	\begin{equation*}
		|\hat{\beta} - \beta|_1 = O_P\left(\frac{s_\alpha p^{1/\kappa}}{T^{1-1/\kappa}}\vee s_\alpha\sqrt{\frac{\log p}{T}}\right).
	\end{equation*}
\end{corollary}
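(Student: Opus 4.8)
The plan is to obtain Corollary~\ref{cor:rates_iid} directly from Theorem~\ref{thm:estimation_bound} in three moves: use Assumption~\ref{as:rates}(i) to absorb the misspecification terms into the leading $s_\alpha\lambda^2$ (resp.\ $s_\alpha\lambda$) contribution, substitute the explicit order of $\lambda$ from Assumption~\ref{as:tuning}, and finally use Assumption~\ref{as:rates}(ii) together with a choice of the confidence level $\delta$ to upgrade the high-probability nonasymptotic bounds of the theorem into the $O_P$ rate statements.

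First I would treat the prediction bound. Starting from
\[
\|\mathbf{X}(\hat\beta - \beta)\|^2_T \lesssim s_\alpha\lambda^2 + \|\mathbf{m} - \mathbf{X}\beta\|_T^2,
\]
Assumption~\ref{as:rates}(i), namely $\|\mathbf{m}-\mathbf{X}\beta\|_T^2 = O_P(s_\alpha\lambda^2)$, immediately gives $\|\mathbf{X}(\hat\beta-\beta)\|_T^2 = O_P(s_\alpha\lambda^2)$. For the $\ell_1$ bound I would invoke the Remark following the theorem to pass from $\Omega(\hat\beta-\beta)$ to $|\hat\beta-\beta|_1$, obtaining
\[
|\hat\beta - \beta|_1 \lesssim s_\alpha\lambda + \lambda^{-1}\|\mathbf{m}-\mathbf{X}\beta\|_T^2 + \sqrt{s_\alpha}\|\mathbf{m}-\mathbf{X}\beta\|_T.
\]
Assumption~\ref{as:rates}(i) yields $\lambda^{-1}\|\mathbf{m}-\mathbf{X}\beta\|_T^2 = O_P(s_\alpha\lambda)$ and $\sqrt{s_\alpha}\|\mathbf{m}-\mathbf{X}\beta\|_T = O_P(s_\alpha\lambda)$, so both misspecification terms are dominated by $s_\alpha\lambda$ and $|\hat\beta-\beta|_1 = O_P(s_\alpha\lambda)$.

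Next I would substitute the order of $\lambda$. Squaring the expression in Assumption~\ref{as:tuning} and treating $\delta$ as a fixed constant gives $\lambda^2 \sim p^{2/\kappa}/T^{2-2/\kappa} \vee \log p/T$, since $(\kappa-1)/\kappa = 1 - 1/\kappa$; multiplying by $s_\alpha$ reproduces the stated prediction rate, while multiplying the unsquared $\lambda \sim p^{1/\kappa}/T^{1-1/\kappa}\vee\sqrt{\log p/T}$ by $s_\alpha$ reproduces the stated $\ell_1$ rate.

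The one point requiring care — and the only genuine obstacle in an otherwise immediate corollary — is that $\lambda$ itself depends on the confidence level $\delta$ through the $\delta^{-1/\kappa}$ and $\log(8p/\delta)$ factors, so the passage from the theorem's ``with probability at least $1-\delta - O(\cdots)$'' statement to an $O_P$ rate is not fully automatic. The hard part is thus the $\delta$-bookkeeping, which I would handle in the standard way: given $\epsilon>0$, fix $\delta = \epsilon/2$; the $\delta$-dependence of $\lambda$ is then absorbed into the (now $\epsilon$-dependent) constant hidden in the $O_P$ rate, while Assumption~\ref{as:rates}(ii) guarantees that the residual probability deficit $O(p^2(T^{1-\mu}s_\alpha^\mu + \exp(-cT/s_\alpha^2)))$ is $o(1)$. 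Consequently the event on which the theorem's bounds hold has probability at least $1 - \epsilon/2 - o(1) \geq 1-\epsilon$ for all $T$ large enough, and choosing the implied constant large enough delivers both $O_P$ statements.
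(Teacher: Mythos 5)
Your proposal is correct and follows exactly the route the paper intends: the paper offers no separate proof, calling Corollary~\ref{cor:rates_iid} an immediate consequence of Theorem~\ref{thm:estimation_bound}, and your three steps (absorbing the misspecification terms via Assumption~\ref{as:rates}(i), substituting the order of $\lambda$ from Assumption~\ref{as:tuning} with $\delta$ fixed, and using Assumption~\ref{as:rates}(ii) plus the choice $\delta = \epsilon/2$ to convert the nonasymptotic bound into an $O_P$ statement) are precisely the intended argument. Your explicit treatment of the $\delta$-bookkeeping, including noting that the $\delta$-dependence of $\lambda$ is absorbed into the $\epsilon$-dependent constant, is if anything slightly more careful than the paper's own presentation.
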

\noindent If the effective sparsity constant $s_\alpha$ is fixed, then $p=o(T^{\kappa - 1})$ is a sufficient condition for the prediction and estimation errors to vanish, whenever $\mu\geq 2\kappa-1$. In this case Assumption~\ref{as:rates} (ii) is vacuous. More generally, $s_\alpha$ is allowed to increase slowly with the sample size. Convergence rates in Corollary~\ref{cor:rates_iid} quantify the effect of tails and persistence of the data on the prediction and estimation accuracies of the sg-LASSO estimator. In particular, lighter tails and less persistence allow us to handle a larger number of covariates $p$ compared to the sample size $T$. In particular $p$ can increase faster than $T$, provided that $\kappa>2$.

\begin{remark}
	In the special case of the LASSO estimator with i.i.d.\ data, Corollary 4 of \cite{fuk1971probability} leads to the convergence rate of order $O_P\left(\frac{p^{1/\varsigma}}{T^{1-1/\varsigma}}\vee \sqrt{\frac{\log p}{T}}\right)$. If the $\tau$-mixing coefficients decrease geometrically fast (e.g., stationary AR(p)), then $\kappa\approx\varsigma$ for a sufficiently large value of the dependence exponent $a$, in which case the convergence rates in Corollary~\ref{cor:rates_iid} are close to the i.i.d.\ case. In this sense these rates depend sharply on the tails exponent $\varsigma$, and we can conclude that for geometrically decreasing $\tau$-mixing coefficients, the persistence of the data should not affect the convergence rates of the LASSO.
\end{remark}

\begin{remark}\label{remark_chernozhukov}
	In the special case of the LASSO estimator, if $(u_t)_{t\in\Z}$ and $(x_t)_{t\in\Z}$ are causal Bernoulli shifts with independent innovations and at least $q=r\geq8$ finite moments, one can deduce from \cite{chernozhukov2019lasso}, Lemma 5.1 and Corollary 5.1, the convergence rate of order $O_P\left(\frac{(p\omega_T)^{1/\varsigma}}{T^{1-1/\varsigma}}\vee \sqrt{\frac{\log p}{T}}\right)$, where $\omega_T=1$ (weakly dependent case) or $\omega_T = T^{\varsigma/2-1-a\varsigma}\uparrow\infty$ (strongly dependent case), provided that the physical dependence coefficients are of size $O(k^{-a})$. Note that for causal Bernoulli shifts with independent innovations, the physical dependence coefficients are not directly comparable to $\tau$-mixing coefficients; see \cite{dedecker2007weak}, Remark 3.1 on p.32.
\end{remark}

\section{Monte Carlo experiments \label{sec:mc}}
We assess via simulations the out-of-sample predictive performance (forecasting and nowcasting), and the MIDAS weights recovery of the sg-LASSO with dictionaries. We benchmark the performance of our novel sg-LASSO setup against two alternatives: (a) unstructured, meaning standard, LASSO with MIDAS, and (b) unstructured LASSO with the unrestricted lag polynomial. The former allows us to assess the benefits of exploiting group structures, whereas the latter focuses on the advantages of using dictionaries in a high-dimensional setting.
\smallskip

\subsection{Simulation Design \label{appsec:mcdesign}}

To assess the predictive performance and the MIDAS weight recovery, we simulate the data from the following DGP:
\begin{equation*}
y_t =  \rho_1y_{t-1} + \rho_2y_{t-2} + \sum_{k=1}^K\frac{1}{m}\sum_{j=1}^m \omega((j-1)/m;\beta_k)x_{t-(j-1)/m,k} + u_t,
\end{equation*} 
where $u_t\sim_{i.i.d.}N(0,\sigma_u^2)$ and the DGP for covariates $\{x_{k,t-(j-1)/m}:j\in[m],k\in[K]\}$ is specified below. This corresponds to a target of interest $y_t$ driven by two autoregressive lags augmented with high frequency series, hence, the DGP is an ARDL-MIDAS model. We set $\sigma^2_u=1$, $\rho_1=0.3$, $\rho_2=0.01$, and take the number of relevant high frequency regressors $K$ = 3. In some scenarios we also decrease the signal-to-noise ratio by setting $\sigma^2_u$ = 5. We are interested in quarterly/monthly data, and use four quarters of data for the high frequency regressors so that $m$ = 12. We rely on a commonly used weighting scheme in the MIDAS literature, namely $\omega(s;\beta_k)$ for $k$ = 1, 2 and 3 are determined by beta densities respectively equal to $\mathrm{Beta}(1,3),\mathrm{Beta}(2,3)$, and $\mathrm{Beta}(2,2)$; see \cite{ghysels2007midas} or \cite{ghysels2019estimating}, for further details. 

\smallskip

\noindent The high frequency regressors are generated as either one of the following:
\begin{enumerate}
	\item  $K$ i.i.d.\ realizations of the univariate autoregressive (AR) process $x_h = \rho x_{h-1}+ \varepsilon_h,$ where $\rho=0.2$ or $\rho=0.7$ and either $\varepsilon_h\sim_{i.i.d.}N(0,\sigma_\varepsilon^2)$,  $\sigma_\varepsilon^2=1$, or $\varepsilon_h\sim_{i.i.d.}\text{student-}t(5)$, where $h$ denotes the high-frequency sampling.
	\item Multivariate vector autoregressive (VAR) process $X_h =  \Phi X_{h-1} + \mathbf{\varepsilon}_h,$ where $\varepsilon_{h}\sim_{i.i.d.} N(0,I_K)$ and $\Phi$ is a block diagonal matrix described below.
\end{enumerate}
For the AR simulation design, we initiate the processes as $x_0\sim N\left(0,\sigma^2/(1-\rho^2)\right)$ and $y_0\sim N\left(0,\sigma^2(1-\rho_2)/((1+\rho_2)((1-\rho_2)^2-\rho_1^2))\right).$
For the VAR, the initial value of $(y_t)$ is the same, while $X_0 \sim N(0,I_K)$. In all cases, the first 200 observations are treated as burn-in. In the estimation procedure, we add 7 noisy covariates which are generated in the same way as the relevant covariates and use 5 low-frequency lags. The empirical models use a dictionary which consists of Legendre polynomials up to degree $L = 10$ shifted to the $[0,1]$ interval with the MIDAS weight function approximated as in equation (\ref{eq:dicleg}).
The sample size is $T\in\{50, 100, 200\},$ and for all the experiments we use 5000 simulation replications. 

\smallskip

We assess the performance of different methods by modifying the assumptions on the error terms of the high-frequency process $\mathbf{\varepsilon}_h$, considering multivariate high-frequency processes, changing the degree of Legendre polynomials $L$, increasing the noise level of the low-frequency process $\sigma^2_u$, using only half of the high-frequency lags in predictive regressions, and adding a larger number of noisy covariates. In the case of VAR high-frequency process, we set $\Phi$ to be block-diagonal with the first $5\times 5$ block having entries $0.15$ and the remaining $5\times 5$ block(s) having entries $0.075$.

\smallskip

We estimate three different LASSO-type regression models. In the first model, we keep the weighting function unconstrained, and therefore we estimate 12 coefficients per high-frequency covariate using the unstructured LASSO estimator. We denote this model LASSO-U-MIDAS (inspired by the U-MIDAS of \cite{foroni2015unrestricted}). In the second model we use MIDAS weights together with the unstructured LASSO estimator; we call this model LASSO-MIDAS. In this case, we estimate $L+1$ number of coefficients per high-frequency covariate. The third model applies the sg-LASSO estimator together with MIDAS weights. Groups are defined as in Section \ref{sec:midas}; each low-frequency lag and high-frequency covariate is a group, therefore, we have $K+5$ groups. We select the value of tuning parameters $\lambda$ and $\alpha$ using the 5-fold cross-validation, defining folds as adjacent blocks over the time dimension to take into account the time series dependence. This model is denoted sg-LASSO-MIDAS. 

\smallskip

For regressions with aggregated data, we consider:
(a) Flow aggregation (FLOW): $x_{k,t}^A$ = $1/m\sum_{j=1}^mx_{k,t-(j-1)/m}$, (b) Stock aggregation (STOCK): $x_{k,t}^A$ = $x_{k,t}$, and (c) Middle high-frequency lag (MIDDLE): single middle value of the high-frequency lag with ties solved in favor of the most recent observation (i.e., we take a single $6^{\rm th}$ lag if $m=12$). In these cases, the models are estimated using the OLS estimator, which is unfeasible when the number of covariates becomes equal to the sample size and we leave results blank in this case.

\subsection{Simulation results}

Detailed results are reported in the Online Appendix.
Tables \ref{appendix:tab:forecast}--\ref{appendix:tab:nowcast}, cover the average mean squared forecast errors for one-step-ahead forecasts and nowcasts. The sg-LASSO with MIDAS weighting (sg-LASSO-MIDAS) outperforms all other methods in all simulation scenarios. Importantly, both sg-LASSO-MIDAS and unstructured LASSO-MIDAS with nonlinear weight function approximations perform much better than all other methods when the sample size is small ($T=50$). In this case, sg-LASSO-MIDAS yields the largest improvements over alternatives, in particular, with a large number of noisy covariates (bottom-right block). These findings are robust to increases in the persistence parameter of covariates $\rho$ from 0.2 to 0.7. The LASSO without MIDAS weighting has typically large forecast errors. Comparing across simulation scenarios, all methods seem to perform worse with heavy-tailed or persistent covariates. In these cases, however, the impact on the sg-LASSO-MIDAS method is lesser compared to the other methods. This simulation evidence supports our theoretical results and findings in the empirical application. Lastly, forecasts using flow-aggregated covariates seem to perform better than other simple aggregation methods in all simulation scenarios, but significantly worse than the sg-LASSO-MIDAS.

In Table \ref{appendix:tab:shape_rec1}--\ref{appendix:tab:shape_rec2} we report additional results for the estimation accuracy of the weight functions. In Figure \ref{appendix:fig:weights_beta_1}--\ref{appendix:fig:weights_beta_3}, we plot the estimated weight functions from several methods. The results indicate that the LASSO without MIDAS weighting can not accurately recover the weights in small samples and/or low signal-to-noise ratio scenarios. Using Legendre polynomials improves the performance substantially and the sg-LASSO seems to improve even more over the unstructured LASSO.

\section{Nowcasting US GDP with macro, financial and textual news data \label{sec:empirical}}
We nowcast US GDP with macroeconomic, financial, and textual news data.  Details regarding the data sources appear in the Online Appendix Section \ref{appendix:detailed_description}. Regarding the macro data, we rely on 34 series used in the Federal Reserve Bank of New York nowcast model, discarding two series ("PPI: Final demand" and "Merchant wholesalers: Inventories") due to very short samples; see \cite{bok2018macroeconomic} for more details regarding this data.

For all macro data, we use real-time vintages, which effectively means that we take all macro series with a delay. For example, if we nowcast the first quarter of GDP one month before the quarter ends, we use data up to the end of February, and therefore all macro series with a delay of one month that enter the model are available up to the end of January. We use Legendre polynomials of degree three for all macro covariates to aggregate twelve lags of monthly macro data. In particular, let $x_{t+(h+1-j)/m,k}$ be $k^{{\text{th}}}$ covariate at quarter $t$ with $m=3$ months per quarter and $h=2-1=1$ months into the quarter (2 months into the quarter minus 1 month due to publication delay), where $j = 1,2,\dots,12$ is the monthly lag. We then collect all lags in a vector
\begin{equation*}
	X_{t,k} = (x_{t+1/3,k}, x_{t+0/3,k},\dots,x_{t-10/3,k})^\top
\end{equation*}
and aggregate $X_{t,k}$ using a dictionary $W$ consisting of Legendre polynomials, so that $X_{t,k}W$ defines as a single group for the sg-LASSO estimator.

In addition to macro and financial data, we also use the textual analysis data. We take 76 news attention series from \cite{bybee2019structure} and use Legendre polynomials of degree two to aggregate three monthly lags of each news attention series. Note that the news attention series are used without a publication delay, that is, for the one-month horizon, we take the series up to the end of the second month. Moreover, the \cite{bybee2019structure} news topic models involve rolling samples, avoiding look ahead biases when used in our nowcasts.\label{newsrealtime}

\smallskip

We compute the predictions using a rolling window scheme. The first nowcast is for 2002 Q1, for which we use fifteen years (sixty quarters) of data, and the prediction is computed using 2002 January (2-month horizon) February (1-month), and March (end of the quarter) data. We calculate predictions until the sample is exhausted, which is 2017 Q2, the last date for which news attention data is available. As indicated above, we report results for the 2-month, 1-month, and the end-of-quarter horizons. Our target variable is the first release, i.e., the advance estimate of real GDP growth. We tune sg-LASSO-MIDAS regularization parameters $\lambda$ and $\alpha$ using 5-fold cross-validation, defining folds as adjacent blocks over the time dimension to take into account the time series nature of the data. Finally, we follow the literature on nowcasting real GDP and define our target variable to be the annualized growth rate.

\smallskip

Let $x_{t,k}$ be the $k$-th high-frequency covariate at time \(t\). The general ARDL-MIDAS predictive regression is
\begin{equation*}
\phi(L)y_{t+1} = \mu + \sum_{k=1}^K\psi(L^{1/m}; \beta_k)x_{t,k} + u_{t+1},\qquad t=1,\dots,T,
\end{equation*}
where \(\phi(L)\) is the low-frequency lag polynomial, \(\mu\) is the regression intercept, and $\psi(L^{1/m};\beta_k)x_{tk},k=1,\dots,K$ are lags of high-frequency covariates. Following Section \ref{sec:midas}, the high-frequency lag polynomial is defined as
\begin{equation*}
	\psi(L^{1/m};\beta_k)x_{t,k} = \frac{1}{mq_k}\sum_{j=1}^{mq_k}\omega((j-1)/mq_k;\beta_k)x_{t+(h_k+1-j)/m,k},
\end{equation*}
where for $k^{\rm th}$ covariate, \(h_k\) indicates the number of leading months of available data in the quarter \(t\), $q_k$ is the number of quarters of covariate lags, and we approximate the weight function $\omega$ with the Legendre polynomial. For example, if $h_k=1$ and $q_k=4$, then we have $1$ month of data into a quarter and use $q_km=12$ monthly lags for a covariate $k$.

\smallskip

We benchmark our predictions against the simple AR(1) model, which is considered to be a reasonable starting point for short-term GDP growth predictions. We focus on predictions of our method, sg-LASSO-MIDAS, with and without financial data combined with series based on the textual analysis. One natural comparison is with the publicly available Federal Reserve Bank of New York, denoted NY Fed, model implied nowcasts. 
\begin{table}[h]
	{\small
	\centering
	\begin{tabular}{r |ccc }
		& Rel-RMSE & DM-stat-1 & DM-stat-2 \\ 		
		&\multicolumn{3}{c}{2-month horizon}\\ 
		AR(1)                    & 2.056       & 0.612 & 2.985   \\ 
		sg-LASSO-MIDAS & 0.739       &-2.481 &            \\ 
		NY Fed                 & 0.946       &           & 2.481  \\
		&\multicolumn{3}{c}{1-month horizon}\\ 
		AR(1)                    & 2.056       & 2.025 & 2.556 \\ 
		sg-LASSO-MIDAS & 0.725       &-0.818 &           \\ 
		NY Fed                 & 0.805       &           & 0.818  \\			 							
		&\multicolumn{3}{c}{End-of-quarter}\\
		AR(1)                    & 2.056       & 2.992 & 3.000 \\ 
		sg-LASSO-MIDAS & 0.701        &-0.077 &  		  \\ 
		NY Fed                 & 0.708       &            &0.077  \\			
		&\multicolumn{3}{c}{p-value of aSPA test}\\
		&&&0.046\\		 											 
		\hline
	\end{tabular}
	\caption{\footnotesize Nowcast comparisons for models with macro and survey data only -- Nowcast horizons are 2- and 1-month ahead, as well as the end of the quarter. Column {\it Rel-RMSE} reports root mean squared forecasts error relative to the AR(1) model. Column {\it DM-stat-1} reports \cite{diebold1995comparing} test statistic of all models relative to NY Fed nowcasts, while column {\it DM-stat-2} reports the Diebold Mariano test statistic relative to sg-LASSO-MIDAS model. The last row reports the p-value of the average Superior Predictive Ability (aSPA) test, see \cite{quaedvlieg2019multi}, over the three horizons of sg-LASSO-MIDAS model compared to the NY Fed nowcasts. Out-of-sample period: 2002 Q1 to 2017 Q2. 	\label{tab:nowcasts_nyfed_data}}} 
\end{table}
We adopt the following strategy. First, we focus on the same series that are used to calculate the NY Fed nowcasts. The purpose here is to compare {\it models} since the data inputs are the same. This means that we compare the performance of dynamic factor models (NY Fed) with that of machine learning regularized regression methods (sg-LASSO-MIDAS). Next, we expand the data set to see whether additional financial and textual news series can improve the nowcast performance.

\smallskip

In Table \ref{tab:nowcasts_nyfed_data}, we report results based on real-time macro data used for the NY Fed model, see \cite{bok2018macroeconomic}. The results show that the sg-LASSO-MIDAS performs much better than the NY Fed nowcasts at the longer, i.e.\ 2-month, horizon. Our method significantly beats the benchmark AR(1) model for all the horizons, and the accuracy of the nowcasts improve with the horizon. Our end-of-quarter and 1-month horizon nowcasts are similar to the NY Fed ones, with the sg-LASSO-MIDAS being slightly better numerically but not statistically. We also report the average Superior Predictive Ability test of \cite{quaedvlieg2019multi} over all three horizons and the result reveals that the improvement of the sg-LASSO-MIDAS model versus the NY Fed nowcasts is significant at the 5\% significance level.
\begin{table}
	{\small
	\centering
	\begin{tabular}{r |ccc  cc }
				& Rel-RMSE & DM-stat-1 & DM-stat-2 & DM-stat-3 &  DM-stat-4 \\ 		
		&\multicolumn{5}{c}{2-month horizon}\\ 
		PCA-OLS 			          & 0.982      & 0.416 & 2.772 & 0.350 & 2.978 \\
		Ridge-U-MIDAS 		      &	0.918       & -0.188 & 1.073 & -1.593 & 0.281 \\
		LASSO-U-MIDAS 	        & 0.996	     & 0.275 & 1.280 & -1.983 & -0.294 \\
		Elastic Net-U-MIDAS     & 0.907      & -0.266 & 0.976 & -1.725 & 0.042 \\
		sg-LASSO-MIDAS          & 0.779      &     -2.038        &  & -2.349 &  \\
		NY Fed                          & 0.946      &   & 2.038 &  & 2.349 \\
		&\multicolumn{5}{c}{1-month horizon}\\ 
		PCA-OLS 			          &	1.028      & 2.296 & 3.668 & 2.010 & 3.399 \\
		Ridge-U-MIDAS 			  &	0.940	   & 0.927 & 2.063 & -0.184 & 1.979 \\
		LASSO-U-MIDAS 			& 1.044	      & 1.286 & 1.996 & -0.397 & 1.498 \\
		Elastic Net-U-MIDAS 	& 0.990		 & 1.341 & 2.508 & 0.444 & 2.859 \\
		sg-LASSO-MIDAS          & 0.672      & 		-1.426	 &  & -1.341 &  \\
		NY Fed                          &  0.805     & & 1.426  &  & 1.341 \\ 
		SPF (median) 				& 0.639      &  -2.317& -0.490 &  -1.743 & 0.282 \\		
		&\multicolumn{5}{c}{End-of-quarter}\\
		PCA-OLS 			          &	0.988	  & 3.414 & 3.400 & 3.113 & 3.155 \\
		Ridge-U-MIDAS 			  &	0.939     & 1.918 & 1.952 & 0.867 & 1.200 \\
		LASSO-U-MIDAS 			& 1.014	     & 1.790 & 1.773 & 0.276 & 0.517 \\
		Elastic Net-U-MIDAS     & 0.947		& 2.045 & 2.034 & 1.198 & 1.400 \\
		sg-LASSO-MIDAS          & 0.696    &     -0.156       &  & -0.159 &  \\
		NY Fed                          & 0.707     &  & 0.156 &  & 0.159 \\	
		&\multicolumn{5}{c}{p-value of aSPA test}\\
		&&&0.042&&0.056\\		 											 										 
		\hline
	\end{tabular}
	\caption{\footnotesize Nowcast comparison table -- Nowcast horizons are 2- and 1-month ahead, as well as the end of the quarter. Column {\it Rel-RMSE} reports root mean squared forecasts error relative to the AR(1) model. Column {\it DM-stat-1} reports \cite{diebold1995comparing} test statistic of all models relative to the NY FED nowcast, while column {\it DM-stat-2} reports the Diebold Mariano test statistic relative to the sg-LASSO model. Columns {\it DM-stat-3} and {\it DM-stat-4} report the Diebold Mariano test statistic for the same models, but excludes the recession period. For the 1-month horizon, the last row {\it SPF (median)} reports test statistics for the same models comparing with the SPF median nowcasts. The last row reports the p-value of the average Superior Predictive Ability (aSPA) test, see \cite{quaedvlieg2019multi}, over the three horizons of sg-LASSO-MIDAS model compared to the NY Fed nowcasts, including (left) and excluding (right) financial crisis period. Out-of-sample period: 2002 Q1 to 2017 Q2. 	\label{tab:nowcasts_hd_data}}}  
\end{table}

The comparison in Table \ref{tab:nowcasts_nyfed_data} does not fully exploit the potential of our methods, as it is easy to expand the data series beyond the small number used by the NY Fed nowcasting model. In Table \ref{tab:nowcasts_hd_data} we report results with additional sets of covariates which are financial series, advocated by \cite{andreou2013should}, and textual analysis of news. In total, the models select from 118 series -- 34 macro, 8 financial, and 76 news attention series.  For the moment we focus only on the first three columns of the table. At the longer horizon of 2 months, the method seems to produce slightly worse nowcasts compared to the results reported in Table \ref{tab:nowcasts_nyfed_data} using only macro data. However, we find significant improvements in prediction quality for the shorter 1-month and end-of-quarter horizons. In particular, a significant increase in accuracy relative to NY Fed nowcasts appears at the 1-month horizon.  We report again the average Superior Predictive Ability test of \cite{quaedvlieg2019multi} over all three horizons with the same result that the improvement of sg-LASSO-MIDAS versus the NY Fed nowcasts is significant at the 5\% significance level.  Lastly, we report results for several alternatives, namely, PCA-OLS, ridge, LASSO, and Elastic Net, using the unrestricted MIDAS scheme. Our approach produces more accurate nowcasts compared to these alternatives.
	
\smallskip

Table \ref{tab:nowcasts_hd_data} also features an entry called SPF (median), where we report results for the median survey of professional nowcasts for the 1-month horizon, and analyze how the model-based nowcasts compare with the predictions using the publicly available Survey of Professional Forecasters maintained by the Federal Reserve Bank of Philadelphia. We find that the sg-LASSO-MIDAS model-based nowcasts are similar to the SPF-implied nowcasts. We also find that the NY Fed nowcasts are significantly worse than the SPF.

\smallskip

\begin{figure}[!htbp]
	\centering\includegraphics[width=10cm]{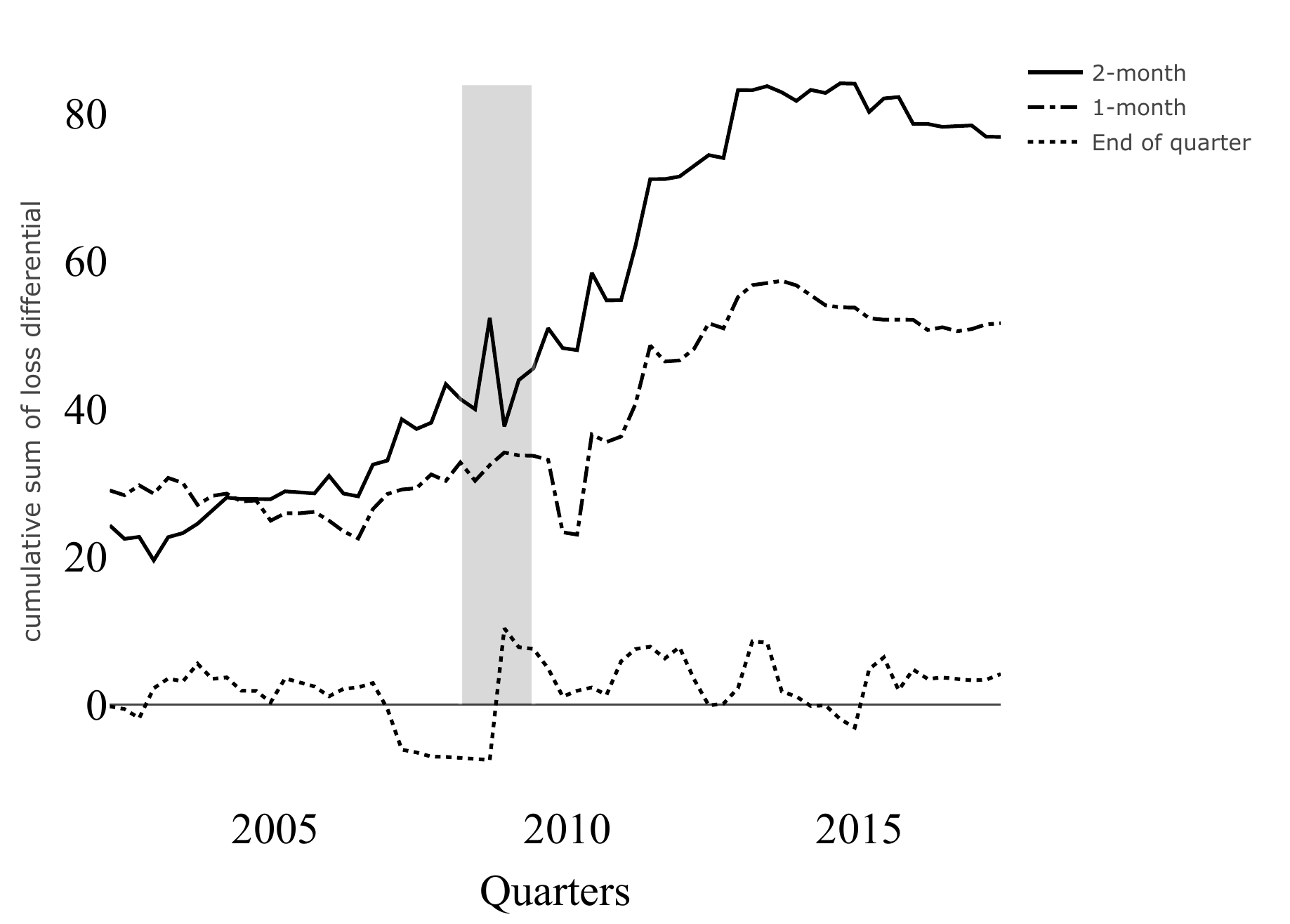}
	\caption{\footnotesize Cumulative sum of loss differentials of sg-LASSO-MIDAS model nowcasts including financial and textual data compared with the New York Fed model for three nowcasting horizons: solid black line cumsfe for the 2-months horizon, dash-dotted black line - cumsfe for the 1-month horizon, and dotted line for the end-of-quarter nowcasts. The gray shaded area corresponds to the NBER recession period. 		\label{fig:cumsfes}}
\end{figure}

In Figure \ref{fig:cumsfes} we plot the cumulative sum of squared forecast error (CUMSFE) loss differential of sg-LASSO-MIDAS versus NY Fed nowcasts for the three horizons. The CUMSFE is computed as
\begin{equation*}
	\text{CUMSFE}_{t,t+k} = \sum_{q=t}^{t+k} e_{q,M1}^2 - e_{q,M2}^2
\end{equation*}
for model $M1$ versus $M2.$ A positive value of $\text{CUMSFE}_{t,t+k}$ means that the model $M1$ has larger squared forecast errors compared to model $M2$ up to $t+k,$ and negative values imply the opposite. In our case, $M1$ is the New York Fed prediction error, while $M2$ is the sg-LASSO-MIDAS model. We observe persistent gains for the 2- and 1-month horizons throughout the out-of-sample period. When comparing the sg-LASSO-MIDAS results with additional financial and textual news series  versus those based on macro data only, we see a notable improvement at the 1-month horizon and a more modest one at the end-of-quarter horizons. In Figure \ref{fig:cumsfes2}, we plot the average CUMSFE for the 1-month and end-of-quarter horizons and observe that the largest gains of additional financial and textual news data are achieved during the financial crisis. 

The result in Figure \ref{fig:cumsfes2} prompts the question whether our results are mostly driven by this unusual period in our out-of-sample data. To assess this, we turn our attention again to the last two columns of Table \ref{tab:nowcasts_hd_data} reporting \cite{diebold1995comparing} test statistics which exclude the financial crisis period. Compared to the tests previously discussed, we find that the results largely remain the same, but some alternatives seem to slightly improve (e.g.\ LASSO or Elastic Net). Note that this also implies that our method performs better during periods with heavy-tailed observations, such as the financial crisis. It should also be noted that overall there is a slight deterioration of the average Superior Predictive Ability test over all three horizons when we remove the financial crisis. 

\smallskip

\begin{figure}[!htbp]
	\centering\includegraphics[width=10cm]{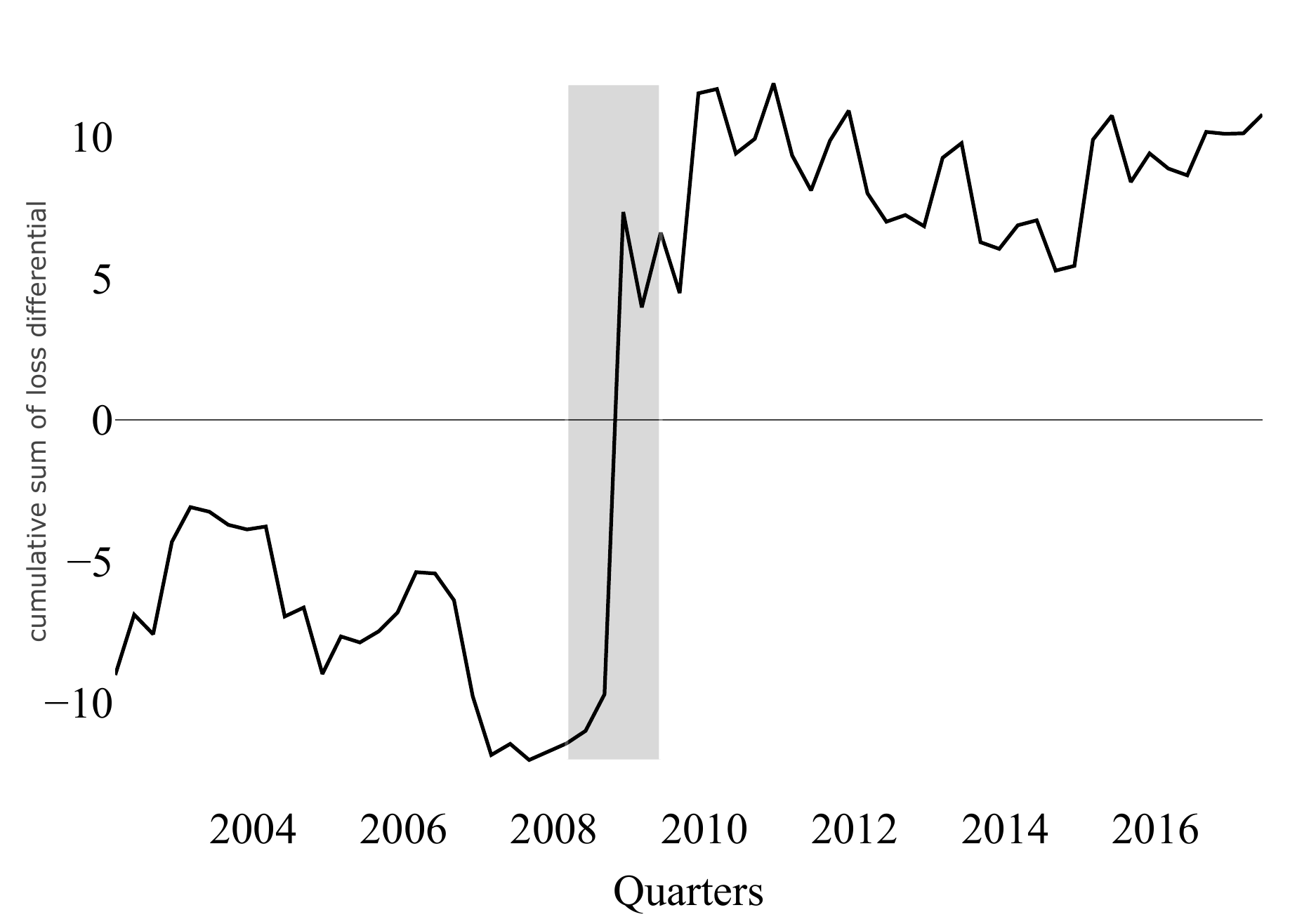}
	\caption{\footnotesize Cumulative sum of loss differentials (CUMSFE) of sg-LASSO-MIDAS nowcasts when we include vs. when we exclude the additional financial and textual news data, averaged over 1-month and the end-of-quarter horizons. The gray shaded area corresponds to the NBER recession period. 		\label{fig:cumsfes2}}
\end{figure}

In Figure \ref{fig:selected}, we plot the fraction of selected covariates by the sg-LASSO-MIDAS model when we use the macro, financial, and textual analysis data. For each reference quarter, we compute the ratio of each group of variables relative to the total number of covariates. In each subfigure, we plot the three different horizons. For all horizons, the macro series are selected more often than financial and/or textual data. The number of selected series increases with the horizon, however, the pattern of denser macro series and sparser financial and textual series is visible for all three horizons. The results are in line with the literature -- macro series tend to co-move, hence we see a denser pattern in the selection of such series, see e.g. \cite{bok2018macroeconomic}. On the other hand, the alternative textual analysis data appear to be very sparse, yet still important for nowcasting accuracy, see also \cite{thorsrud2018words}. 

%

\begin{figure}[!htbp]
	\centering\includegraphics[width=15cm]{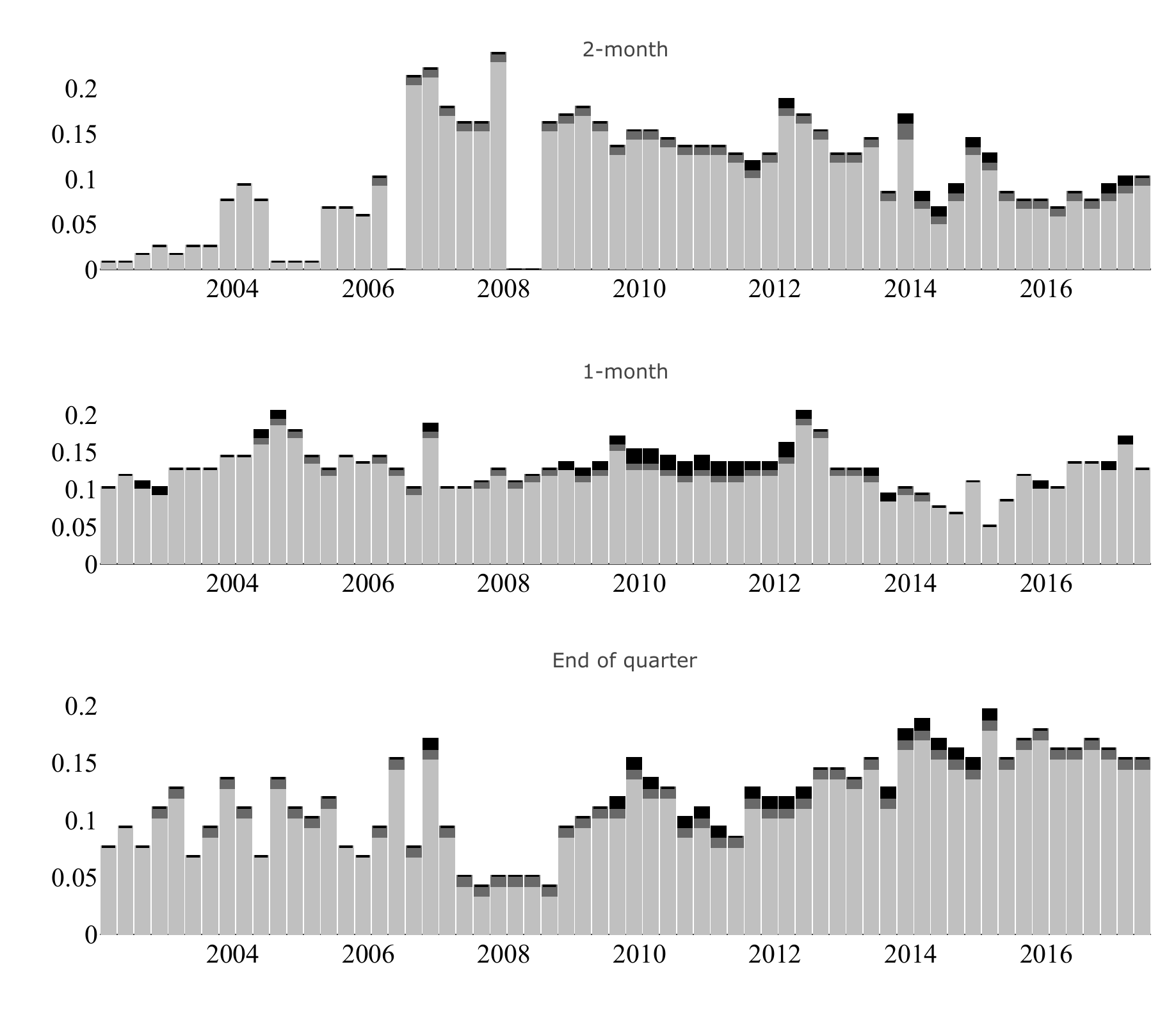}
	\caption{\footnotesize The fraction of selected covariates attributed to macro (light gray), financial (dark gray), and textual (black) data for three monthly horizons.	\label{fig:selected}}
\end{figure}



\section{Conclusion \label{sec:conclusion}}
This paper offers a new perspective on the high-dimensional time series regressions with data sampled at the same or mixed frequencies and contributes more broadly to the rapidly growing literature on the estimation, inference, forecasting, and nowcasting with regularized machine learning methods. The first contribution of the paper is to introduce the sparse-group LASSO estimator for high-dimensional time series regressions. An attractive feature of the estimator is that it recognizes time series data structures and allows us to perform the hierarchical model selection within and between groups. The classical LASSO and the group LASSO are covered as special cases. 

To recognize that the economic and financial time series have typically heavier than Gaussian tails, we use a new Fuk-Nagaev concentration inequality, from \cite{babiietalinference}, valid for a large class of $\tau$-mixing processes, including $\alpha$-mixing processes commonly used in econometrics. Building on this inequality, we establish the nonasymptotic and asymptotic properties of the sparse-group LASSO estimator.

Our empirical application provides new perspectives on applying machine learning methods to real-time forecasting, nowcasting, and monitoring with time series data, including unconventional data, sampled at different frequencies. To that end, we introduce a new class of MIDAS regressions with dictionaries linear in the parameters and based on orthogonal polynomials with lag selection performed by the sg-LASSO estimator. We find that the sg-LASSO outperforms the unstructured LASSO in small samples and conclude that incorporating specific data structures should be helpful in various applications. 

Our empirical results also show that the sg-LASSO-MIDAS using only macro data performs statistically better than NY Fed nowcasts at  2-month horizons and overall for the 1- and 2-month and end-of-quarter horizons. This is a comparison involving the same data and, therefore, pertains to models. This implies that machine learning models are, for this particular case, better than the state space dynamic factor models. When we add the financial data and the textual news data, a total of 118 series,  we find significant improvements in prediction quality for the shorter 1-month and end-of-quarter horizons. 

\section*{Acknowledgments}
We thank  participants at the Financial Econometrics Conference at the TSE Toulouse, the JRC Big Data and Forecasting Conference, the Big Data and Machine Learning in Econometrics, Finance, and Statistics Conference at the University of Chicago, the Nontraditional Data, Machine Learning, and Natural Language Processing in Macroeconomics Conference at the Board of Governors, the AI Innovations Forum organized by SAS and the Kenan Institute, the 12th World Congress of the Econometric Society, and seminar participants at the Vanderbilt University, as well as Harold Chiang, Jianqing Fan, Jonathan Hill, Michele Lenza, and Dacheng Xiu for comments. We are also grateful to the referees and the editor whose comments helped us to improve our paper significantly. All remaining errors are ours.

\clearpage
\bibliographystyle{econometrica}
\bibliography{midas_ml}

@unpublished{giannone2018economic,
	title={Economic predictions with big data: The illusion of sparsity},
	author={Giannone, Domenico and Lenza, Michele and Primiceri, Giorgio E},
	year={2018},
	note={Staff Reports 847, Federal Reserve Bank of New York}
}

@article{bok2018macroeconomic,
	title={Macroeconomic nowcasting and forecasting with big data},
	author={Bok, Brandyn and Caratelli, Daniele and Giannone, Domenico and Sbordone, Argia M and Tambalotti, Andrea},
	journal={Annual Review of Economics},
	volume={10},
	pages={615--643},
	year={2018}
}

@article{delle2019efficient,
	title={Efficient matrix approach for classical inference in state space models},
	author={Delle Monache, Davide and Petrella, Ivan},
	journal={Economics Letters},
	volume={181},
	pages={22--27},
	year={2019}
}

@article{mogliani2020bayesian,
	title={Bayesian MIDAS penalized regressions: estimation, selection, and prediction},
	author={Mogliani, Matteo and Simoni, Anna},
	journal={Journal of Econometrics (forthcoming)},
	year={2020},
	publisher={Elsevier}
}

@article{chan2009efficient,
	title={Efficient simulation and integrated likelihood estimation in state space models},
	author={Chan, Joshua CC and Jeliazkov, Ivan},
	journal={International Journal of Mathematical Modelling and Numerical Optimisation},
	volume={1},
	number={1-2},
	pages={101--120},
	year={2009}
}

@book{francq2019garch,
	title={GARCH models: structure, statistical inference and financial applications},
	author={Francq, Christian and Zakoian, Jean-Michel},
	year={2019},
	publisher={John Wiley \& Sons}
}

@article{babii2019commercial,
	title={Commercial and residential mortgage defaults: Spatial dependence with frailty},
	author={Babii, Andrii and Chen, Xi and Ghysels, Eric},
	journal={Journal of Econometrics},
	volume={212},
	number={1},
	pages={47--77},
	year={2019},
	publisher={Elsevier}
}

@article{carrasco2002mixing,
	title={Mixing and moment properties of various GARCH and stochastic volatility models},
	author={Carrasco, Marine and Chen, Xiaohong},
	journal={Econometric Theory},
	pages={17--39},
	year={2002},
	publisher={JSTOR}
}

@article{ghysels2018forecasting,
	title={Forecasting through the Rearview Mirror: Data Revisions and Bond Return Predictability.},
	author={Ghysels, Eric and Horan, Casidhe and Moench, Emanuel},
	journal={Review of Financial Studies},
	volume={31},
	number={2},
	year={2018},
	pages={678--714}
}

@article{foroni2015unrestricted,
	title={{Unrestricted mixed data sampling (U-MIDAS): MIDAS regressions with unrestricted lag polynomials}},
	author={Foroni, Claudia and Marcellino, Massimiliano and Schumacher, Christian},
	journal={Journal of the Royal Statistical Society: Series A (Statistics in Society)},
	volume={178},
	number={1},
	pages={57--82},
	year={2015}
}

@inproceedings{moriwaki2019nowcasting,
	title={Nowcasting Unemployment Rates with Smartphone GPS Data},
	author={Moriwaki, Daisuke},
	booktitle={International Workshop on Multiple-Aspect Analysis of Semantic Trajectories},
	pages={21--33},
	year={2019},
	organization={Springer}
}

@article{aprigliano2019using,
	title={Using payment system data to forecast economic activity},
	author={Aprigliano, Valentina and Ardizzi, Guerino and Monteforte, Libero},
	journal={International Journal of Central Banking},
	volume={15},
	number={4},
	pages={55--80},
	year={2019},
	publisher={International Journal of Central Banking}
}

@article{galbraith2018nowcasting,
	title={Nowcasting with payments system data},
	author={Galbraith, John W and Tkacz, Greg},
	journal={International Journal of Forecasting},
	volume={34},
	number={2},
	pages={366--376},
	year={2018},
	publisher={Elsevier}
}

@unpublished{aastveit2020nowcasting,
	title={{Nowcasting Norwegian Household Consumption with Debit Card Transaction Data}},
	author={Aastveit, Knot Are and Fastb\o, 
	Tuva Marie and Granziera, Eleonora and Paulsen, Kenneth S{\ae}terhagen and Torstensen, 	Kjersti N{\ae}ss},
	year={2020},
	note={{Discussion Paper, Norges Bank}}
}

@article{thorsrud2018words,
	title={Words are the new numbers: A newsy coincident index of the business cycle},
	author={Thorsrud, Leif Anders},
	journal={Journal of Business and Economic Statistics},
	volume = {38},
	number = {2},
	pages = {393-409},
	year  = {2020}
}

@article{duarte2017mixed,
	title={A mixed frequency approach to the forecasting of private consumption with ATM/POS data},
	author={Duarte, Cl{\'a}udia and Rodrigues, Paulo MM and Rua, Ant{\'o}nio},
	journal={International Journal of Forecasting},
	volume={33},
	number={1},
	pages={61--75},
	year={2017},
	publisher={Elsevier}
}

@article{raju2019nowcasting,
	title={Nowcasting economic activity in India using payment systems data},
	author={Raju, Sudhakar and Balakrishnan, Mahadevan},
	journal={Journal of Payments Strategy and Systems},
	volume={13},
	number={1},
	pages={72--81},
	year={2019},
	publisher={Henry Stewart Publications}
}

@unpublished{carlsen2010dankort,
	title={Dankort payments as a timely indicator of retail sales in Denmark},
	author={Carlsen, Maria and Storgaard, Peter E},
	year={2010},
	note={Danmarks Nationalbank Working Papers}
}

@unpublished{barnett2016nowcasting,
	title={Nowcasting Nominal GDP with the Credit-Card Augmented Divisia Monetary Aggregates},
	author={Barnett, William and Chauvet, Marcelle and Leiva-Leon, Danilo and Su, Liting},
	year={2016},
	note={Working paper University of Kansas, Department of Economics}
}

@unpublished{babiietalinference,
	title={Inference for high-dimensional regressions with heteroskedasticity and autocorrelation},
	author={Babii, Andrii and Ghysels, Eric and Striaukas, Jonas},
	year={2020},
	note={arXiv preprint arXiv:1912.06307v2},
}

@book{brockwell1991time,
	title={Time series: theory and methods 2nd ed},
	author={Brockwell, Peter J and Davis, Richard},
	year={1991},
	publisher={Springer-Verlag New York}
}

@unpublished{babii2017completeness,
	title={Is completeness necessary? Estimation in nonidentified linear models},
	author={Babii, Andrii and Florens, Jean-Pierre},
	year={2020},
	note={arXiv preprint arXiv:1709.03473v3},
}

@article{wu2005nonlinear,
	title={Nonlinear system theory: Another look at dependence},
	author={Wu, Wei Biao},
	journal={Proceedings of the National Academy of Sciences},
	volume={102},
	number={40},
	pages={14150--14154},
	year={2005},
	publisher={National Acad Sciences}
}

@article{carrasco2007linear,
  title={Linear inverse problems in structural econometrics estimation based on spectral decomposition and regularization},
  author={Carrasco, Marine and Florens, Jean-Pierre and Renault, Eric},
  journal={Handbook of Econometrics},
  volume={6},
  pages={5633--5751},
  year={2007},
  publisher={Elsevier}
}

@article{davydov1973mixing,
	title={Mixing conditions for Markov chains (in Russian)},
	author={Davydov, Yurii Aleksandrovich},
	journal={Teoriya Veroyatnostei i ee Primeneniya},
	volume={18},
	number={2},
	pages={321--338},
	year={1973}
}

@article{almon1965distributed,
	title={The distributed lag between capital appropriations and expenditures},
	author={Almon, Shirley},
	journal={Econometrica},
	pages={178--196},
	year={1965},
	volume={33},
	number={1}
}

@article{andreou2013should,
	title={Should macroeconomic forecasters use daily financial data and how?},
	author={Andreou, Elena and Ghysels, Eric and Kourtellos, Andros},
	journal={Journal of Business and Economic Statistics},
	volume={31},
	pages={240--251},
	year={2013}
}

@article{andrews1984non,
	title={Non-strong mixing autoregressive processes},
	author={Andrews, Donald W.K.},
	journal={Journal of Applied Probability},
	volume={21},
	number={4},
	pages={930--934},
	year={1984},
	publisher={Cambridge University Press}
}

@inproceedings{banbura2013now,
	title={Now-casting and the real-time data flow},
	author={Ba{\'n}bura, Marta and Giannone, Domenico and Modugno, Michele and Reichlin, Lucrezia},
	booktitle={Handbook of Economic Forecasting, Volume 2 Part A},
	editor={Elliott, Graham and Timmermann, Allan},
	publisher={Elsevier},
	pages={195--237},
	year={2013}
}

@article{belloni2012sparse,
	title={Sparse models and methods for optimal instruments with an application to eminent domain},
	author={Belloni, Alexandre and Chen, Daniel and Chernozhukov, Victor and Hansen, Christian},
	journal={Econometrica},
	volume={80},
	number={6},
	pages={2369--2429},
	year={2012},
	publisher={Wiley Online Library}
}

@article{belloni2018high,
	title={High-dimensional econometrics and generalized GMM},
	author={Belloni, Alexandre and Chernozhukov, Victor and Chetverikov, Denis and Hansen, Christian and Kato, Kengo},
	journal={Handbook of Econometrics (forthcoming)},
	year={2020}
}

@article{bickel2009simultaneous,
	title={Simultaneous analysis of LASSO and Dantzig selector},
	author={Bickel, Peter J and Ritov, Ya’acov and Tsybakov, Alexandre B},
	journal={Annals of Statistics},
	volume={37},
	number={4},
	pages={1705--1732},
	year={2009},
	publisher={Institute of Mathematical Statistics}
}

@unpublished{bybee2019structure,
	title={The structure of economic news},
	author={Bybee, Leland and Kelly, Bryan T and Manela, Asaf and Xiu, Dacheng},
	year={2020},
	note={{\it National Bureau of Economic Research}, and \url{http://structureofnews.com}}
}

@article{carrasco2016sample,
	title={In-sample inference and forecasting in misspecified factor models},
	author={Carrasco, Marine and Rossi, Barbara},
	journal={Journal of Business and Economic Statistics},
	volume={34},
	number={3},
	pages={313--338},
	year={2016},
	publisher={Taylor \& Francis}
}

@unpublished{chernozhukov2019lasso,
	title={Lasso-driven inference in time and space},
	author={Chernozhukov, Victor and H{\"a}rdle, Wolfgang K and Huang, Chen and Wang, Weining},
	note={{\it Annals of Statistics} (forthcoming)},
	year={2020}
}

@incollection{dedecker2007weak,
	title={Weak dependence},
	author={Dedecker, J{\'e}r{\^o}me and Doukhan, Paul and Lang, Gabriel and Rafael, Le{\'o}n R Jos{\'e} and Louhichi, Sana and Prieur, Cl{\'e}mentine},
	booktitle={Weak dependence: With examples and applications},
	pages={9--20},
	year={2007},
	publisher={Springer}
}

@article{dedecker2004coupling,
	title={Coupling for $\tau$-dependent sequences and applications},
	author={Dedecker, J{\'e}r{\^o}me and Prieur, Cl{\'e}mentine},
	journal={Journal of Theoretical Probability},
	volume={17},
	number={4},
	pages={861--885},
	year={2004},
	publisher={Springer}
}

@article{dedecker2005new,
	title={New dependence coefficients. Examples and applications to statistics},
	author={Dedecker, J{\'e}r{\^o}me and Prieur, Cl{\'e}mentine},
	journal={Probability Theory and Related Fields},
	volume={132},
	number={2},
	pages={203--236},
	year={2005},
	publisher={Springer}
}

@article{diebold1995comparing,
	title={Comparing predictive accuracy},
	author={Diebold, Francis X and Mariano, Roberto S},
	journal={Journal of Business and Economic Statistics},
	volume={13},
	number={3},
	pages={253--263},
	year={1995}
}

@Article{FMS15,
	author  = {Foroni, Claudia and Marcellino, Massimiliano and Schumacher, Christian},
	title   = {Unrestricted mixed data sampling (U-MIDAS): MIDAS regressions with unrestricted lag polynomials},
	journal = {Journal of the Royal Statistical Society: Series A (Statistics in Society)},
	year    = {2015},
	volume  = {178},
	number  = {1},
	pages   = {57--82}
}

@article{fuk1971probability,
	title={Probability inequalities for sums of independent random variables},
	author={Fuk, D Kh and Nagaev, Sergey V},
	journal={Theory of Probability and Its Applications},
	volume={16},
	number={4},
	pages={643--660},
	year={1971},
	publisher={SIAM}
}

@article{ghysels2019estimating,
	title={Estimating MIDAS regressions via OLS with polynomial parameter profiling},
	author={Ghysels, Eric and Qian, Hang},
	journal={Econometrics and Statistics},
	volume={9},
	pages={1--16},
	year={2019}
}

@article{ghysels2007midas,
	title={MIDAS regressions: Further results and new directions},
	author={Ghysels, Eric and Sinko, Arthur and Valkanov, Rossen},
	journal={Econometric Reviews},
	volume={26},
	number={1},
	pages={53--90},
	year={2007},
	publisher={Taylor \& Francis}
}

@article{ghysels2006predicting,
	title={Predicting volatility: getting the most out of return data sampled at different frequencies},
	author={Ghysels, Eric and Santa-Clara, Pedro and Valkanov, Rossen},
	journal={Journal of Econometrics},
	volume={131},
	pages={59--95},
	year={2006}
}

@article{han2017high,
	title={High-dimensional linear regression for dependent observations with application to nowcasting},
	author={Han, Yuefeng and Tsay, Ruey S},
	journal={arXiv preprint arXiv:1706.07899},
	year={2017}
}

@article{kock2015oracle,
	title={Oracle inequalities for high dimensional vector autoregressions},
	author={Kock, Anders Bredahl and Callot, Laurent},
	journal={Journal of Econometrics},
	volume={186},
	number={2},
	pages={325--344},
	year={2015},
	publisher={Elsevier}
}

@Unpublished{FMO13,
	author = {Marsilli, C},
	title  = {Variable selection in predictive MIDAS models},
	note   = {Working papers 520, Banque de France},
	year   = {2014},
}

@article{jiang2009uniform,
	title={On Uniform Deviations of General Empirical Risks with Unboundedness, Dependence, and High Dimensionality.},
	author={Jiang, Wenxin},
	journal={Journal of Machine Learning Research},
	volume={10},
	number={4},
	year={2009}
}

@article{medeiros2017adaptive,
	title={Adaptive LASSO estimation for ARDL models with GARCH innovations},
	author={Medeiros, Marcelo C and Mendes, Eduardo F},
	journal={Econometric Reviews},
	volume={36},
	number={6-9},
	pages={622--637},
	year={2017},
	publisher={Taylor \& Francis}
}

@article{medeiros2016l1,
	title={$\ell_1$-regularization of high-dimensional time-series models with non-Gaussian and heteroskedastic errors},
	author={Medeiros, Marcelo C and Mendes, Eduardo F},
	journal={Journal of Econometrics},
	volume={191},
	number={1},
	pages={255--271},
	year={2016},
	publisher={Elsevier}
}

@article{negahban2012unified,
	title={A unified framework for high-dimensional analysis of $ M $-estimators with decomposable regularizers},
	author={Negahban, Sahand N and Ravikumar, Pradeep and Wainwright, Martin J and Yu, Bin},
	journal={Statistical Science},
	volume={27},
	number={4},
	pages={538--557},
	year={2012},
	publisher={Institute of Mathematical Statistics}
}

@article{S17,
	author    = {Siliverstovs, B.},
	title     = {Short-term forecasting with mixed-frequency data: a MIDASSO approach},
	journal   = {Applied Economics},
	year      = {2017},
	volume    = {49},
	number    = {13},
	pages     = {1326--1343},
	publisher = {Taylor {\&} Francis},
}

@article{simon2013sparse,
	title={A sparse-group LASSO},
	author={Simon, Noah and Friedman, Jerome and Hastie, Trevor and Tibshirani, Robert},
	journal={Journal of Computational and Graphical Statistics},
	volume={22},
	number={2},
	pages={231--245},
	year={2013},
	publisher={Taylor \& Francis Group}
}

@article{tibshirani1996regression,
	title={Regression shrinkage and selection via the lasso},
	author={Tibshirani, Robert},
	journal={Journal of the Royal Statistical Society, Series B (Methodological)},
	volume={58},
	pages={267--288},
	year={1996},
	publisher={JSTOR}
}

@article{uematsu2019high,
	title={High-dimensional macroeconomic forecasting and variable selection via penalized regression},
	author={Uematsu, Yoshimasa and Tanaka, Shinya},
	journal={Econometrics Journal},
	volume={22},
	pages={34--56},
	year={2019}
}

@book{van2016estimation,
	title={Estimation and testing under sparsity: {\'E}cole d'{\'E}t{\'e} de Probabilit{\'e}s de Saint-Flour XLV--2015},
	author={van de Geer, Sara},
	volume={2159},
	year={2016},
	publisher={Springer}
}

@unpublished{wong2017lasso,
	title={LASSO guarantees for $\beta$-mixing heavy tailed time series},
	author={Wong, Kam Chung and Li, Zifan and Tewari, Ambuj},
	note={{\it Annals of Statistics} (forthcoming)},
	year={2019}
}

@article{wu2016performance,
	title={Performance bounds for parameter estimates of high-dimensional linear models with correlated errors},
	author={Wu, Wei-Biao and Wu, Ying Nian},
	journal={Electronic Journal of Statistics},
	volume={10},
	number={1},
	pages={352--379},
	year={2016},
	publisher={The Institute of Mathematical Statistics and the Bernoulli Society}
}

@article{yuan2006model,
	title={Model selection and estimation in regression with grouped variables},
	author={Yuan, Ming and Lin, Yi},
	journal={Journal of the Royal Statistical Society: Series B (Statistical Methodology)},
	volume={68},
	number={1},
	pages={49--67},
	year={2006},
	publisher={Wiley Online Library}
}

@article {quaedvlieg2019multi,
	title = {Multi-horizon forecast comparison},
	author = {Quaedvlieg, Rogier},
	journal = {Journal of Business and Economic Statistics},
	pages = {1--14},
	year = {2019},
	publisher = {Taylor \ & Francis}
}

\newpage
\spacingset{1.45} 
\setcounter{page}{1}
\setcounter{section}{0}
\setcounter{equation}{0}
\setcounter{table}{0}
\setcounter{figure}{0}
\renewcommand{\theequation}{A.\arabic{equation}}
\renewcommand\thetable{A.\arabic{table}}
\renewcommand\thefigure{A.\arabic{figure}}
\renewcommand\thesection{A.\arabic{section}}
\renewcommand\thepage{Online Appendix - \arabic{page}}
\renewcommand\thetheorem{A.\arabic{theorem}}

\begin{center}
	{\LARGE\textbf{ONLINE APPENDIX}}	
\end{center}
\bigskip

\section{Dictionaries \label{appendix:dictionaries}}
In this section, we review the choice of dictionaries for the MIDAS weight function. It is possible to construct dictionaries using arbitrary sets of functions, including a mix of algebraic polynomials, trigonometric polynomials, B-splines, Haar basis, or wavelets. In this paper, we mostly focus on dictionaries generated by orthogonalized algebraic polynomials, though it might be interesting to tailor the dictionary for each particular application. The attractiveness of algebraic polynomials comes from their ability to generate a variety of shapes with a relatively low number of parameters, which is especially desirable in low signal-to-noise environments. The general family of appropriate orthogonal algebraic polynomials is given by Jacobi polynomials that nest Legendre, Gegenbauer, and Chebychev's polynomials as a special case. 
\begin{example}[Jacobi polynomials]
	Applying the Gram-Schmidt orthogonalization to the power polynomials $\{1,x,x^2,x^3,\dots \}$ with respect to the measure
	\begin{equation*}
		\dx\mu(x) = (1-x)^{\alpha}(1+x)^{\beta}\dx x,\qquad \alpha,\beta>-1,
	\end{equation*}
	on $[-1,1]$, we obtain Jacobi polynomials. In practice Jacobi polynomials can be computed through the well-known tree-term recurrence relation for $n\geq 0$
	\begin{equation*}
		P_{n+1}^{(\alpha,\beta)}(x) = axP_{n}^{(\alpha,\beta)}(x) + bP_{n}^{(\alpha,\beta)}(x) - cP_{n-1}^{(\alpha,\beta)}(x)
	\end{equation*}
	with $a = (2n+\alpha+\beta+1)(2n+\alpha+\beta+2)/2(n+1)(n+\alpha+\beta+1)$, $b=(2n+\alpha+\beta+1)(\alpha^2-\beta^2)/2(n+1)(n+\alpha+\beta+1)(2n+\alpha+\beta)$, and $c = (\alpha+n)(\beta+n)(2n+\alpha+\beta+2)/(n+1)(n+\alpha+\beta+1)(2n+\alpha+\beta)$. To obtain the orthogonal basis on $[0,1]$, we shift Jacobi polynomials with affine bijection $x\mapsto 2x-1$.
	
	For $\alpha=\beta$, we obtain Gegenbauer polynomials, for $\alpha=\beta=0$, we obtain Legendre polynomials, while for $\alpha=\beta=-1/2$ or $\alpha=\beta=1/2$, we obtain Chebychev's polynomials of two kinds.
\end{example}
In the mixed frequency setting, non-orthogonalized polynomials, $\{1,x,x^2,x^3,\dots \}$, are also called Almon polynomials. It is preferable to use orthogonal polynomials in practice due to reduced multicollinearity and better numerical properties. At the same time, orthogonal polynomials are available in Matlab, R, Python, and Julia packages. Legendre polynomials is our default recommendation, while other choices of $\alpha$ and $\beta$ are preferable if we want to accommodate MIDAS weights with other integrability/tail properties.

\smallskip

We noted in the main body of the paper that the specification in equation (\ref{eq:midaspoly}) deviates from the standard MIDAS polynomial specification as it results in a linear regression model - a subtle but key innovation as it maps MIDAS regressions in the standard regression framework. Moreover, casting the MIDAS regressions in a linear regression framework renders the optimization problem convex, something only achieved by \cite{S17} using the U-MIDAS of \cite{FMS15} which does not recognize the mixed frequency data structure, unlike our sg-LASSO.

\section{Proofs of main results}\label{appsec:proofs}
\begin{lemma}\label{lemma:dual_norm}
	Consider $\|.\| = \alpha|.|_1 + (1-\alpha)|.|_2$, where $|.|_q$ is $\ell_q$ norm on $\R^p$. Then the dual norm of $\|.\|$, denoted $\|.\|^*$, satisfies
	\begin{equation*}
		\|z\|^* \leq \alpha|z|_{1}^* + (1-\alpha)|z|_{2}^*,\qquad \forall z\in\R^p,
	\end{equation*}
	where $|.|_{1}^*$ is the dual norm of $|.|_1$ and $|.|_{2}^*$ is the dual norm of $|.|_2$.
\end{lemma}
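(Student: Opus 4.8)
The plan is to work directly from the definition of the dual norm, $\|z\|^* = \sup_{x\neq 0}\langle z,x\rangle/\|x\|$ with $\|x\| = \alpha|x|_1 + (1-\alpha)|x|_2$, and to reduce the entire statement to a single bilinear inequality: for all $z,x\in\R^p$,
\begin{equation*}
\langle z,x\rangle \leq \bigl(\alpha|z|_1^* + (1-\alpha)|z|_2^*\bigr)\bigl(\alpha|x|_1 + (1-\alpha)|x|_2\bigr).
\end{equation*}
Once this product bound is available, dividing by $\|x\| = \alpha|x|_1 + (1-\alpha)|x|_2 > 0$ and taking the supremum over $x\neq 0$ yields the claim at once, so the real content is this inequality.

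First I would record the two elementary building blocks that come straight from the definitions of the dual norms, namely $\langle z,x\rangle\leq |z|_1^*\,|x|_1$ and $\langle z,x\rangle\leq |z|_2^*\,|x|_2$ (equivalently H\"older and Cauchy--Schwarz, once one identifies $|\cdot|_1^*=|\cdot|_\infty$ and $|\cdot|_2^*=|\cdot|_2$). The case $\langle z,x\rangle\leq 0$ is trivial since the right-hand side is nonnegative, so I would assume $\langle z,x\rangle\geq 0$; multiplying the two bounds then gives $\langle z,x\rangle^2\leq (|z|_1^*|x|_1)(|z|_2^*|x|_2)$, i.e. $\langle z,x\rangle\leq\sqrt{(|z|_1^*|x|_1)(|z|_2^*|x|_2)}$.

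To assemble the product bound I would apply Cauchy--Schwarz to the two weighted sums with weights $\alpha,1-\alpha$,
\begin{equation*}
\bigl(\alpha|z|_1^* + (1-\alpha)|z|_2^*\bigr)\bigl(\alpha|x|_1 + (1-\alpha)|x|_2\bigr) \geq \Bigl(\alpha\sqrt{|z|_1^*|x|_1} + (1-\alpha)\sqrt{|z|_2^*|x|_2}\Bigr)^2,
\end{equation*}
and then bound each square root from below by $\sqrt{\langle z,x\rangle}$ using the building blocks, so that the right-hand side is at least $\bigl(\alpha+(1-\alpha)\bigr)^2\langle z,x\rangle = \langle z,x\rangle$. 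This closes the argument.

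The step to flag is the combination itself. The tempting shortcut---splitting $\langle z,x\rangle = \alpha\langle z,x\rangle + (1-\alpha)\langle z,x\rangle$ and bounding the pieces by $\alpha|z|_1^*|x|_1$ and $(1-\alpha)|z|_2^*|x|_2$---does \emph{not} suffice: optimizing over the admissible $(|x|_1,|x|_2)$ only delivers the weaker estimate $\|z\|^*\leq\max(|z|_1^*,|z|_2^*)$, which is larger than the claimed convex combination. The cross terms $|z|_1^*|x|_2$ and $|z|_2^*|x|_1$ that appear on expanding the product are precisely what one must exploit (through AM--GM, or equivalently the Cauchy--Schwarz step above) to upgrade the $\max$ to the convex combination; this is the only genuinely nontrivial point, and the remainder is bookkeeping.
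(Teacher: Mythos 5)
Your proof is correct, but it takes a genuinely different route from the paper's. The paper's argument is shorter and works pointwise in the dual variable: it applies Jensen's inequality (convexity of $t\mapsto 1/t$ on $(0,\infty)$) to the denominator, giving $\frac{1}{\alpha|b|_1+(1-\alpha)|b|_2}\leq \frac{\alpha}{|b|_1}+\frac{1-\alpha}{|b|_2}$, multiplies through by $|\langle z,b\rangle|$, and then bounds the supremum of the sum by the sum of the suprema, each of which is a dual norm by definition. You instead establish the global bilinear inequality $\langle z,x\rangle \leq \bigl(\alpha|z|_1^*+(1-\alpha)|z|_2^*\bigr)\|x\|$ by taking the geometric mean of the two H\"older bounds and then applying Cauchy--Schwarz to the weighted two-term sums; this is slightly longer (it needs the sign case distinction and the square-root intermediary) but it makes the duality pairing explicit and isolates exactly where the cross terms $|z|_1^*|x|_2$ and $|z|_2^*|x|_1$ enter. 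Your flagged warning is apt: the naive splitting $\langle z,x\rangle=\alpha\langle z,x\rangle+(1-\alpha)\langle z,x\rangle$ indeed only yields $\max(|z|_1^*,|z|_2^*)$, which is weaker than the claimed convex combination. One small caveat on your closing remark that the cross-term trick is ``the only genuinely nontrivial point'': the paper's reciprocal-convexity step sidesteps any explicit cross-term manipulation, although that convexity inequality is itself equivalent to the AM--GM bound $2ab\leq a^2+b^2$, so the two mechanisms are at bottom the same inequality in different clothing.
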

\begin{proof}
	Clearly, $\|.\|$ is a norm. By the convexity of $x\mapsto x^{-1}$ on $(0,\infty)$
	\begin{equation*}
		\begin{aligned}
			\|z\|^* & = \sup_{b\ne 0}\frac{|\langle z,b\rangle|}{\|b\|}  \leq \sup_{b\ne 0}\left\{\alpha\frac{|\langle z,b\rangle|}{|b|_1} + (1-\alpha)\frac{|\langle z,b\rangle|}{|b|_2}  \right\} \\
			& \leq \alpha\sup_{b\ne 0}\frac{|\langle z,b\rangle|}{|b|_1} + (1-\alpha)\sup_{b\ne 0}\frac{|\langle z,b\rangle|}{|b|_2}  \\
			& = \alpha|z|_{1}^* + (1-\alpha)|z|_{2}^*.
		\end{aligned}
	\end{equation*}
\end{proof}	

\begin{proof}[Proof of Theorem~\ref{thm:estimation_bound}]
	By H\"{o}lder's inequality for every $\varsigma>0$
	\begin{equation*}
		\max_{j\in[p]}\|u_0x_{0,j}\|_\varsigma \leq \|u_0\|_{\varsigma q_1}\max_{j\in[p]}\|x_{0,j}\|_{\varsigma q_2}
	\end{equation*}
	with $q_1^{-1} + q_2^{-1}=1$ and $q_1,q_2\geq 1$. Therefore, under Assumption~\ref{as:data} (i), $\max_{j\in[p]}\|u_0x_{0,j}\|_\varsigma=O(1)$ with $\varsigma=qr/(q+r)$. Recall also that $\E[u_tx_{t,j}]=0,\forall j\in[p]$, see equation~(\ref{eq:hd_ts_model}), which in conjunction with Assumption~\ref{as:data} (ii) verifies conditions of Theorem~\ref{appcor:fn_inequality} and shows that there exists $C>0$ such that for every $\delta\in(0,1)$
	\begin{equation}\label{eq:concentration_score}
		\Pr\left(\left|\frac{1}{T}\sum_{t=1}^Tu_tX_t\right|_\infty \leq C\left(\frac{p}{\delta T^{\kappa-1}}\right)^{1/\kappa}\vee \sqrt{\frac{\log(8p/\delta)}{T}}\right) \geq 1-\delta.
	\end{equation}
	Let $G^* = \max_{G\in\mathcal{G}}|G|$ be the size of the largest group in $\mathcal{G}$. Note that the sg-LASSO penalty $\Omega$ is a norm. By Lemma~\ref{lemma:dual_norm}, its dual norm satisfies
	\begin{equation}\label{eq:dual_norm_bound}
		\begin{aligned}
			\Omega^*(\mathbf{X}^\top\mathbf{u}/T) & \leq \alpha|\mathbf{X}^\top\mathbf{u}/T|_\infty + (1-\alpha)\max_{G\in\mathcal{G}}|(\mathbf{X}^\top\mathbf{u})_G/T|_2 \\
			& \leq (\alpha + (1-\alpha)\sqrt{G^*})|\mathbf{X}^\top\mathbf{u}/T|_\infty  \\
			& \leq (\alpha + (1-\alpha)\sqrt{G^*})C\left(\frac{p}{\delta T^{\kappa-1}}\right)^{1/\kappa}\vee \sqrt{\frac{\log(8p/\delta)}{T}} \\
			& \leq \lambda/c,
		\end{aligned}
	\end{equation}
	where the first inequality follows since $|z|_1^*=|z|_\infty$ and $\left(\sum_{G\in\mathcal{G}}|z_G|_2\right)^* = \max_{G\in\mathcal{G}}|z_G|_2$, the second by elementary computations, the third by equation (\ref{eq:concentration_score}) with probability at least $1-\delta$ for every $\delta\in(0,1)$, and the last from the definition of $\lambda$ in Assumption~\ref{as:tuning}, where $c>1$ is as in Assumption~\ref{as:covariance}.
	By Fermat's rule, the sg-LASSO satisfies
	$$\mathbf{X}^\top(\mathbf{X}\hat\beta - \mathbf{y})/T + \lambda z^* = 0$$
	for some $z^*\in\partial\Omega(\hat\beta)$, where $\partial\Omega(\hat\beta)$ is the subdifferential of $b\mapsto\Omega(b)$ at $\hat\beta$. Taking the inner product with $\beta-\hat\beta$
	\begin{equation*}
		\begin{aligned}
			\langle \mathbf{X}^\top(\mathbf{y} - \mathbf{X}\hat\beta ),\beta - \hat\beta\rangle_T & = \lambda\langle z^*,\beta - \hat\beta\rangle \leq \lambda\left\{\Omega(\beta) - \Omega(\hat\beta) \right\},
		\end{aligned}
	\end{equation*}
	where the inequality follows from the definition of the subdifferential. Using $\mathbf{y}=\mathbf{m} + \mathbf{u}$ and rearranging this inequality
	\begin{equation*}
		\begin{aligned}
			\|\mathbf{X}(\hat\beta - \beta)\|^2_T - \lambda\left\{\Omega(\beta) - \Omega(\hat\beta) \right\} & \leq \langle \mathbf{X}^\top\mathbf{u},\hat\beta - \beta\rangle_T + \langle \mathbf{X}^\top(\mathbf{m}-\mathbf{X}\beta),\hat\beta - \beta\rangle_T \\
			& \leq \Omega^*\left(\mathbf{X}^\top\mathbf{u}/T\right)\Omega(\hat\beta - \beta) + \|\mathbf{X}(\hat\beta - \beta)\|_T\|\mathbf{m} - \mathbf{X}\beta\|_T \\
			& \leq c^{-1}\lambda\Omega(\hat\beta - \beta) + \|\mathbf{X}(\hat\beta - \beta)\|_T\|\mathbf{m} - \mathbf{X}\beta\|_T.
		\end{aligned}
	\end{equation*}
	where the second line follows by the dual norm inequality and the last by $\Omega^*(\mathbf{X}^\top\mathbf{u}/T)\leq \lambda/c$ as shown in equation (\ref{eq:dual_norm_bound}). Therefore,
	\begin{equation}\label{eq:prediction_bound}
		\begin{aligned}
			\|\mathbf{X}\Delta\|^2_T & \leq  c^{-1}\lambda\Omega(\Delta) + \|\mathbf{X}\Delta\|_T\|\mathbf{m} - \mathbf{X}\beta\|_T + \lambda\left\{\Omega(\beta) - \Omega(\hat\beta) \right\} \\
			& \leq (c^{-1}+1)\lambda\Omega(\Delta) + \|\mathbf{X}\Delta\|_T\|\mathbf{m} - \mathbf{X}\beta\|_T
		\end{aligned}
	\end{equation}
	with $\Delta = \hat\beta - \beta$. Note that the sg-LASSO penalty can be decomposed as a sum of two seminorms $\Omega(b) = \Omega_0(b) + \Omega_1(b),\;\forall b\in\R^p$ with
	\begin{equation*}
		\Omega_0(b) = \alpha|b_{S_0}|_1 + (1-\alpha)\sum_{G\in \mathcal{G}_0}|b_{G}|_2\qquad \text{and} \qquad  \Omega_1(b) =\alpha|b_{S^c_0}|_1 + (1-\alpha)\sum_{G\in \mathcal{G}_0^c}|b_{G}|_2.
	\end{equation*}
	Note also that $\Omega_1(\beta)=0$ and $\Omega_1(\hat\beta)=\Omega_1(\Delta)$. Then by the triangle inequality
	\begin{equation}\label{eq:norm_inequality}
		\Omega(\beta) - \Omega(\hat\beta) \leq \Omega_0(\Delta) - \Omega_1(\Delta).
	\end{equation}
	If $\|\mathbf{m} - \mathbf{X}\beta\|_T \leq 2^{-1}\|\mathbf{X}\Delta\|_T$, then it follows from the first inequality in equation (\ref{eq:prediction_bound}) and equation (\ref{eq:norm_inequality}) that
	\begin{equation*}
		\|\mathbf{X}\Delta\|^2_T \leq 2 c^{-1}\lambda\Omega(\Delta) + 2\lambda\left\{\Omega_0(\Delta) - \Omega_1(\Delta) \right\}.
	\end{equation*}
	Since the left side of this equation is positive, this shows that $\Omega_1(\Delta)\leq c_0\Omega_0(\Delta)$ with $c_0=(c+1)/(c-1)$, and whence $\Delta\in\mathcal{C}(c_0)$, cf., Assumption~\ref{as:covariance}. Then 
	\begin{equation}\label{eq:Omega_Sigma_bound}
		\begin{aligned}
			\Omega(\Delta) & \leq (1+c_0)\Omega_0(\Delta) \\
			& \leq (1+c_0)\left(\alpha\sqrt{|S_0|}|\Delta_{S_0}|_2 + (1-\alpha)\sqrt{|\mathcal{G}_0|}\sqrt{\sum_{G\in\mathcal{G}_0}|\Delta_G|_2^2}\right) \\
			& \leq (1+c_0)\sqrt{s_\alpha}\sqrt{\sum_{G\in\mathcal{G}_0}|\Delta_G|_2^2} \\
			& \leq (1+c_0)\sqrt{s_\alpha/\gamma\Delta^\top\Sigma\Delta},	
		\end{aligned}
	\end{equation}
	where we use the Jensen's inequality, Assumption~\ref{as:covariance}, and the definition of $\sqrt{s_\alpha}$. Next, note that
	\begin{equation}\label{eq:Sigma_bound}
		\begin{aligned}
			\Delta^\top\Sigma\Delta & = \|\mathbf{X}\Delta\|^2_T +  \Delta^\top(\Sigma - \hat\Sigma)\Delta \\
			& \leq 2(c^{-1}+1)\lambda\Omega(\Delta) + \Omega(\Delta)\Omega^{*}\left((\hat\Sigma - \Sigma)\Delta\right)  \\
			& \leq 2(c^{-1}+1)\lambda\Omega(\Delta) +  \Omega^2(\Delta)G^*|\mathrm{vech}(\hat\Sigma - \Sigma)|_\infty,	
		\end{aligned}
	\end{equation}
	where the first inequality follows from equation (\ref{eq:prediction_bound}) and the dual norm inequality and the second by Lemma~\ref{lemma:dual_norm} and elementary computations
	\begin{equation*}
		\begin{aligned}
			\Omega^*\left((\hat\Sigma - \Sigma)\Delta\right) & \leq \alpha|(\hat\Sigma - \Sigma)\Delta |_\infty + (1-\alpha)\max_{G\in\mathcal{G}}\left|[(\hat\Sigma - \Sigma)\Delta]_{G}\right|_2 \\
			& \leq \alpha|\Delta|_1|\mathrm{vech}(\hat\Sigma - \Sigma)|_\infty + (1-\alpha)\sqrt{G^*}|\mathrm{vech}(\hat\Sigma - \Sigma)|_\infty|\Delta|_1 \\
			& \leq G^*|\mathrm{vech}(\hat\Sigma - \Sigma)|_\infty\Omega(\Delta).
		\end{aligned}
	\end{equation*}
	Combining the inequalities obtained in equations (\ref{eq:Omega_Sigma_bound} and \ref{eq:Sigma_bound})
	\begin{equation}\label{eq:Omega_inequality}
		\begin{aligned}
			\Omega(\Delta) & \leq (1+c_0)^2\gamma^{-1}s_\alpha\left\{2(c^{-1}+1)\lambda + G^*|\mathrm{vech}(\hat\Sigma - \Sigma)|_\infty\Omega(\Delta) \right\} \\
			& \leq 2(1+c_0)^2\gamma^{-1}s_\alpha(c^{-1}+1)\lambda + (1-A^{-1})\Omega(\Delta),
		\end{aligned}
	\end{equation}
	where the second line holds on the event $E \triangleq \{|\mathrm{vech}(\hat\Sigma - \Sigma)|_\infty \leq \gamma/2G^*s_\alpha(1+2c_0)^2 \}$ with $1 - A^{-1} = (1+c_0)^2/2(1+2c_0)^2<1$. Therefore, inequalities in equation (\ref{eq:prediction_bound} and \ref{eq:Omega_inequality}) yield
	\begin{equation*}
		\begin{aligned}
			\Omega(\Delta) & \leq \frac{2A}{\gamma}(1+c_0)^2(c^{-1}+1)s_\alpha\lambda \\
			\|\mathbf{X}\Delta\|^2_T & \leq \frac{4A}{\gamma}(1+c_0)^2(c^{-1}+1)^2s_\alpha\lambda^2.
		\end{aligned}
	\end{equation*}
	
	\noindent	On the other hand, if $\|\mathbf{m} - \mathbf{X}\beta\|_T > 2^{-1}\|\mathbf{X}\Delta\|_T$, then
	\begin{equation*}
		\|\mathbf{X}\Delta\|^2_T \leq 4\|\mathbf{m} - \mathbf{X}\beta\|_T^2.
	\end{equation*}
	Therefore, on the event $E$ we always have
	\begin{equation}\label{eq:prediction}
		\|\mathbf{X}\Delta\|^2_T \leq C_1s_\alpha\lambda^2 + 4\|\mathbf{m} - \mathbf{X}\beta\|_T^2
	\end{equation}
	with $C_1 = 4A\gamma^{-1}(1+c_0)^2(c^{-1}+1)^2$. This proves the first claim of Theorem~\ref{thm:estimation_bound} if we show that $\Pr(E^c)\leq 2p(p+1)(c_1T^{1-\mu}s_\alpha^{\mu} + \exp(-c_2T/s_\alpha^2)$. To that end, by the Cauchy-Schwartz inequality under Assumptions~\ref{as:data} (i)
	\begin{equation*}
		\max_{1\leq j\leq k\leq p}\|x_{0,j}x_{0,k}\|_{r/2} \leq \max_{j\in[p]}\|x_{0,j}\|_{r}^2 = O(1).
	\end{equation*}
	This in conjunction with Assumption~\ref{as:data} (ii) verifies assumptions of \cite{babiietalinference}, Theorem 3.1 and shows that
	\begin{equation*}
		\begin{aligned}
			\Pr(E^c) & = \Pr\left(\left|\frac{1}{T}\sum_{t=1}^Tx_{t,j}x_{t,k} - \E[x_{t,j}x_{t,k}]\right|_\infty > \frac{\gamma}{2G^*s_\alpha(1+2c_0)^2}\right) \\
			& \leq c_1T^{1-\mu}s_\alpha^{\mu}p(p+1) + 2p(p+1)\exp\left(-\frac{c_2T^2}{s_\alpha^2B_T^2}\right)
		\end{aligned}
	\end{equation*}
	for some $c_1,c_2>0$ and $B_T^2 = \max_{j,k\in[p]}\sum_{t=1}^T\sum_{l=1}^T|\Cov(x_{t,j}x_{t,k}, x_{l,j}x_{l,k})|$. Lastly, under Assumption~\ref{as:data}, by \cite{babiietalinference}, Lemma~A.1.2, $B_T^2 = O(T)$.
	
	To prove the second claim of Theorem~\ref{thm:estimation_bound}, suppose first that $\Delta\in\mathcal{C}(2c_0)$. Then on the event $E$
	\begin{equation*}
		\begin{aligned}
			\Omega^2(\Delta) & = (\Omega_0(\Delta) + \Omega_1(\Delta))^2 \\
			& \leq (1+2c_0)^2\Omega_0^2(\Delta) \\
			& \leq (1+2c_0)^2\Delta^\top\Sigma\Delta s_\alpha/\gamma \\
			& = (1+2c_0)^2\left\{\|\mathbf{X}\Delta\|^2_T +  \Delta^\top(\Sigma-\hat\Sigma)\Delta\right\}s_\alpha/\gamma \\
			& \leq (1+2c_0)^2\left\{C_1s_\alpha\lambda^2 + 4\|\mathbf{m} - \mathbf{X}\beta\|_T^2 +  \Omega^2(\Delta)G^*|\mathrm{vech}(\hat\Sigma - \Sigma)|_\infty\right\}s_\alpha/\gamma \\
			& \leq (1+2c_0)^2\left\{C_1s_\alpha\lambda^2 + 4\|\mathbf{m} - \mathbf{X}\beta\|_T^2\right\}s_\alpha/\gamma + \frac{1}{2}\Omega^2(\Delta),
		\end{aligned}
	\end{equation*}
	where we use the inequality in equations (\ref{eq:Omega_Sigma_bound}, \ref{eq:Sigma_bound}, and \ref{eq:prediction}). Therefore,
	\begin{equation}\label{eq:norm_bound}
		\Omega^2(\Delta) \leq 2(1+2c_0)^2\left\{C_1s_\alpha\lambda^2 + 4\|\mathbf{m} - \mathbf{X}\beta\|_T^2\right\}s_\alpha/\gamma.
	\end{equation}
	On the other hand, if $\Delta\not\in\mathcal{C}(2c_0)$, then $\Delta\not\in\mathcal{C}(c_0)$, which as we have already shown implies $\|\mathbf{m} - \mathbf{X}\beta\|_T>2^{-1}\|\mathbf{X}\Delta\|_T$. In conjunction with equations (\ref{eq:prediction_bound} and \ref{eq:norm_inequality}), this shows that
	\begin{equation*}
		0\leq \lambda c^{-1}\Omega(\Delta) + 2\|\mathbf{m} - \mathbf{X}\beta\|_T^2 + \lambda\left\{\Omega_0(\Delta) - \Omega_1(\Delta) \right\},
	\end{equation*}
	and whence
	\begin{equation*}
		\begin{aligned}
			\Omega_1(\Delta) & \leq c_0\Omega_0(\Delta) + \frac{2c}{\lambda(c-1)}\|\mathbf{m} - \mathbf{X}\beta\|_T^2 \\
			& \leq \frac{1}{2}\Omega_1(\Delta) + \frac{2c}{\lambda(c-1)}\|\mathbf{m} - \mathbf{X}\beta\|_T^2.
		\end{aligned}
	\end{equation*}
	This shows that
	\begin{equation*}
		\begin{aligned}
			\Omega(\Delta) & \leq (1+(2c_0)^{-1})\Omega_1(\Delta) \\
			& \leq(1+(2c_0)^{-1})\frac{4c}{\lambda(c-1)}\|\mathbf{m} - \mathbf{X}\beta\|_T^2.
		\end{aligned}
	\end{equation*}
	Combining this with the inequality in equation (\ref{eq:norm_bound}), we obtain the second claim of Theorem~\ref{thm:estimation_bound}.
\end{proof}

The following result is proven in \cite{babiietalinference}, see their Theorem~A.1.

\begin{theorem}\label{appcor:fn_inequality}
	Let $(\xi_t)_{t\in\Z}$ be a centered stationary stochastic process in $\R^p$ such that (i) for some $\varsigma>2$, $\max_{j\in[p]}\|\xi_{0,j}\|_\varsigma = O(1)$; (ii)  for every $j\in[p]$, $\tau$-mixing coefficients of $\xi_{t,j}$ satisfy $\tau_k^{(j)}\leq ck^{-a}$ for some constants $c>0$ and $a>(\varsigma-1)/(\varsigma-2)$. Then there exists $C>0$ such that for every $\delta\in(0,1)$
	\begin{equation*}
		\Pr\left(\left|\frac{1}{T}\sum_{t=1}^T\xi_t\right|_\infty \leq C\left(\frac{p}{\delta T^{\kappa-1}}\right)^{1/\kappa}\vee\sqrt{\frac{\log(8p/\delta)}{T}} \right) \geq 1 - \delta
	\end{equation*}
	with $\kappa = ((a+1)\varsigma - 1)/(a+\varsigma-1)$.
\end{theorem}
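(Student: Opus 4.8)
The statement is a Fuk--Nagaev inequality for the $\ell_\infty$ norm of a normalized sum of a $\tau$-mixing random vector, so the plan is to reduce it to a one-dimensional deviation bound and then pay a union-bound price over the $p$ coordinates. First I would write
\begin{equation*}
\Pr\left(\left|\frac{1}{T}\sum_{t=1}^T\xi_t\right|_\infty > \eta\right) \leq \sum_{j=1}^p\Pr\left(\left|\sum_{t=1}^T\xi_{t,j}\right| > T\eta\right),
\end{equation*}
and aim to control each scalar tail by the sum of a polynomial term and a sub-Gaussian term. Solving (the union-bound multiple of) this scalar bound for the threshold at level $\delta$ then produces the two regimes $(p/\delta T^{\kappa-1})^{1/\kappa}$ and $\sqrt{\log(8p/\delta)/T}$ appearing in the statement, with the $\vee$ recording which contribution dominates.

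The core of the argument is therefore a sharp scalar Fuk--Nagaev inequality for a single stationary $\tau$-mixing sequence $(\xi_{t,j})_t$ under $\|\xi_{0,j}\|_\varsigma = O(1)$ and $\tau_k^{(j)}\lesssim k^{-a}$. I would establish it by truncation at a level $M$ to be optimized, splitting $\xi_{t,j} = \xi_{t,j}\mathbf{1}_{|\xi_{t,j}|\le M} + \xi_{t,j}\mathbf{1}_{|\xi_{t,j}|>M}$. The tail part is handled directly by Markov's inequality and the $L_\varsigma$ moment bound, yielding a contribution of order $TM^{-\varsigma}$ together with a recentering bias $T\,\E[|\xi_{0,j}|\mathbf{1}_{|\xi_{0,j}|>M}]\lesssim TM^{1-\varsigma}$. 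The bounded part requires a Bernstein-type deviation bound for weakly dependent, bounded variables; here the $\tau$-mixing structure enters through the coupling construction of \cite{dedecker2004coupling}, which approximates blocks of the dependent sequence by independent blocks at a cost governed by $\tau_k$, after which an independent-case exponential inequality applies. The variance proxy in the resulting $\exp(-cx^2/(TV))$ factor is the long-run variance, which is finite precisely because condition (ii), $a>(\varsigma-1)/(\varsigma-2)$, is exactly what makes $\sum_k \tau_k^{(\varsigma-2)/(\varsigma-1)}<\infty$ and hence the autocovariances summable via the covariance bound $|\Cov(\xi_0,\xi_k)|\le\int_0^{\tau_k}Q\circ G(u)\,\dx u$ recorded earlier in the excerpt.

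With these two pieces in hand, I would balance the polynomial contribution from truncation against the exponential (coupling/variance) contribution by choosing $M$ as a suitable power of the deviation $x$, and the block length accordingly. The dependence-tails exponent $\kappa = ((a+1)\varsigma-1)/(a+\varsigma-1)$ is the tail index in $x$ that emerges from equalizing these terms; it interpolates between the i.i.d.\ value $\varsigma$ (recovered as $a\to\infty$, where $\kappa\to\varsigma$) and strictly smaller values under stronger dependence, since $(\varsigma-1)^2>0$ forces $\kappa<\varsigma$ for finite $a$. Inverting the scalar bound and reinstating the factor $p$ then yields the stated threshold.

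The hard part will be the bounded-part Bernstein bound under $\tau$-mixing with the \emph{sharp} exponent, rather than the routine truncation bookkeeping. Unlike $\alpha$- or $\beta$-mixing, $\tau$-mixing is a coupling-based (non-$\sigma$-algebra) dependence notion, so one cannot invoke the classical mixing blocking directly; the delicate step is to set up the Dedecker--Prieur coupling so that the block length, the accumulated coupling error, and the truncation level $M$ are optimized simultaneously, recovering $\kappa$ exactly instead of a loose exponent. This is precisely why the result is attributed to the companion paper \cite{babiietalinference} and why a sharp Fuk--Nagaev inequality for $\tau$-mixing processes was previously unavailable.
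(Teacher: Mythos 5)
The paper itself contains no proof of this statement: immediately before the theorem it says ``The following result is proven in \cite{babiietalinference}, see their Theorem~A.1,'' so the result is imported from the companion paper and there is no in-paper argument to compare against. Your sketch correctly reconstructs the route that the companion paper takes --- a union bound over the $p$ coordinates plus a sharp scalar Fuk--Nagaev inequality for $\tau$-mixing sequences via truncation and Dedecker--Prieur coupling --- and your quantitative checkpoints are all right: solving the union-bounded scalar tail $c_1 p T^{1-\kappa}\eta^{-\kappa} + c_2 p\exp(-c_3 T\eta^2)$ at level $\delta$ does yield exactly the two regimes $(p/\delta T^{\kappa-1})^{1/\kappa}$ and $\sqrt{\log(8p/\delta)/T}$, condition (ii) is precisely what makes $\sum_k \tau_k^{(\varsigma-2)/(\varsigma-1)}<\infty$ and hence the long-run variance finite, and $\kappa\uparrow\varsigma$ as $a\to\infty$ while $\kappa<\varsigma$ for finite $a$ since $(\varsigma-1)^2>0$. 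Because you, like the paper, defer the hard scalar inequality to \cite{babiietalinference}, your proposal is as complete as --- indeed more informative than --- the paper's own treatment of this statement.
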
	

\section{ARDL-MIDAS: moments and $\tau$-mixing coefficients}\label{sec:moments_mixing}
The ARDL-MIDAS process $(y_t)_{t\in\Z}$ is defined as
\begin{equation*}
	\phi(L)y_t = \xi_t,
\end{equation*}
where $\phi(L) = I - \rho_1L - \rho_2L^2 - \dots - \rho_JL^J$ is a lag polynomial and $\xi_t = \sum_{j=0}^p x_{t,j}\gamma_j + u_t$. The process $(y_t)_{t\in\Z}$ is $\tau$-mixing and has finite moments of order $q\geq 1$ as illustrated below.
\begin{assumption}\label{as:ardl_midas}
	Suppose that $(\xi_t)_{t\in\Z}$ is a stationary process such that (i) $\|\xi_t\|_{q}<\infty$ for some $q>1$; (ii) $\beta$-mixing coefficients satisfy $\beta_k\leq Ca^{k}$ for some $a\in(0,1)$ and $C>0$; and (iii) $\phi(z)\ne 0$ for all $z\in\mathbf{C}$ such that $|z|\leq 1$.
\end{assumption}
Note that by \cite{davydov1973mixing}, (ii) holds if $(\xi_t)_{t\in\Z}$ is a geometrically ergodic Markov process and that (iii) rules out the unit root process.

\begin{proposition}\label{prop:ardl_midas_process}
	Under Assumption~\ref{as:ardl_midas}, the ARDL-MIDAS process has moments of order $q>1$ and $\tau$-mixing coefficients $\tau_k \leq C(a^{bk} + c^k)$ for some $c\in(0,1),$ $C>0$, and $b=1-1/q$.
\end{proposition}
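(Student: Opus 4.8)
I would start from the causal moving-average representation. Condition (iii) says $\phi$ has no roots in the closed unit disk, so $\phi(z)^{-1}=\sum_{i\ge0}\psi_i z^i$ with geometrically decaying coefficients, $|\psi_i|\le C_0\rho^i$ for some $\rho\in(0,1)$. Hence $y_t=\sum_{i\ge0}\psi_i\xi_{t-i}$ converges in $L_q$, and by Minkowski's inequality $\|y_t\|_q\le\|\xi_0\|_q\sum_{i\ge0}|\psi_i|<\infty$, using stationarity of $(\xi_t)$ and (i). This settles the moment claim and, as a by-product, produces the decay rate $\rho$ that will generate the $c^k$ term.

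For the mixing bound I would fix $t$ and indices $t+k\le t_1<\dots<t_j$ and bound $\tau(\mathcal{M}_t,(y_{t_1},\dots,y_{t_j}))$, recalling the $\frac1j$ normalization in the definition of $\tau_k$, which will absorb the dimension $j$. Writing $Y=(y_{t_1},\dots,y_{t_j})$, I split each coordinate at lag $m\asymp k$: $y_{t_i}=Y_i^{(m)}+R_i$, where $Y_i^{(m)}=\sum_{l\le m}\psi_l\xi_{t_i-l}$ depends only on $(\xi_s)_{s\in[t_i-m,t_i]}$ and the tail satisfies $\|R_i\|_q\le C\rho^{m}$. Using that $\tau(\mathcal{M},\cdot)$ is $2$-Lipschitz in its argument for the $\ell_1$ norm, i.e. $\tau(\mathcal{M},Y)\le\tau(\mathcal{M},Y^{(m)})+2\E|Y-Y^{(m)}|_1$, the tail contributes $2\sum_i\|R_i\|_q\le 2jC\rho^m$, which after dividing by $j$ and choosing $m\asymp k$ becomes a $c^k$ term.

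It then remains to bound $\tau(\mathcal{M}_t,Y^{(m)})$. Since $\mathcal{M}_t\subseteq\sigma(\xi_s:s\le t)$ and $Y^{(m)}$ is a function of a block $(\xi_s)_{s\in B}$ with $B\subseteq[t+g,\infty)$ for the gap $g=k-m\asymp k$, Berbee's coupling lemma supplies a copy $\tilde Y^{(m)}$, independent of $\mathcal{M}_t$ and with the same law as $Y^{(m)}$, such that $\Pr(Y^{(m)}\ne\tilde Y^{(m)})\le\beta_g\le Ca^{g}$. For any such independent copy one has $\tau(\mathcal{M}_t,Y^{(m)})\le\E|Y^{(m)}-\tilde Y^{(m)}|_1$, and since the two agree off the event $\{Y^{(m)}\ne\tilde Y^{(m)}\}$, Hölder's inequality with conjugate exponents $q$ and $q/(q-1)$ gives $\tau(\mathcal{M}_t,Y^{(m)})\le\big\||Y^{(m)}-\tilde Y^{(m)}|_1\big\|_q\,\Pr(Y^{(m)}\ne\tilde Y^{(m)})^{1-1/q}$. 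The $L_q$ factor is at most $2\sum_i\|Y_i^{(m)}\|_q\le 2jM$ with $M=\|\xi_0\|_q\sum_l|\psi_l|$, while the probability factor is at most $(Ca^g)^{1-1/q}$. Dividing by $j$, taking $g\asymp k$, and setting $b=1-1/q$ yields the $a^{bk}$ term; the exponent $b=1-1/q$ is produced precisely by the Hölder conjugate, and $q>1$ makes it positive. Combining the two contributions gives $\tau_k\le C(a^{bk}+c^k)$ after absorbing the proportionality constants in the exponents into the geometric bases.

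The main obstacle is the control of $\tau(\mathcal{M}_t,Y^{(m)})$: one must simultaneously (a) convert the $\beta$-mixing coupling probability into an $L_1$ coupling cost through Hölder, which forces the loss of integrability captured by $b=1-1/q$ and requires the finite $q$-th moments from (i); and (b) choose the truncation level $m$ and the coupling gap $g=k-m$ both of order $k$ so that the deterministic truncation error $\rho^m$ and the stochastic coupling error $a^{g(1-1/q)}$ decay geometrically at comparable rates, all while keeping the dimension $j$ out of the final bound via the $\frac1j$ normalization.
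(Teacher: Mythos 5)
Your proposal is correct in substance but takes a genuinely different route from the paper. The two agree on the first half: invert $\phi(L)$ under (iii), use the Brockwell--Davis geometric bound $|\psi_j|\le K\rho^j$, and get $\|y_t\|_q<\infty$ by Minkowski. For the $\tau$-mixing bound, however, the paper does not argue from first principles: it invokes Dedecker and Prieur (2005), Example 1, which for causal linear processes with $\beta$-mixing innovations delivers the ready-made inequality
\begin{equation*}
\tau_k \;\le\; \|\xi_0-\xi_0'\|_q\sum_{j\ge k}|\psi_j| \;+\; 2\sum_{j=0}^{k-1}|\psi_j|\int_0^{\beta_{k-j}}Q_{\xi_0}(u)\,\dx u,
\end{equation*}
then applies H\"older, $\int_0^{\beta}Q_{\xi_0}(u)\,\dx u\le\|\xi_0\|_q\,\beta^{1-1/q}$ --- the same conjugate-exponent step that produces your $b=1-1/q$ --- and evaluates the geometric convolution $\sum_{j<k}\rho^j a^{(k-j)b}$ in closed form. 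Your truncation-plus-Berbee argument is in effect a self-contained reconstruction of that cited bound, with one structural difference: the convolution in the citation sums the coupling cost over every lag $j<k$, which amounts to optimizing over all split points, whereas you commit to a single split $m\asymp k$, $g=k-m$. This costs you in the exponent: with $m=g=k/2$ you obtain $\tau_k\lesssim a^{bk/2}+\rho^{k/2}=(\sqrt a)^{bk}+(\sqrt\rho)^{k}$, and --- contrary to your closing remark --- $a^{bk/2}$ cannot be absorbed into $Ca^{bk}$, since it is the \emph{larger} quantity; the geometric bases genuinely degrade from $(a,\rho)$ to $(\sqrt a,\sqrt\rho)$. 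So you prove the proposition with ``for some bases in $(0,1)$'' rather than with the assumption's own $a$, which is formally a hair weaker than the stated claim (the paper's convolution keeps the base $a^{b}$ exactly, modulo a factor $k$ in the knife-edge case $\rho=a^{b}$ that it likewise sweeps into the constant). For the only use made of the proposition --- verifying the decay of $\tau$-mixing coefficients demanded by Assumption 3.1 --- the two conclusions are interchangeable. What your route buys is transparency and self-containedness: Berbee's coupling, the Lipschitz stability of $\tau$, the H\"older conversion, and the role of the $1/j$ normalization are all explicit rather than hidden inside a citation. What the paper's route buys is brevity and the sharper base.
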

\begin{proof}
	Under (iii) we can invert the autoregressive lag polynomial and obtain
	\begin{equation*}
		y_t = \sum_{j=0}^\infty\psi_j\xi_{t-j}
	\end{equation*}
	for some $(\psi_j)_{j=0}^\infty\in\ell_1$. Note that $(y_t)_{t\in\Z}$ has dependent innovations. Clearly, $(y_t)_{t\in\Z}$ is stationary provided that $(\xi_t)_{t\in\Z}$ is stationary, which is the case by the virtue of Assumption~\ref{as:ardl_midas}. Next, since
	\begin{equation*}
		\|y_t\|_q \leq \sum_{j=0}^\infty|\psi_j|\|\xi_0\|_q
	\end{equation*}
	and $\|\xi_0\|_q<\infty$ under (i), we verify that $\|y_t\|_q<\infty$. Let $(\xi_t')_{t\in\Z}$ be a stationary process distributed as $(\xi_{t})_{t\in\Z}$ and independent of $(\xi_t)_{t\leq 0}$. Then by \cite{dedecker2005new}, Example 1, the $\tau$-mixing coefficients of $(y_t)_{t\in\Z}$ satisfy
	\begin{equation*}
		\begin{aligned}
			\tau_k & \leq \|\xi_0 - \xi_0'\|_q\sum_{j\geq k}|\psi_j| + 2\sum_{j=0}^{k-1}|\psi_j|\int_0^{\beta_{k-j}}Q_{\xi_0}(u)\dx u \\
			& \leq 2\|\xi_0\|_q\sum_{j\geq k}|\psi_j| + 2\|\xi_0\|_q\sum_{j=0}^{k-1}|\psi_j|\beta_{k-j}^{1-1/q},
		\end{aligned}
	\end{equation*}
	where $(\beta_k)_{k\geq 1}$ are $\beta$-mixing coefficients of $(\xi_t)_{t\in\Z}$ and the second line follows by H\"{o}lder's inequality. \cite{brockwell1991time}, p.85 shows that there exist $c\in(0,1)$ and $K>0$ such that $|\psi_j|\leq Kc^{j}$. Therefore, 
	\begin{equation*}
		\sum_{j\geq k}|\psi_j| = O(c^k)
	\end{equation*}
	and under (ii) 
	\begin{equation*}
		\sum_{j=0}^{k-1}|\psi_j|\beta_{k-j}^{1-1/q} \leq CK\sum_{j=0}^{k-1}c^ja^{(k-j)(q-1)/q} \leq \begin{cases}
			CK\frac{a^{k(q-1)/q} - c^k}{1 - ca^{(1-q)/q}}	&\text{if } c\ne a^{(q-1)/q}, \\
			CKka^{k(q-1)/q} & \text{otherwise}.
		\end{cases}
	\end{equation*}
	This proves the second statement.
\end{proof}

\newpage

\section{Monte Carlo Simulations}

\begin{table}[!htbp]
	\centering
	\scriptsize
	\addtolength{\tabcolsep}{-4.5pt}
	\begin{tabular}{r cccccc p{0.1cm} cccccc}
		& \tiny{FLOW} & \tiny{STOCK} & \tiny{MIDDLE} & \tiny{LASSO-U} & \tiny{LASSO-M} & \tiny{SGL-M }& &\tiny{FLOW} & \tiny{STOCK} & \tiny{MIDDLE} & \tiny{LASSO-U} & \tiny{LASSO-M} & \tiny{SGL-M}\\\hline
		T&\multicolumn{6}{c}{\underline{Baseline scenario}}&&\multicolumn{6}{c}{\underline{\(\varepsilon_h\sim_{i.i.d.}\text{student-}t(5)\)}}\\
		50 & 1.920 & 2.086 & 2.145 & 1.848 & 1.731 & 1.537 &  & 2.081 & 2.427 & 2.702 & 2.399 & 2.038 & 1.702 \\ 
		& 0.039 & 0.042 & 0.043 & 0.038 & 0.036 & 0.031 &  & 0.042 & 0.053 & 0.062 & 0.056 & 0.050 & 0.041 \\ 
		100 & 1.423 & 1.670 & 1.791 & 1.670 & 1.517 & 1.320 &  & 1.532 & 1.933 & 2.152 & 1.831 & 1.523 & 1.315 \\ 
		& 0.029 & 0.033 & 0.036 & 0.034 & 0.031 & 0.027 &  & 0.030 & 0.039 & 0.044 & 0.037 & 0.031 & 0.027 \\ 
		200 & 1.292 & 1.502 & 1.645 & 1.407 & 1.268 & 1.170 &  & 1.410 & 1.741 & 2.017 & 1.493 & 1.278 & 1.194 \\ 
		& 0.026 & 0.030 & 0.033 & 0.028 & 0.026 & 0.024 &  & 0.029 & 0.035 & 0.043 & 0.031 & 0.026 & 0.024 \\ 
		&\multicolumn{6}{c}{\underline{High-frequency process: VAR(1)}}&&\multicolumn{6}{c}{\underline{Legendre degree $L=5$}}\\
		50 & 1.869 & 2.645 & 2.863 & 2.192 & 1.712 & 1.431 &  & 1.920 & 2.086 & 2.145 & 1.848 & 1.741 & 1.598 \\ 
		& 0.039 & 0.053 & 0.057 & 0.047 & 0.036 & 0.030 &  & 0.039 & 0.042 & 0.043 & 0.038 & 0.035 & 0.032 \\ 
		100 & 1.474 & 2.071 & 2.312 & 1.622 & 1.373 & 1.247 &  & 1.423 & 1.670 & 1.791 & 1.670 & 1.553 & 1.368 \\ 
		& 0.030 & 0.042 & 0.048 & 0.033 & 0.028 & 0.026 &  & 0.029 & 0.033 & 0.036 & 0.034 & 0.032 & 0.028 \\ 
		200 & 1.335 & 1.919 & 2.080 & 1.369 & 1.239 & 1.216 &  & 1.292 & 1.502 & 1.645 & 1.407 & 1.298 & 1.187 \\ 
		& 0.026 & 0.039 & 0.042 & 0.029 & 0.025 & 0.025 &  & 0.026 & 0.030 & 0.033 & 0.028 & 0.026 & 0.024 \\ 
		&\multicolumn{6}{c}{\underline{Legendre degree $L=10$}}&&\multicolumn{6}{c}{\underline{Low frequency noise level $\sigma^2_u$=5}}\\
		50 & 1.920 & 2.086 & 2.145 & 1.848 & 1.778 & 1.661 &  & 8.927 & 9.048 & 9.020 & 7.714 & 7.308 & 6.929 \\ 
		& 0.039 & 0.042 & 0.043 & 0.038 & 0.037 & 0.034 &  & 0.182 & 0.184 & 0.181 & 0.155 & 0.149 & 0.140 \\ 
		100 & 1.423 & 1.670 & 1.791 & 1.670 & 1.617 & 1.446 &  & 6.643 & 7.300 & 7.536 & 7.510 & 6.953 & 6.305 \\ 
		& 0.029 & 0.033 & 0.036 & 0.034 & 0.033 & 0.029 &  & 0.135 & 0.144 & 0.153 & 0.154 & 0.144 & 0.128 \\ 
		200 & 1.292 & 1.502 & 1.645 & 1.407 & 1.344 & 1.225 &  & 6.008 & 6.580 & 6.902 & 6.809 & 6.270 & 5.703 \\ 
		& 0.026 & 0.030 & 0.033 & 0.028 & 0.027 & 0.025 &  & 0.123 & 0.131 & 0.137 & 0.137 & 0.127 & 0.115 \\ 
		&\multicolumn{6}{c}{\underline{Half high-frequency lags}}&&\multicolumn{6}{c}{\underline{Number of covariates $p=50$}}\\
		50 & 2.256 & 2.117 & 2.505 & 1.885 & 1.816 & 1.623 &  &  &  &  & 1.902 & 1.766 & 1.621 \\ 
		& 0.047 & 0.044 & 0.050 & 0.038 & 0.037 & 0.033 &  &  &  &  & 0.038 & 0.035 & 0.032 \\ 
		100 & 1.655 & 1.685 & 2.079 & 1.679 & 1.595 & 1.370 &  & 3.593 & 3.277 & 3.318 & 1.754 & 1.599 & 1.403 \\ 
		& 0.033 & 0.033 & 0.041 & 0.034 & 0.032 & 0.027 &  & 0.075 & 0.068 & 0.068 & 0.035 & 0.032 & 0.028 \\ 
		200 & 1.528 & 1.539 & 2.005 & 1.365 & 1.355 & 1.202 &  & 1.863 & 1.933 & 2.019 & 1.524 & 1.364 & 1.189 \\ 
		& 0.031 & 0.030 & 0.040 & 0.027 & 0.027 & 0.024 &  & 0.038 & 0.039 & 0.039 & 0.030 & 0.027 & 0.024 \\ 
		&\multicolumn{6}{c}{\underline{Baseline scenario, $\rho = 0.7$}}&&\multicolumn{6}{c}{\underline{Number of covariates $p=50$, $\rho = 0.7$}}\\
		50 & 2.411 & 3.019 & 3.471 & 2.786 & 2.298 & 1.720 &  &  &  &  & 4.588 & 3.604 & 2.145 \\ 
		& 0.051 & 0.059 & 0.069 & 0.061 & 0.051 & 0.036 &  &  &  &  & 0.093 & 0.077 & 0.044 \\ 
		100 & 1.717 & 2.423 & 2.943 & 1.710 & 1.501 & 1.331 &  & 5.351 & 5.030 & 4.854 & 2.275 & 1.910 & 1.424 \\ 
		& 0.034 & 0.048 & 0.058 & 0.035 & 0.031 & 0.027 &  & 0.111 & 0.102 & 0.099 & 0.048 & 0.040 & 0.029 \\ 
		200 & 1.564 & 2.135 & 2.657 & 1.340 & 1.269 & 1.222 &  & 2.384 & 2.826 & 3.290 & 1.499 & 1.385 & 1.217 \\ 
		& 0.032 & 0.043 & 0.052 & 0.027 & 0.026 & 0.025 &  & 0.048 & 0.056 & 0.065 & 0.030 & 0.028 & 0.024 \\\hline 
	\end{tabular}
	\caption{Forecasting accuracy results. -- See Table~\ref{appendix:tab:nowcast}	\label{appendix:tab:forecast}} 
\end{table}

\clearpage

\begin{table}[!htbp]
	\centering
	\scriptsize
	\addtolength{\tabcolsep}{-4.5pt}
	\begin{tabular}{r cccccc p{0.1cm} cccccc}
		& \tiny{FLOW} & \tiny{STOCK} & \tiny{MIDDLE} & \tiny{LASSO-U} & \tiny{LASSO-M} & \tiny{SGL-M } && \tiny{FLOW} & \tiny{STOCK} & \tiny{MIDDLE} & \tiny{LASSO-U} & \tiny{LASSO-M} & \tiny{SGL-M }\\\hline
		T&\multicolumn{6}{c}{\underline{Baseline scenario}}&&\multicolumn{6}{c}{\underline{\(\varepsilon_h\sim_{i.i.d.}\text{student-}t(5)\)}}\\
		50 & 1.987 & 2.113 & 2.184 & 1.870 & 1.753 & 1.606 &  & 2.257 & 2.391 & 2.649 & 2.422 & 2.113 & 1.801 \\ 
		& 0.043 & 0.042 & 0.043 & 0.038 & 0.036 & 0.032 &  & 0.046 & 0.054 & 0.057 & 0.052 & 0.046 & 0.038 \\ 
		100 & 1.446 & 1.632 & 1.769 & 1.667 & 1.541 & 1.345 &  & 1.659 & 1.889 & 2.139 & 1.903 & 1.678 & 1.462 \\ 
		& 0.029 & 0.032 & 0.034 & 0.033 & 0.031 & 0.026 &  & 0.033 & 0.038 & 0.043 & 0.038 & 0.033 & 0.029 \\ 
		200 & 1.318 & 1.482 & 1.609 & 1.448 & 1.328 & 1.220 &  & 1.505 & 1.728 & 1.971 & 1.608 & 1.411 & 1.297 \\ 
		& 0.026 & 0.029 & 0.032 & 0.029 & 0.026 & 0.024 &  & 0.030 & 0.035 & 0.041 & 0.033 & 0.028 & 0.026 \\ 
		&\multicolumn{6}{c}{\underline{High-frequency process: VAR(1)}}&&\multicolumn{6}{c}{\underline{Legendre degree $L=5$}}\\
		50 & 2.086 & 2.418 & 2.856 & 2.254 & 1.817 & 1.503 &  & 1.987 & 2.113 & 2.184 & 1.870 & 1.767 & 1.635 \\ 
		& 0.044 & 0.050 & 0.057 & 0.049 & 0.039 & 0.031 &  & 0.043 & 0.042 & 0.043 & 0.038 & 0.036 & 0.033 \\ 
		100 & 1.642 & 1.935 & 2.365 & 1.690 & 1.459 & 1.328 &  & 1.446 & 1.632 & 1.769 & 1.667 & 1.564 & 1.389 \\ 
		& 0.033 & 0.039 & 0.048 & 0.035 & 0.030 & 0.028 &  & 0.029 & 0.032 & 0.034 & 0.033 & 0.031 & 0.027 \\ 
		200 & 1.475 & 1.771 & 2.247 & 1.442 & 1.312 & 1.268 &  & 1.318 & 1.482 & 1.609 & 1.448 & 1.351 & 1.230 \\ 
		& 0.029 & 0.036 & 0.046 & 0.029 & 0.027 & 0.026 &  & 0.026 & 0.029 & 0.032 & 0.029 & 0.027 & 0.024 \\ 
		&\multicolumn{6}{c}{\underline{Legendre degree $L=10$}}&&\multicolumn{6}{c}{\underline{Low frequency noise level $\sigma^2_u$=5}}\\ 
		50 & 1.987 & 2.113 & 2.184 & 1.870 & 1.799 & 1.698 &  & 9.121 & 9.208 & 9.167 & 7.700 & 7.397 & 7.087 \\ 
		& 0.043 & 0.042 & 0.043 & 0.038 & 0.037 & 0.034 &  & 0.193 & 0.184 & 0.181 & 0.155 & 0.150 & 0.143 \\ 
		100 & 1.446 & 1.632 & 1.769 & 1.667 & 1.606 & 1.454 &  & 6.646 & 7.149 & 7.433 & 7.454 & 6.911 & 6.222 \\ 
		& 0.029 & 0.032 & 0.034 & 0.033 & 0.032 & 0.029 &  & 0.135 & 0.141 & 0.144 & 0.149 & 0.138 & 0.123 \\ 
		200 & 1.318 & 1.482 & 1.609 & 1.448 & 1.400 & 1.267 &  & 6.052 & 6.482 & 6.777 & 6.835 & 6.345 & 5.780 \\ 
		& 0.026 & 0.029 & 0.032 & 0.029 & 0.028 & 0.025 &  & 0.122 & 0.127 & 0.134 & 0.137 & 0.127 & 0.114 \\ 
		&\multicolumn{6}{c}{\underline{Half high-frequency lags}}&&\multicolumn{6}{c}{\underline{Number of covariates $p=50$}}\\
		50 & 2.378 & 2.164 & 2.540 & 1.875 & 1.827 & 1.723 &  &  &  &  & 1.912 & 1.767 & 1.611 \\ 
		& 0.049 & 0.044 & 0.049 & 0.038 & 0.037 & 0.035 &  &  &  &  & 0.039 & 0.035 & 0.033 \\ 
		100 & 1.765 & 1.692 & 2.184 & 1.810 & 1.703 & 1.479 &  & 3.703 & 3.162 & 3.179 & 1.762 & 1.622 & 1.441 \\ 
		& 0.035 & 0.033 & 0.042 & 0.036 & 0.033 & 0.029 &  & 0.076 & 0.064 & 0.068 & 0.035 & 0.032 & 0.028 \\ 
		200 & 1.605 & 1.520 & 1.976 & 1.544 & 1.495 & 1.324 &  & 1.912 & 1.871 & 2.017 & 1.546 & 1.428 & 1.260 \\ 
		& 0.031 & 0.029 & 0.039 & 0.031 & 0.029 & 0.026 &  & 0.038 & 0.037 & 0.040 & 0.032 & 0.029 & 0.026 \\ 
		&\multicolumn{6}{c}{\underline{Baseline scenario, $\rho = 0.7$}}&&\multicolumn{6}{c}{\underline{Number of covariates $p=50$, $\rho = 0.7$}}\\ 
		50 & 2.606 & 2.872 & 3.618 & 2.927 & 2.599 & 1.884 &  &  &  &  & 4.606 & 3.816 & 2.242 \\ 
		& 0.055 & 0.058 & 0.073 & 0.063 & 0.054 & 0.039 &  &  &  &  & 0.096 & 0.083 & 0.046 \\ 
		100 & 1.837 & 2.154 & 3.020 & 1.783 & 1.596 & 1.412 &  & 5.154 & 4.373 & 4.764 & 2.373 & 2.161 & 1.520 \\ 
		& 0.037 & 0.043 & 0.059 & 0.037 & 0.032 & 0.028 &  & 0.102 & 0.089 & 0.100 & 0.051 & 0.046 & 0.030 \\ 
		200 & 1.661 & 1.919 & 2.753 & 1.389 & 1.341 & 1.287 &  & 2.622 & 2.555 & 3.364 & 1.563 & 1.500 & 1.315 \\ 
		& 0.033 & 0.038 & 0.056 & 0.027 & 0.027 & 0.026 &  & 0.052 & 0.051 & 0.067 & 0.032 & 0.031 & 0.027 \\\hline
	\end{tabular}
	\caption{ Nowcasting accuracy results. \\ 
		{\scriptsize The table reports simulation results for nowcasting accuracy. The baseline DGP (upper-left block) is with the low-frequency noise level $\sigma^2_u=1$, the degree of Legendre polynomial $L=3$, and Gaussian high-frequency noise. All remaining blocks report results for deviations from the baseline DGP. In the upper-right block, the noise term of high-frequency covariates is student-$t(5)$. Each block reports results for LASSO-U-MIDAS (LASSO-U), LASSO-MIDAS (LASSO-M), and sg-LASSO-MIDAS (SGL-M) (the last three columns). We also report results for aggregated predictive regressions with flow aggregation (FLOW), stock aggregation (STOCK), and taking the middle value (MIDDLE). We vary the sample size $T$ from 50 to 200. Each entry in the odd row is the average mean squared forecast error, while each even row is the simulation standard error.}    \label{appendix:tab:nowcast}} 
\end{table}

\clearpage

\begin{table}[!htbp]
	\centering
	\scriptsize
	\addtolength{\tabcolsep}{-4.5pt}	
	\begin{tabular}{r  ccccccccc}
		& LASSO-U & LASSO-M & SGL-M & LASSO-U & LASSO-M & SGL-M & LASSO-U & LASSO-M & SGL-M \\\hline
		&\multicolumn{3}{c}{T=50}  &\multicolumn{3}{c}{T=100} &\multicolumn{3}{c}{T=200}\\
		&\multicolumn{9}{c}{\underline{Baseline scenario}}\\
		$\mathrm{Beta}(1,3)$ & 2.028 & 1.867 & 1.312 & 2.005 & 1.518 & 0.733 & 1.947 & 0.809 & 0.388\\ 
		& 0.001 & 0.009 & 0.015 & 0.001 & 0.011 & 0.010 & 0.002 & 0.009 & 0.005 \\ 
		$\mathrm{Beta}(2,3)$ & 1.248 & 1.192 & 0.988 & 1.241 & 1.042 & 0.662 & 1.219 & 0.710 & 0.418 \\ 
		& 0.001 & 0.006 & 0.011 & 0.001 & 0.006 & 0.008 & 0.001 & 0.006 & 0.005 \\ 
		$\mathrm{Beta}(2,2)$ & 1.093 & 1.035 & 0.870 & 1.088 & 0.890 & 0.573 & 1.073 & 0.559 & 0.330  \\ 
		& 0.001 & 0.005 & 0.009 & 0.001 & 0.006 & 0.007 & 0.001 & 0.005 & 0.004 \\ 
		&\multicolumn{9}{c}{\underline{\(\varepsilon_h\sim_{i.i.d.}\text{student-}t(5)\)}}\\
		$\mathrm{Beta}(1,3)$ & 2.015 & 1.671 & 1.023 & 1.964 & 1.027 & 0.465 & 1.892 & 0.434 & 0.248 \\ 
		& 0.001 & 0.011 & 0.014 & 0.002 & 0.011 & 0.007 & 0.001 & 0.005 & 0.004 \\ 
		$\mathrm{Beta}(2,3)$ & 1.242 & 1.107 & 0.816 & 1.223 & 0.807 & 0.462 & 1.191 & 0.479 & 0.297 \\ 
		& 0.001 & 0.007 & 0.010 & 0.001 & 0.007 & 0.006 & 0.001 & 0.005 & 0.004 \\ 
		$\mathrm{Beta}(2,2)$ & 1.088 & 0.959 & 0.740 & 1.075 & 0.664 & 0.403 & 1.051 & 0.348 & 0.221 \\ 
		& 0.001 & 0.006 & 0.009 & 0.001 & 0.006 & 0.006 & 0.001 & 0.004 & 0.003 \\ 
		&\multicolumn{9}{c}{\underline{high-frequency process: VAR(1)}}\\
		$\mathrm{Beta}(1,3)$ & 1.944 & 1.353 & 0.960 & 1.909 & 0.905 & 0.657 & 1.871 & 0.562 & 0.485\\ 
		& 0.003 & 0.014 & 0.014 & 0.002 & 0.010 & 0.009 & 0.002 & 0.006 & 0.006 \\ 
		$\mathrm{Beta}(2,3)$ & 1.186 & 0.917 & 0.821 & 1.166 & 0.662 & 0.594 & 1.147 & 0.508 & 0.490 \\ 
		& 0.002 & 0.012 & 0.013 & 0.002 & 0.009 & 0.008 & 0.001 & 0.006 & 0.005 \\ 
		$\mathrm{Beta}(2,2)$ & 1.045 & 0.778 & 0.754 & 1.032 & 0.550 & 0.540 & 1.019 & 0.412 & 0.422 \\ 
		& 0.002 & 0.011 & 0.012 & 0.001 & 0.008 & 0.008 & 0.001 & 0.005 & 0.005 \\ 
		&\multicolumn{9}{c}{\underline{Legendre degree $L=5$}}\\
		$\mathrm{Beta}(1,3)$ & 2.028 & 1.907 & 1.487 & 2.005 & 1.619 & 0.909 & 1.947 & 0.915 & 0.436\\ 
		& 0.001 & 0.009 & 0.016 & 0.001 & 0.010 & 0.012 & 0.002 & 0.009 & 0.006 \\ 
		$\mathrm{Beta}(2,3)$ & 1.248 & 1.211 & 1.090 & 1.241 & 1.091 & 0.783 & 1.219 & 0.772 & 0.462 \\ 
		& 0.001 & 0.005 & 0.012 & 0.001 & 0.006 & 0.009 & 0.001 & 0.006 & 0.005 \\ 
		$\mathrm{Beta}(2,2)$ & 1.093 & 1.055 & 0.962 & 1.088 & 0.938 & 0.672 & 1.073 & 0.619 & 0.356 \\ 
		& 0.001 & 0.005 & 0.010 & 0.001 & 0.005 & 0.008 & 0.001 & 0.005 & 0.005 \\
		&\multicolumn{9}{c}{\underline{Baseline scenario, $\rho=0.7$}}\\
		$\mathrm{Beta}(1,3)$ & 1.901 & 1.035 & 0.526 & 1.839 & 0.388 & 0.243 & 1.805 & 0.196 & 0.166 \\ 
		& 0.003 & 0.012 & 0.009 & 0.003 & 0.005 & 0.004 & 0.002 & 0.002 & 0.002 \\ 
		$\mathrm{Beta}(2,3)$ & 1.174 & 0.742 & 0.492 & 1.139 & 0.428 & 0.301 & 1.117 & 0.310 & 0.252 \\ 
		& 0.002 & 0.009 & 0.008 & 0.002 & 0.005 & 0.004 & 0.002 & 0.003 & 0.003 \\ 
		$\mathrm{Beta}(2,2)$ & 1.031 & 0.594 & 0.396 & 1.002 & 0.291 & 0.212 & 0.983 & 0.190 & 0.153 \\ 
		& 0.002 & 0.007 & 0.006 & 0.002 & 0.003 & 0.003 & 0.002 & 0.002 & 0.002 \\ 
		\hline
	\end{tabular}
	\caption{Shape of weights estimation accuracy I. \\ {\scriptsize  The table reports results for shape of weights estimation accuracy for the first four DGPs of Tables \ref{appendix:tab:forecast}-\ref{appendix:tab:nowcast} using  LASSO-U, LASSO-M and SGL-M estimators for the weight functions $\mathrm{Beta}(1,3)$, $\mathrm{Beta}(2,3)$, and $\mathrm{Beta}(2,2)$ with sample size $T$ = 50, 100 and 200. Entries in odd rows are the average mean integrated squared error and in even rows  the simulation standard error.	\label{appendix:tab:shape_rec1}}}  
\end{table}
\begin{table}[!htbp]
	\centering
	\scriptsize
	\addtolength{\tabcolsep}{-4.5pt}
	\begin{tabular}{r  ccccccccc}
		& LASSO-U & LASSO-M & SGL-M & LASSO-U & LASSO-M & SGL-M & LASSO-U & LASSO-M & SGL-M \\\hline
		&\multicolumn{3}{c}{T=50}  &\multicolumn{3}{c}{T=100} &\multicolumn{3}{c}{T=200}\\
		&\multicolumn{9}{c}{\underline{Legendre degree $L=10$}}\\
		$\mathrm{Beta}(1,3)$ & 2.028 & 1.962 & 1.685 & 2.005 & 1.769 & 1.150 & 1.947 & 1.078 & 0.528 \\ 
		& 0.001 & 0.008 & 0.016 & 0.001 & 0.010 & 0.013 & 0.002 & 0.011 & 0.007 \\ 
		$\mathrm{Beta}(2,3)$ & 1.248 & 1.247 & 1.247 & 1.241 & 1.168 & 0.960 & 1.219 & 0.869 & 0.522 \\ 
		& 0.001 & 0.004 & 0.012 & 0.001 & 0.005 & 0.010 & 0.001 & 0.006 & 0.006 \\ 
		$\mathrm{Beta}(2,2)$ & 1.093 & 1.086 & 1.091 & 1.088 & 1.011 & 0.823 & 1.073 & 0.710 & 0.398 \\ 
		& 0.001 & 0.004 & 0.011 & 0.001 & 0.005 & 0.009 & 0.001 & 0.006 & 0.005 \\ 
		&\multicolumn{9}{c}{\underline{low frequency noise level $\sigma^2_u$=5}}\\
		$\mathrm{Beta}(1,3)$ & 2.038 & 1.941 & 1.588 & 2.025 & 1.816 & 1.109 & 1.983 & 1.436 & 0.563 \\ 
		& 0.001 & 0.009 & 0.019 & 0.001 & 0.009 & 0.014 & 0.002 & 0.010 & 0.009 \\ 
		$\mathrm{Beta}(2,3)$ & 1.252 & 1.215 & 1.144 & 1.246 & 1.160 & 0.878 & 1.230 & 0.996 & 0.529 \\ 
		& 0.001 & 0.006 & 0.015 & 0.001 & 0.005 & 0.010 & 0.001 & 0.006 & 0.007 \\ 
		$\mathrm{Beta}(2,2)$ & 1.096 & 1.065 & 1.022 & 1.092 & 1.007 & 0.773 & 1.080 & 0.845 & 0.460 \\ 
		& 0.001 & 0.006 & 0.013 & 0.001 & 0.005 & 0.009 & 0.001 & 0.005 & 0.007 \\ 
		&\multicolumn{9}{c}{\underline{Half high-frequency lags}}\\
		$\mathrm{Beta}(1,3)$ & 2.028 & 1.826 & 1.219 & 1.990 & 1.504 & 0.825 & 1.924 & 0.964 & 0.611  \\ 
		& 0.001 & 0.009 & 0.012 & 0.001 & 0.010 & 0.008 & 0.001 & 0.007 & 0.004 \\ 
		$\mathrm{Beta}(2,3)$ & 1.252 & 1.206 & 1.072 & 1.243 & 1.133 & 0.925 & 1.224 & 0.968 & 0.779 \\ 
		& 0.000 & 0.004 & 0.008 & 0.001 & 0.004 & 0.006 & 0.001 & 0.005 & 0.005 \\ 
		$\mathrm{Beta}(2,2)$ & 1.096 & 1.060 & 0.991 & 1.090 & 1.007 & 0.878 & 1.076 & 0.890 & 0.783 \\ 
		& 0.000 & 0.004 & 0.008 & 0.000 & 0.004 & 0.006 & 0.000 & 0.004 & 0.004 \\ 
		&\multicolumn{9}{c}{\underline{Number of covariates $p=50$}}\\
		$\mathrm{Beta}(1,3)$ & 2.044 & 1.998 & 1.586 & 2.032 & 1.867 & 1.061 & 1.999 & 1.285 & 0.512 \\ 
		& 0.000 & 0.004 & 0.012 & 0.001 & 0.007 & 0.011 & 0.001 & 0.009 & 0.006 \\ 
		$\mathrm{Beta}(2,3)$ & 1.255 & 1.238 & 1.099 & 1.252 & 1.191 & 0.875 & 1.243 & 0.963 & 0.533 \\ 
		& 0.000 & 0.002 & 0.007 & 0.000 & 0.004 & 0.007 & 0.001 & 0.005 & 0.005 \\ 
		$\mathrm{Beta}(2,2)$ & 1.099 & 1.083 & 0.979 & 1.097 & 1.036 & 0.782 & 1.091 & 0.804 & 0.467 \\ 
		& 0.000 & 0.002 & 0.007 & 0.000 & 0.003 & 0.006 & 0.000 & 0.005 & 0.005 \\ 
		&\multicolumn{9}{c}{\underline{Number of covariates $p=50$, $\rho=0.7$}}\\ 
		$\mathrm{Beta}(1,3)$ & 1.996 & 1.726 & 0.878 & 1.902 & 0.839 & 0.334 & 1.835 & 0.314 & 0.188 \\ 
		& 0.002 & 0.010 & 0.011 & 0.002 & 0.009 & 0.005 & 0.002 & 0.003 & 0.002 \\ 
		$\mathrm{Beta}(2,3)$ & 1.229 & 1.071 & 0.692 & 1.180 & 0.648 & 0.344 & 1.138 & 0.411 & 0.248 \\ 
		& 0.001 & 0.006 & 0.008 & 0.002 & 0.006 & 0.004 & 0.002 & 0.003 & 0.003 \\ 
		$\mathrm{Beta}(2,2)$ & 1.078 & 0.925 & 0.610 & 1.040 & 0.495 & 0.276 & 1.003 & 0.272 & 0.167 \\ 
		& 0.001 & 0.005 & 0.007 & 0.001 & 0.005 & 0.004 & 0.001 & 0.002 & 0.002 \\ 
		\hline
	\end{tabular}
	\caption{Shape of weights estimation accuracy II. -- See Table \ref{appendix:tab:shape_rec1}	\label{appendix:tab:shape_rec2}}  
\end{table}


\begin{figure}[!htbp]
	\centering
	\begin{subfigure}[b]{0.25\linewidth}
		\caption{\scriptsize LASSO-U-MIDAS}
		\centering\includegraphics[width=4cm]{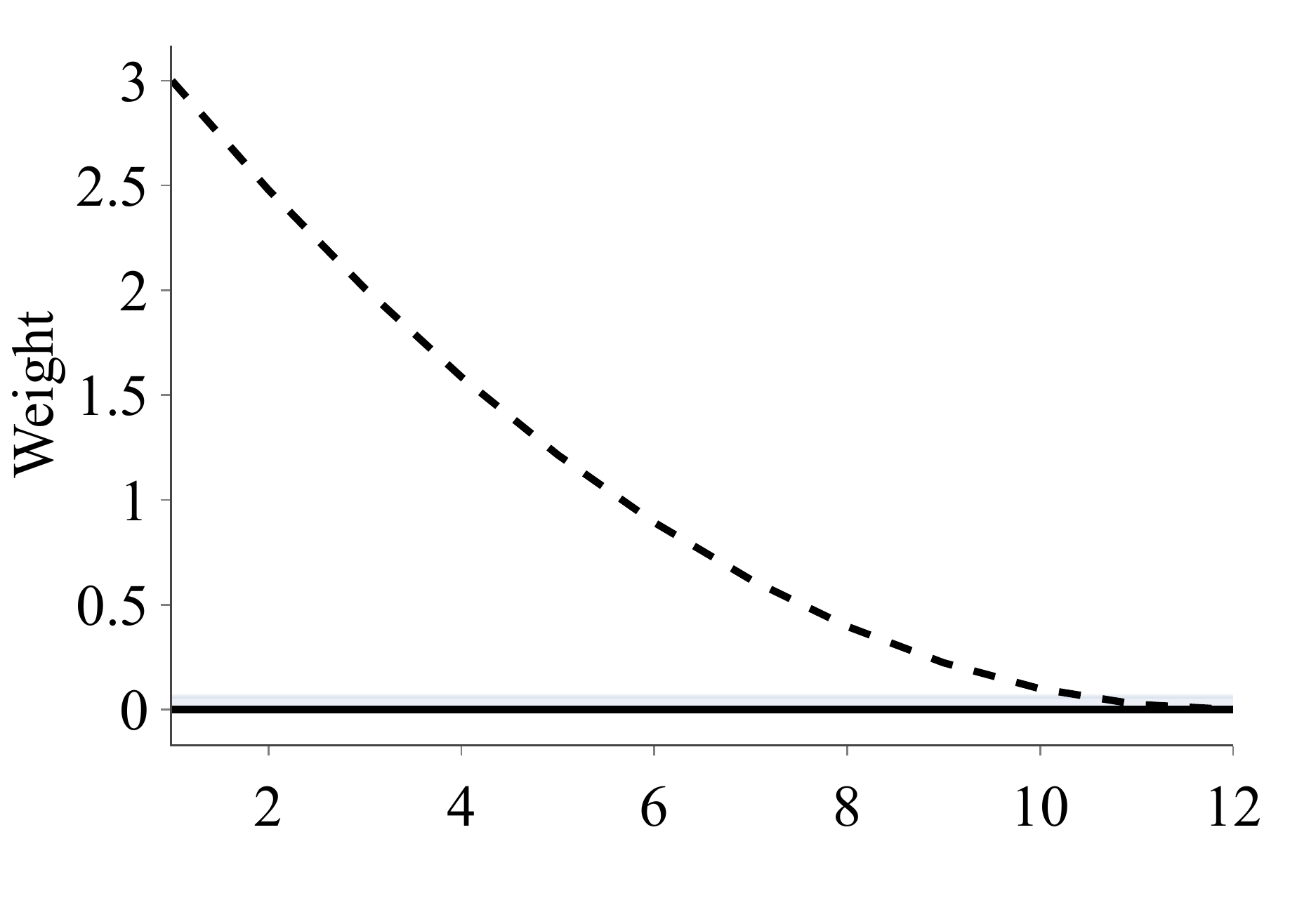}
	\end{subfigure}
	\begin{subfigure}[b]{0.25\linewidth}
		\caption{\scriptsize LASSO-MIDAS}
		\centering\includegraphics[width=4cm]{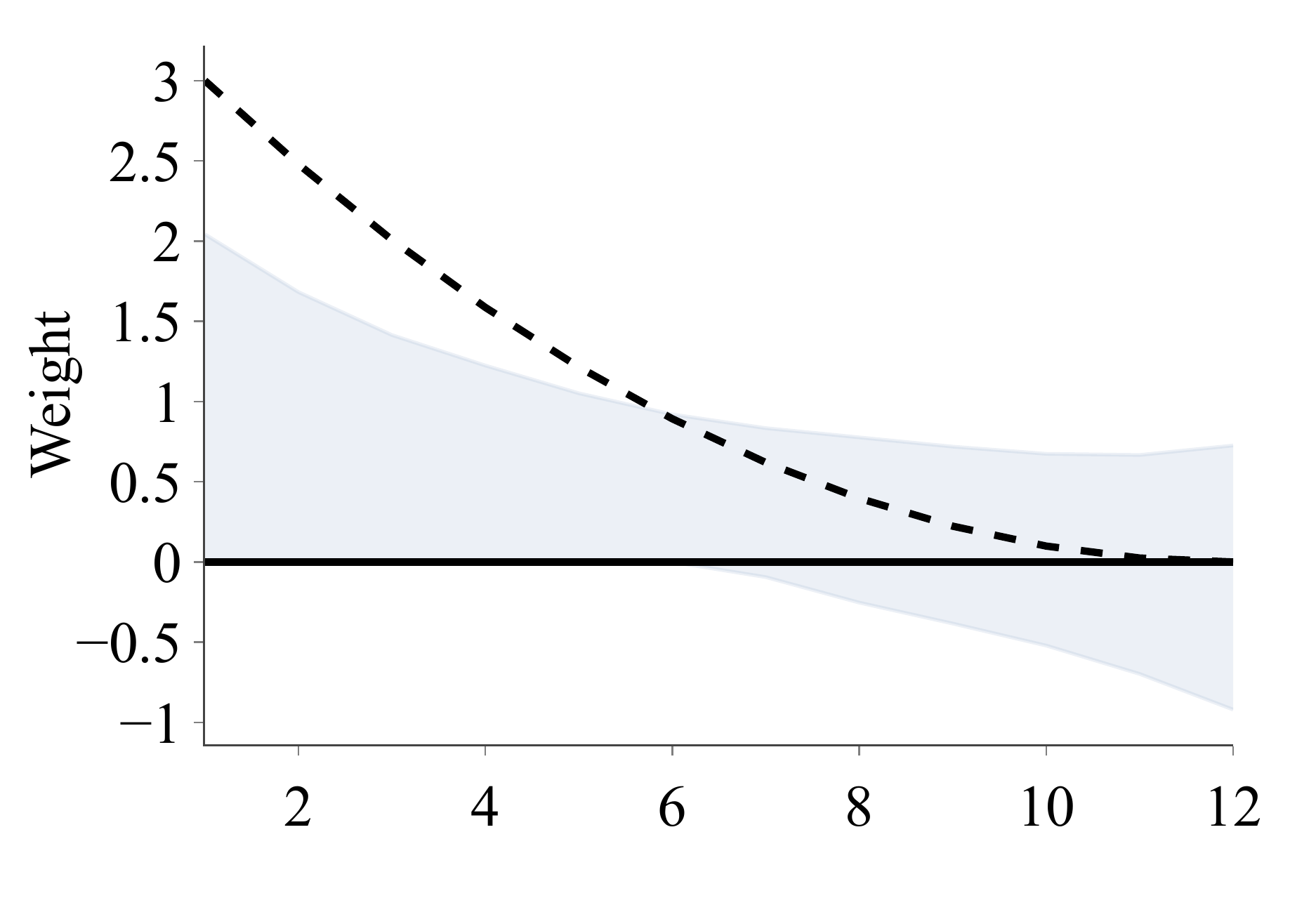}
	\end{subfigure}
	\begin{subfigure}[b]{0.25\linewidth}
		\caption{\scriptsize sg-LASSO-MIDAS}
		\centering\includegraphics[width=4cm]{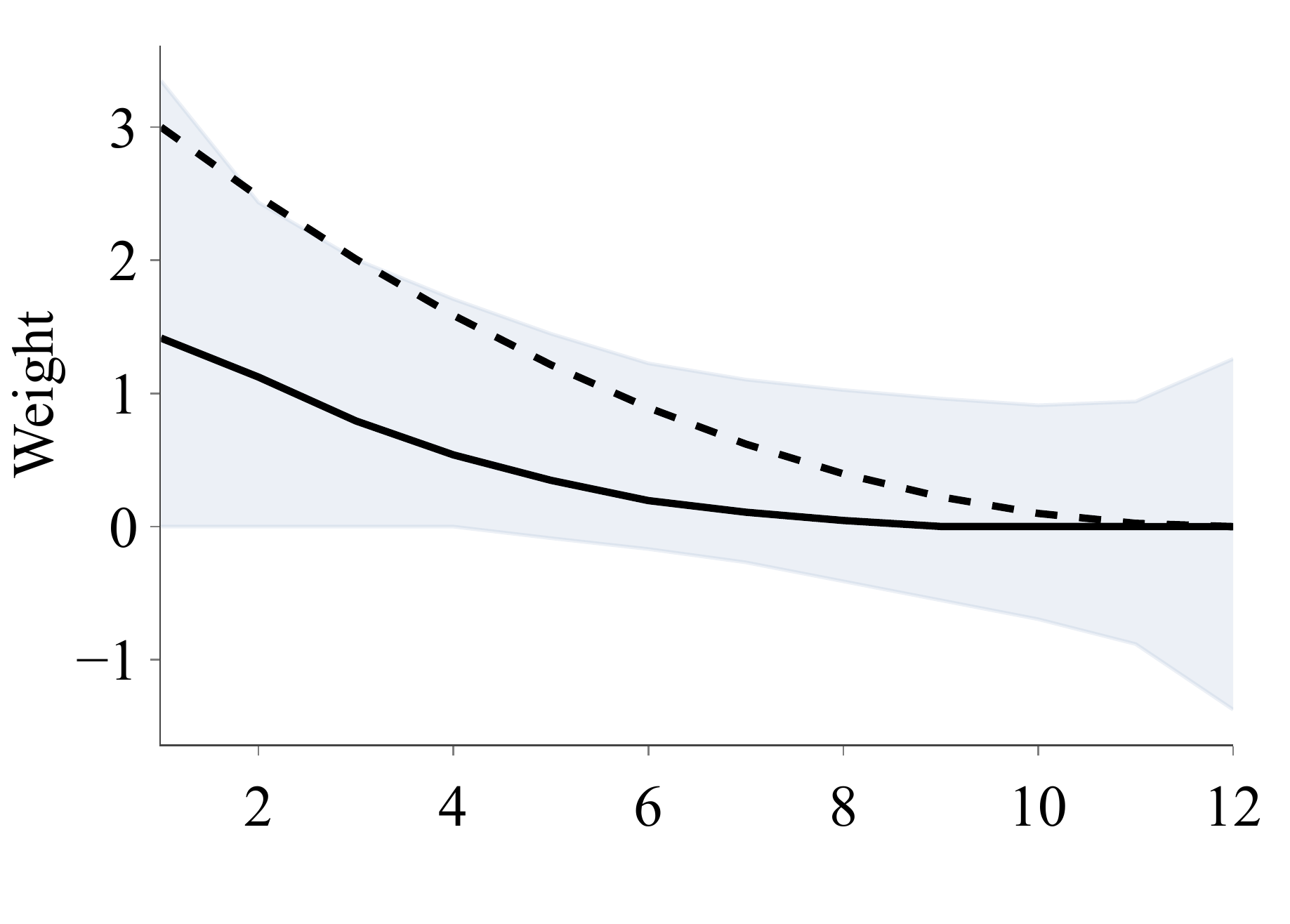}
	\end{subfigure}
	
	\begin{subfigure}[b]{0.25\linewidth}
		\caption{\scriptsize LASSO-U-MIDAS}
		\centering\includegraphics[width=4cm]{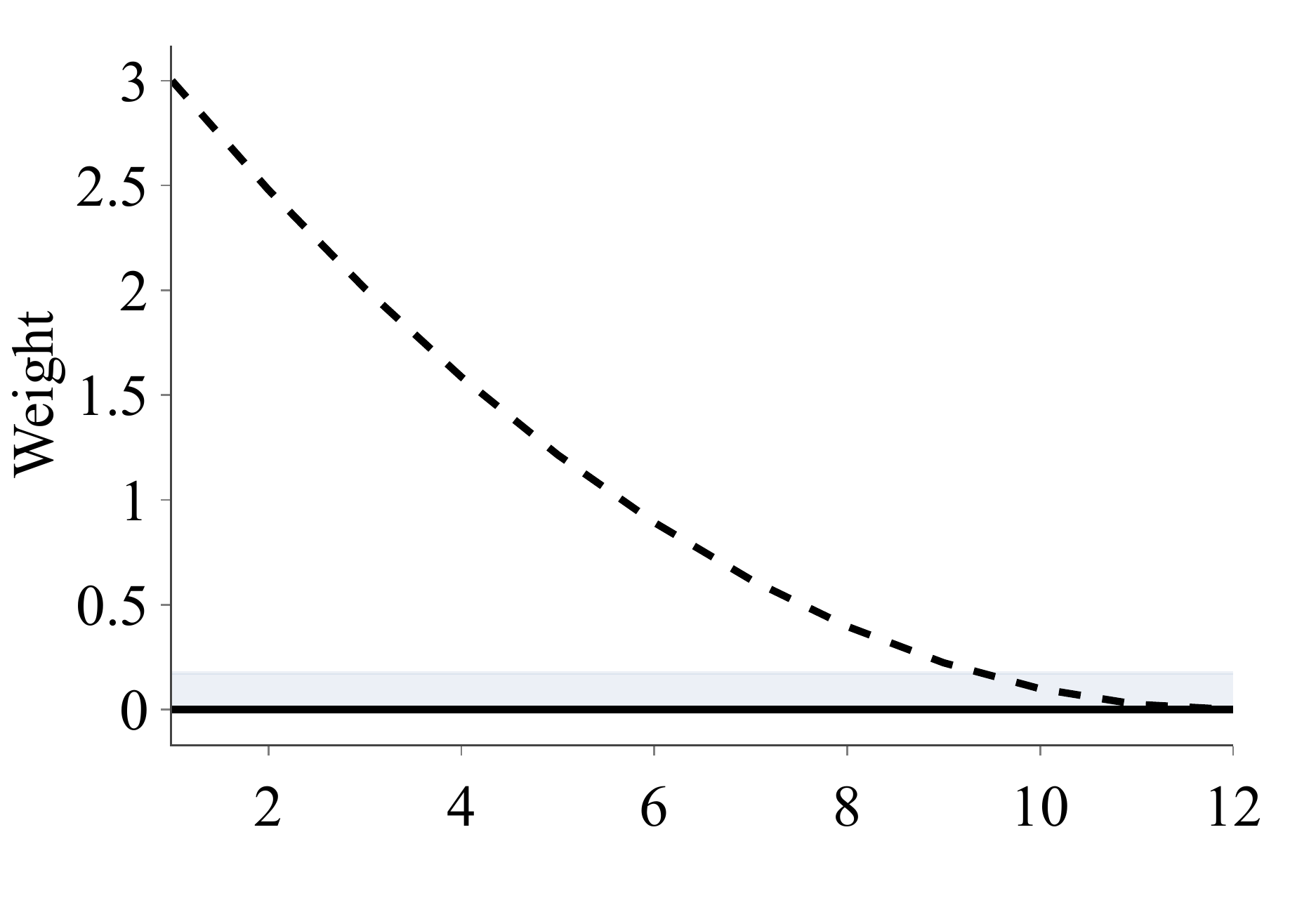}
	\end{subfigure}
	\begin{subfigure}[b]{0.25\linewidth}
		\caption{\scriptsize LASSO-MIDAS}
		\centering\includegraphics[width=4cm]{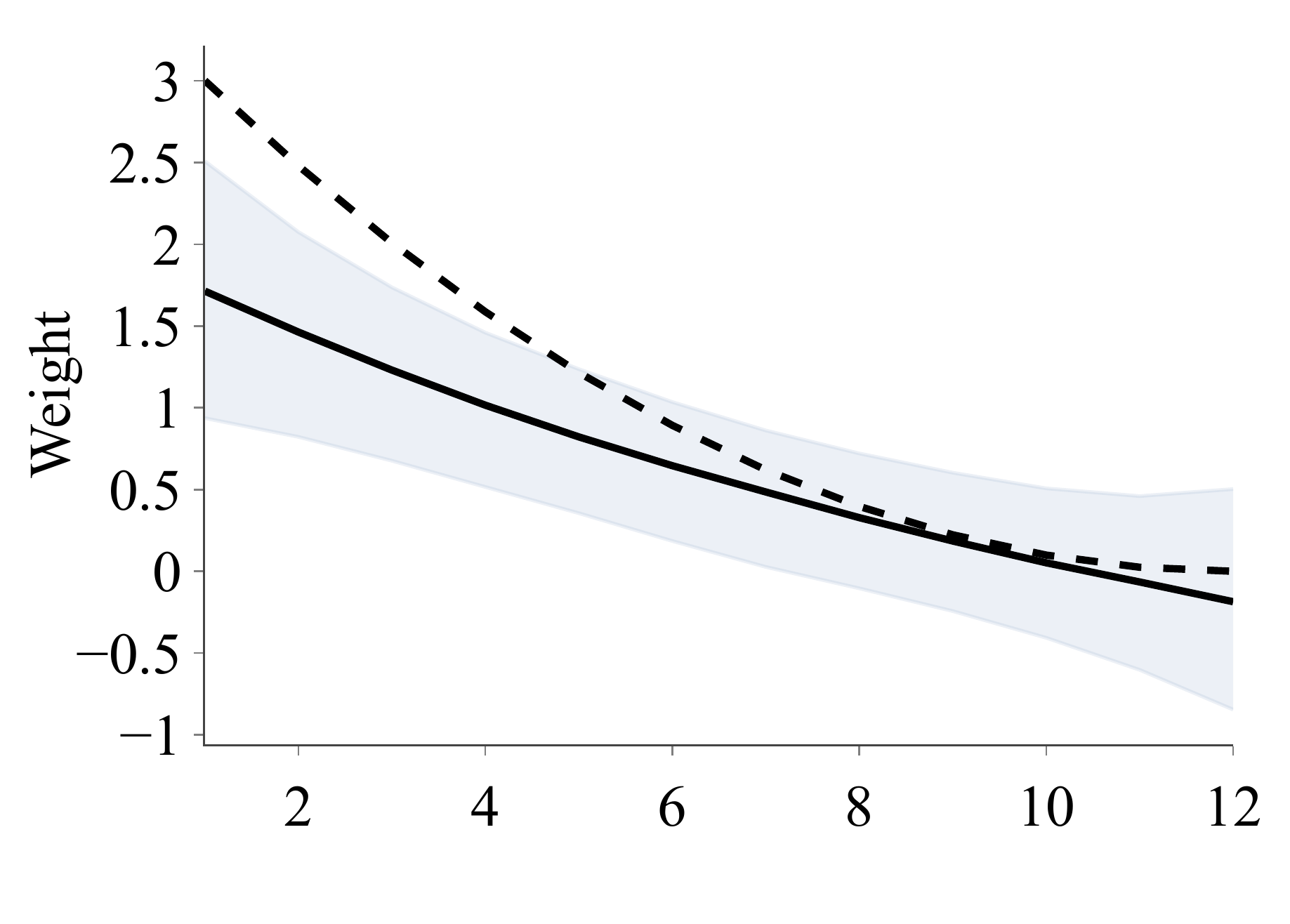}
	\end{subfigure}
	\begin{subfigure}[b]{0.25\linewidth}
		\caption{\scriptsize sg-LASSO-MIDAS}
		\centering\includegraphics[width=4cm]{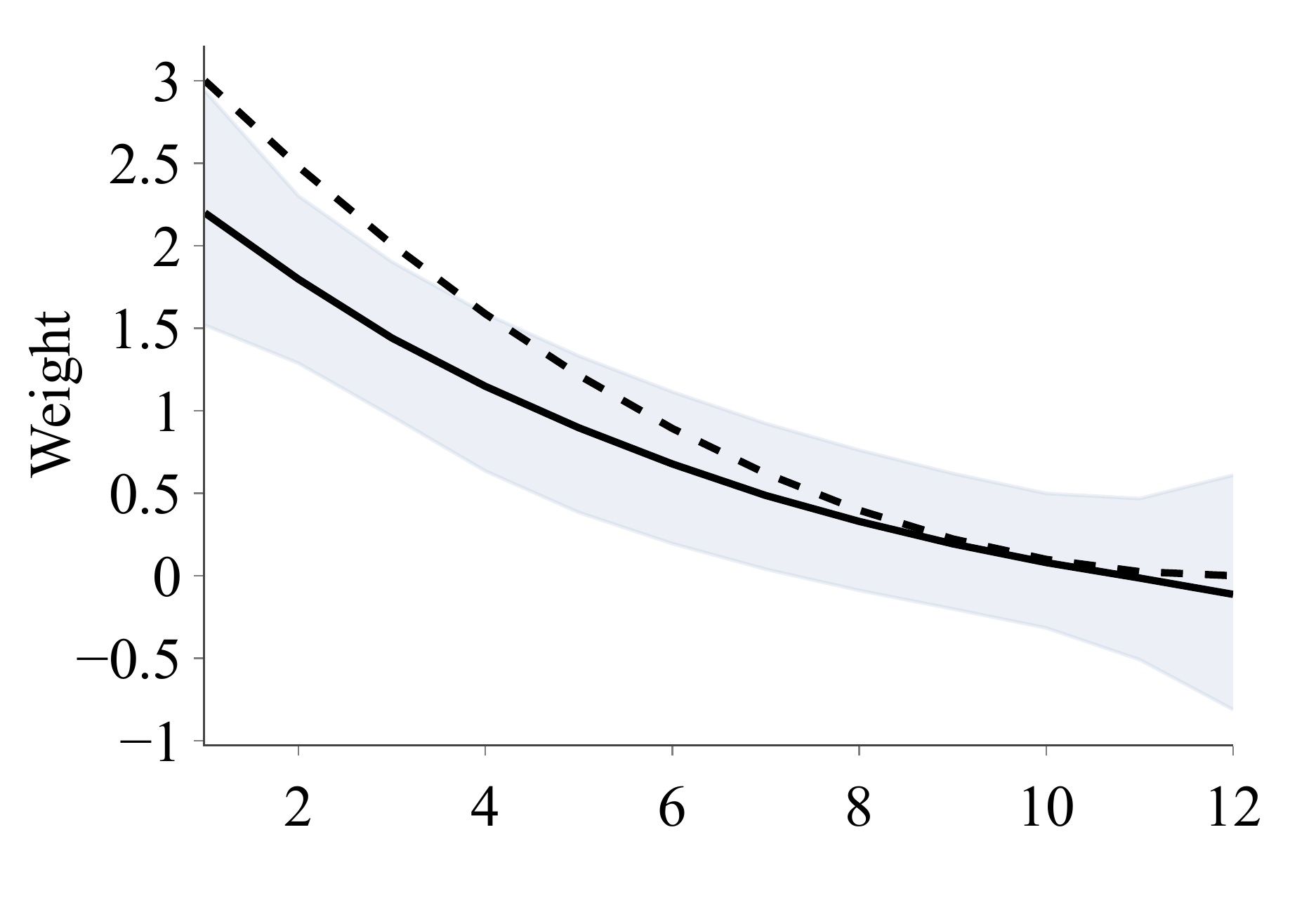}
	\end{subfigure}
	\caption{\footnotesize The figure shows the fitted Beta(1,3) weights. We plot the estimated weights for the LASSO-U-MIDAS, LASSO-MIDAS, and sg-LASSO-MIDAS estimators for the baseline DGP scenario. The first row plots weights for the sample size $T=50$, the second row plots weights for the sample size $T=200$. The black solid line is the median estimate of the weights function, the black dashed line is the population weight function, and the gray area is the 90\% confidence interval.}
	\label{appendix:fig:weights_beta_1}			
\end{figure}

\begin{figure}[!htbp]
	\centering
	\begin{subfigure}[b]{0.25\linewidth}
		\caption{\scriptsize LASSO-U-MIDAS}
		\centering\includegraphics[width=4cm]{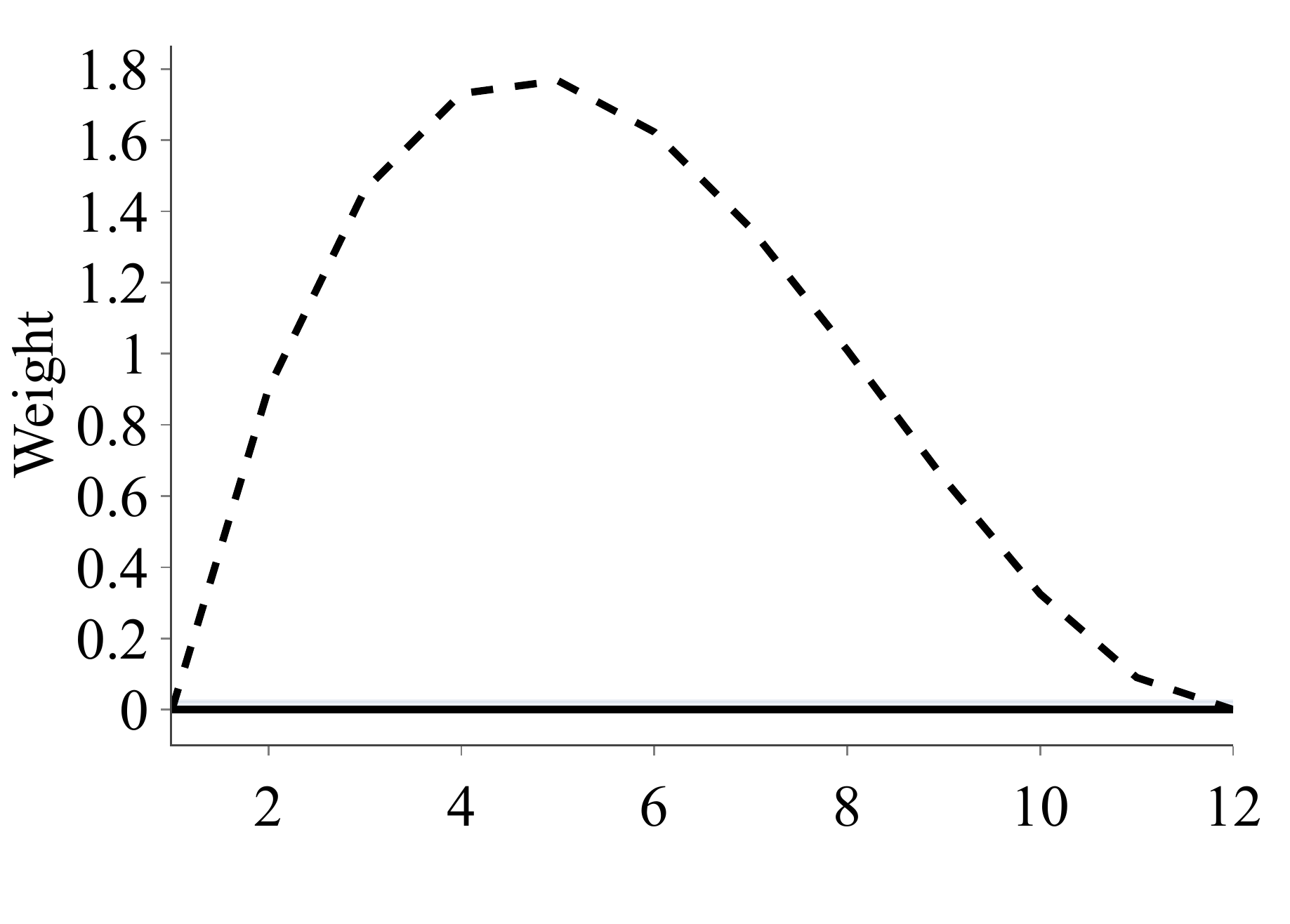}
	\end{subfigure}
	\begin{subfigure}[b]{0.25\linewidth}
		\caption{\scriptsize LASSO-MIDAS}
		\centering\includegraphics[width=4cm]{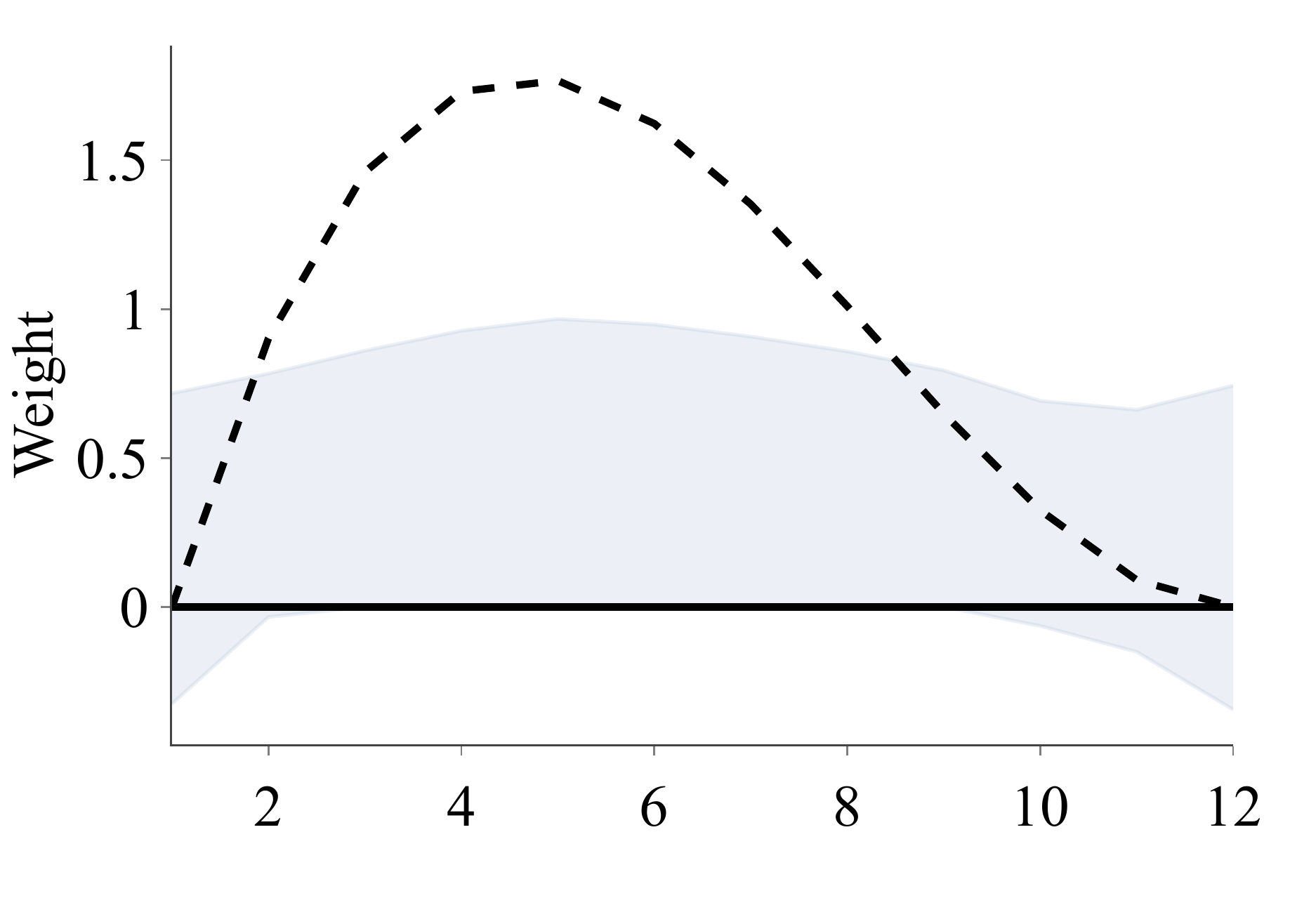}
	\end{subfigure}
	\begin{subfigure}[b]{0.25\linewidth}
		\caption{\scriptsize sg-LASSO-MIDAS}
		\centering\includegraphics[width=4cm]{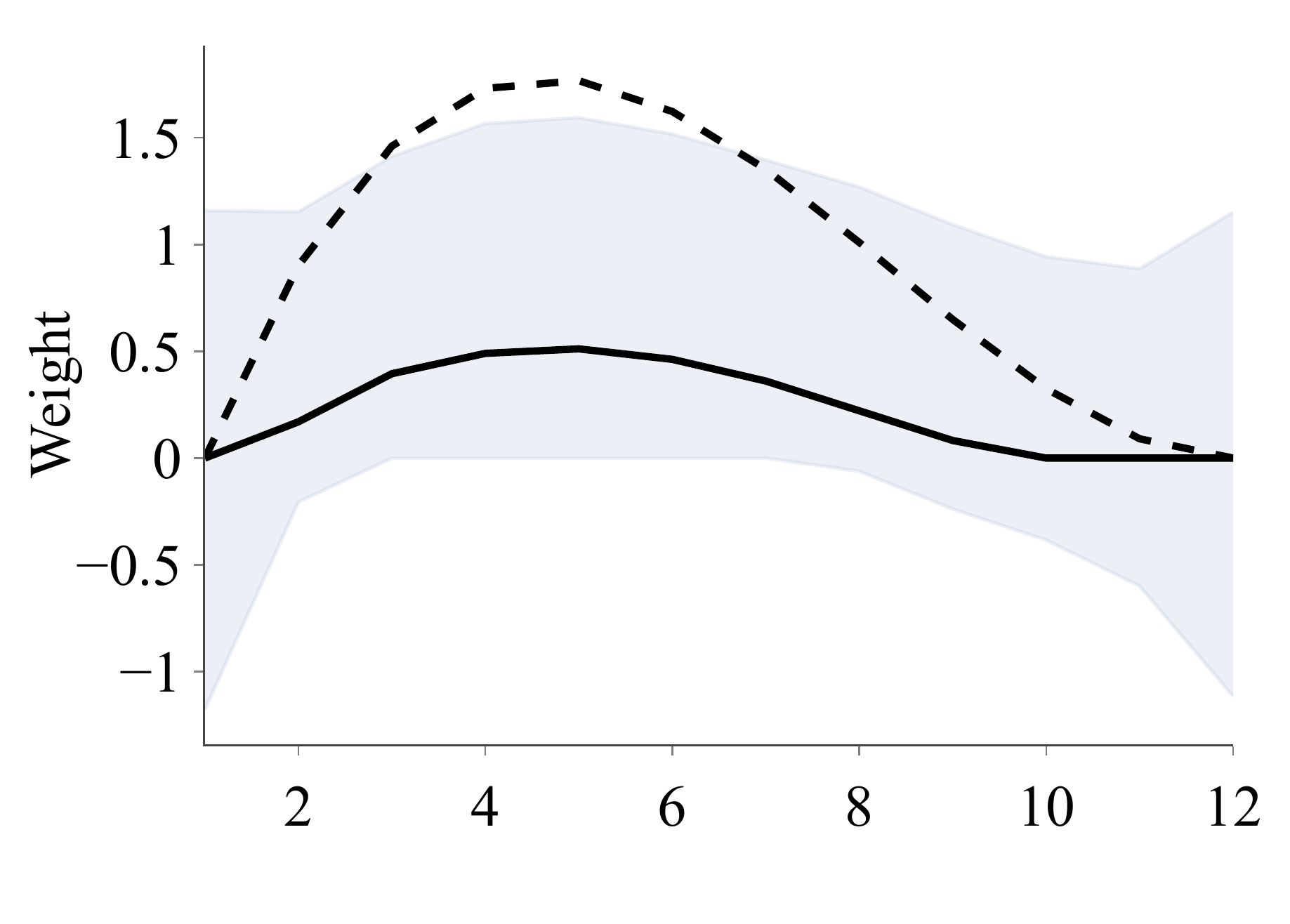}
	\end{subfigure}	
	
	\begin{subfigure}[b]{0.25\linewidth}
		\caption{\scriptsize LASSO-U-MIDAS}
		\centering\includegraphics[width=4cm]{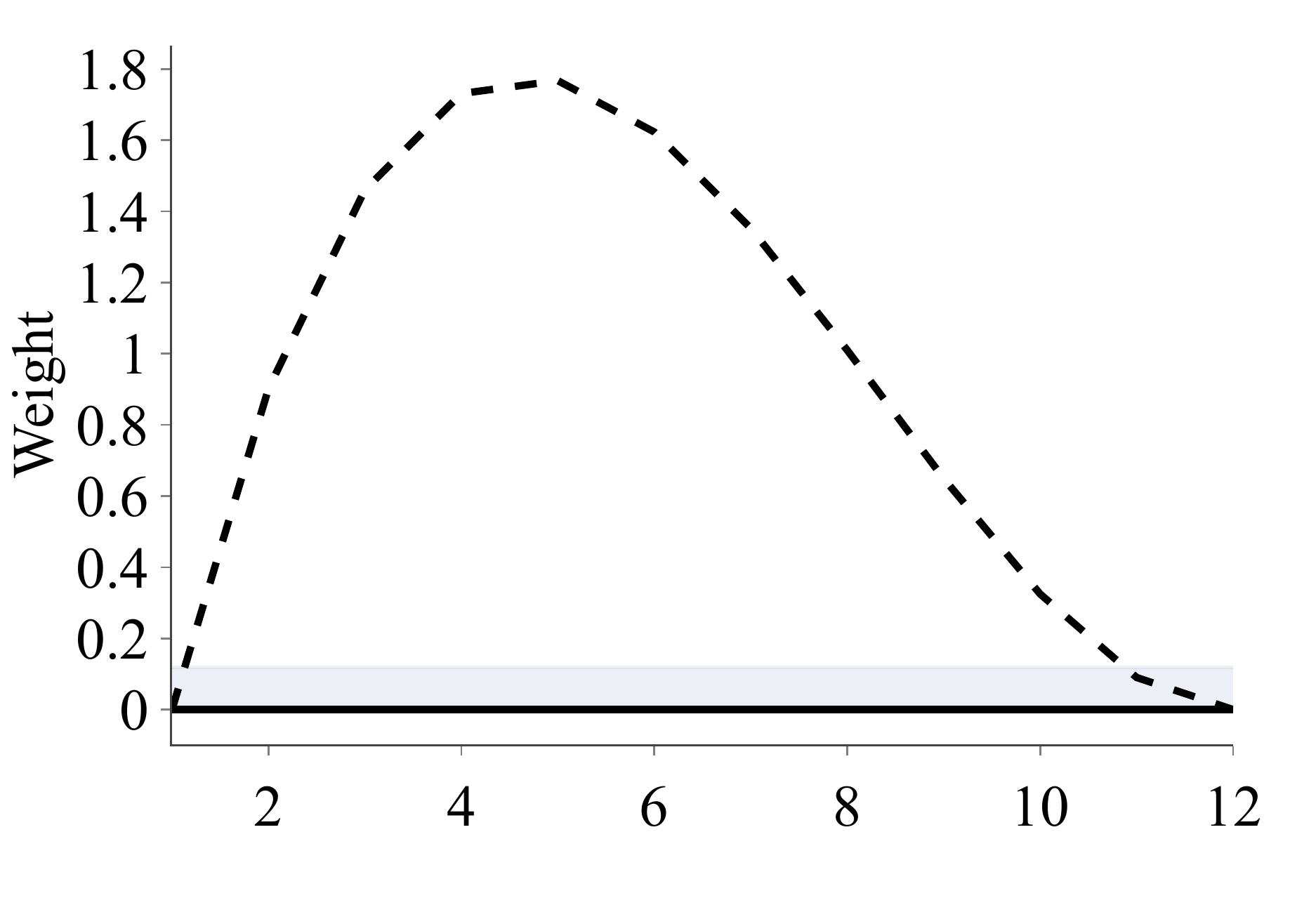}
	\end{subfigure}
	\begin{subfigure}[b]{0.25\linewidth}
		\caption{\scriptsize LASSO-MIDAS}
		\centering\includegraphics[width=4cm]{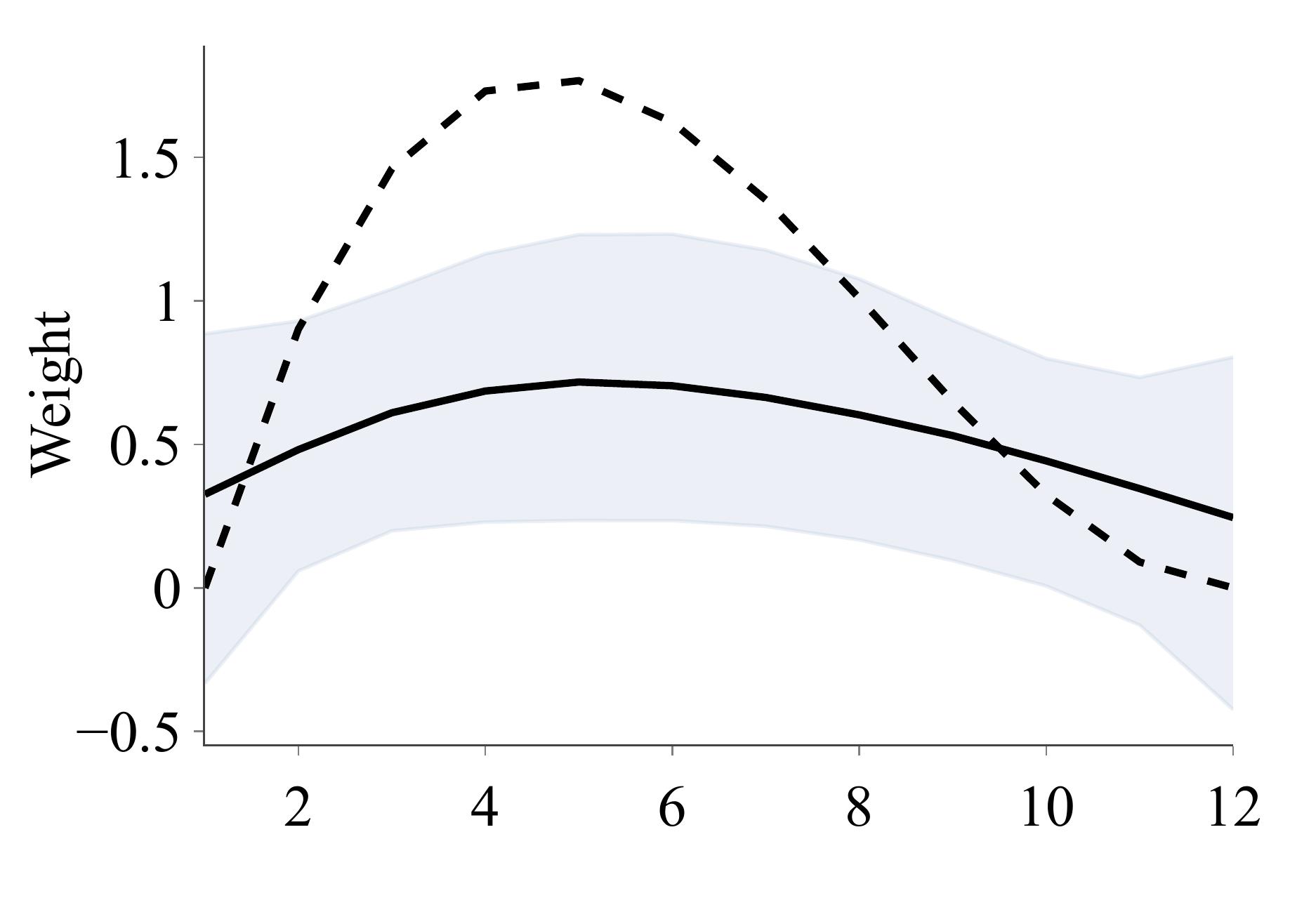}
	\end{subfigure}
	\begin{subfigure}[b]{0.25\linewidth}
		\caption{\scriptsize sg-LASSO-MIDAS}
		\centering\includegraphics[width=4cm]{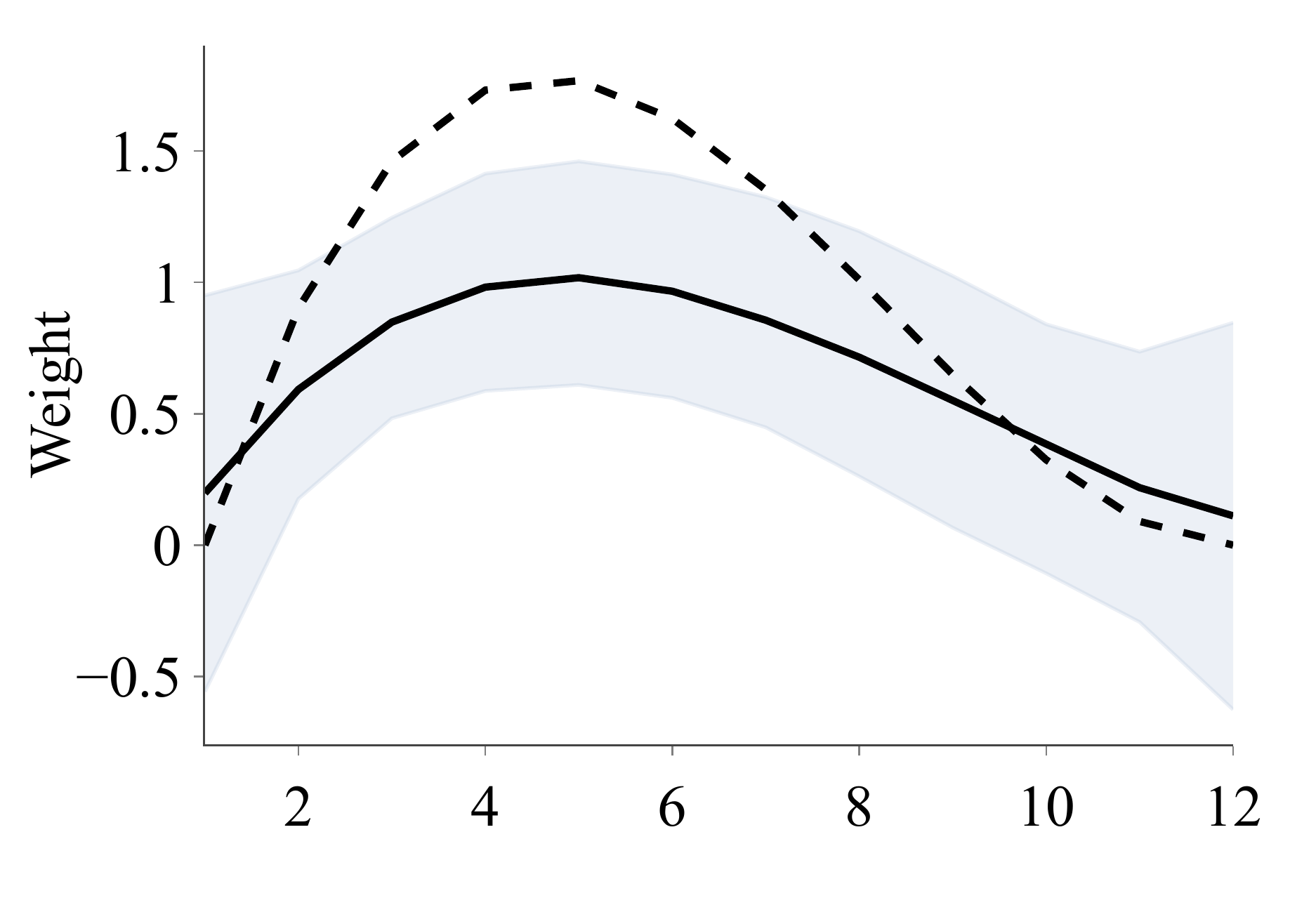}
	\end{subfigure}
	\caption{\footnotesize The figure shows the fitted Beta(2,3) weights. We plot the estimated weights for the LASSO-U-MIDAS, LASSO-MIDAS, and sg-LASSO-MIDAS estimators for the baseline DGP scenario. The first row plots weights for the sample size $T=50$, the second row plots weights for the sample size $T=200$. The black solid line is the median estimate of the weights function, the black dashed line is the population weight function, and the gray area is the 90\% confidence interval.}
	\label{appendix:fig:weights_beta_2}			
\end{figure}

\begin{figure}[!htbp]
	\centering
	\begin{subfigure}[b]{0.25\linewidth}
		\caption{\scriptsize LASSO-U-MIDAS}
		\centering\includegraphics[width=4cm]{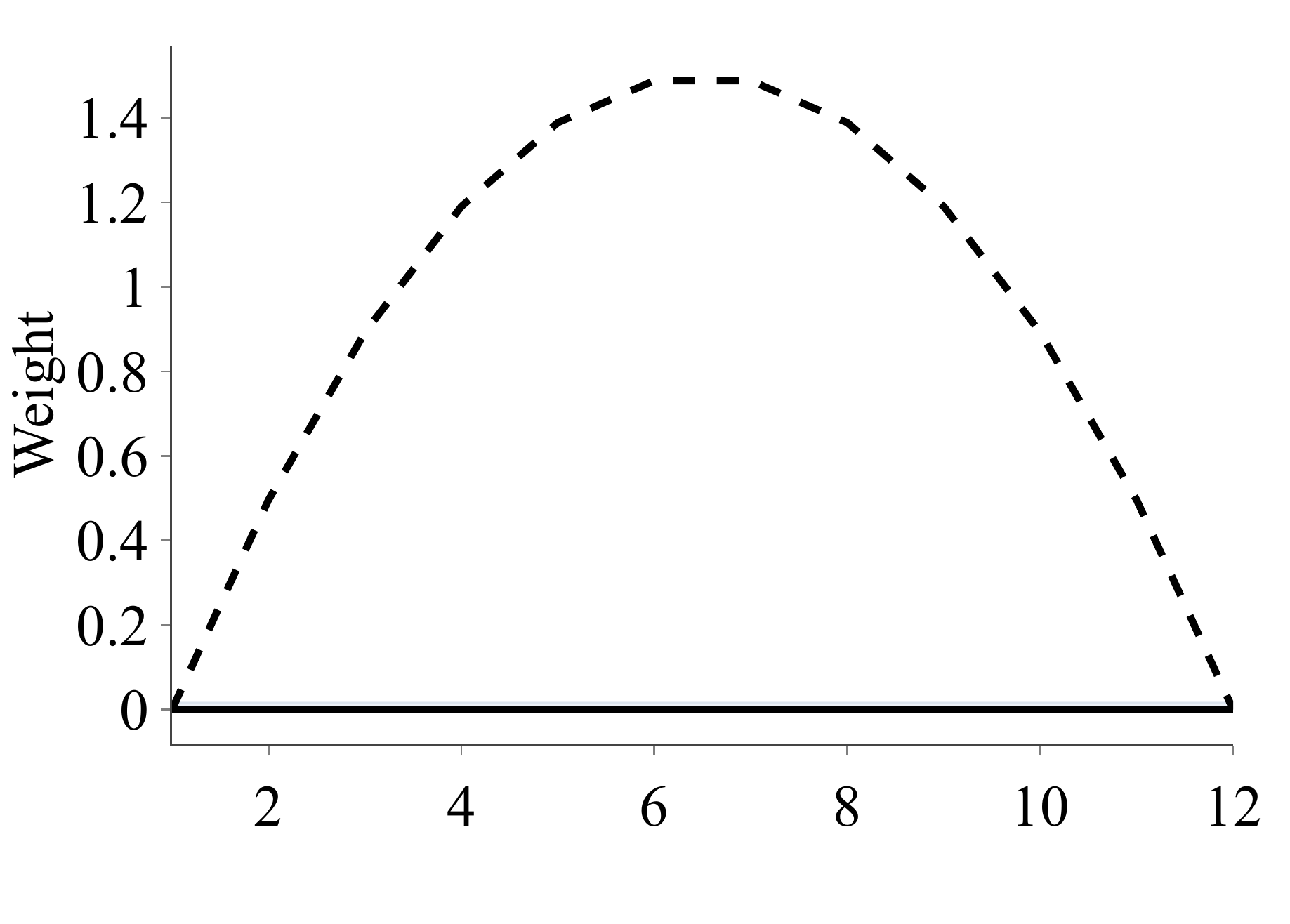}
	\end{subfigure}
	\begin{subfigure}[b]{0.25\linewidth}
		\caption{\scriptsize LASSO-MIDAS}
		\centering\includegraphics[width=4cm]{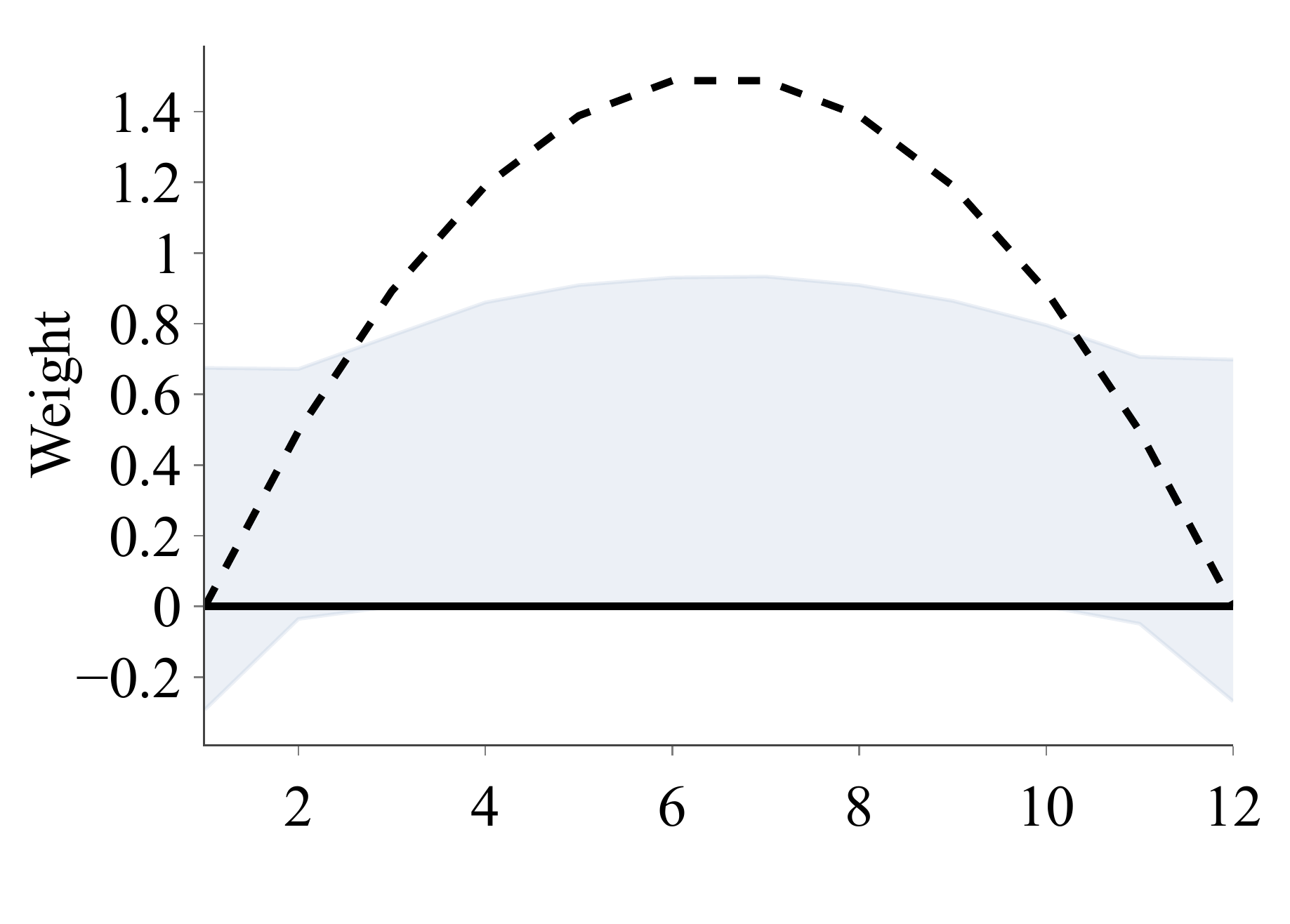}
	\end{subfigure}
	\begin{subfigure}[b]{0.25\linewidth}
		\caption{\scriptsize sg-LASSO-MIDAS}
		\centering\includegraphics[width=4cm]{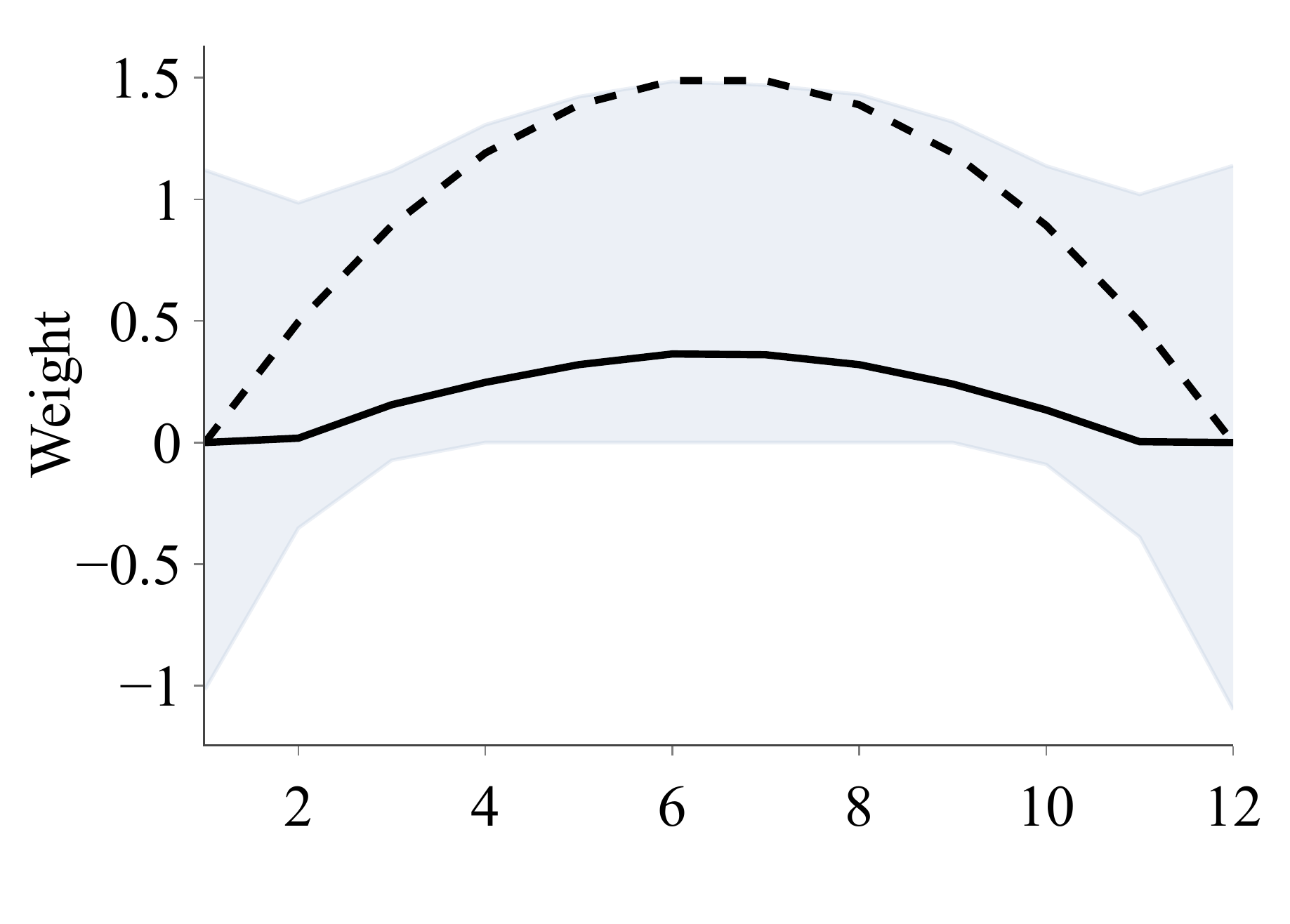}
	\end{subfigure}	
	
	\begin{subfigure}[b]{0.25\linewidth}
		\caption{\scriptsize LASSO-U-MIDAS}
		\centering\includegraphics[width=4cm]{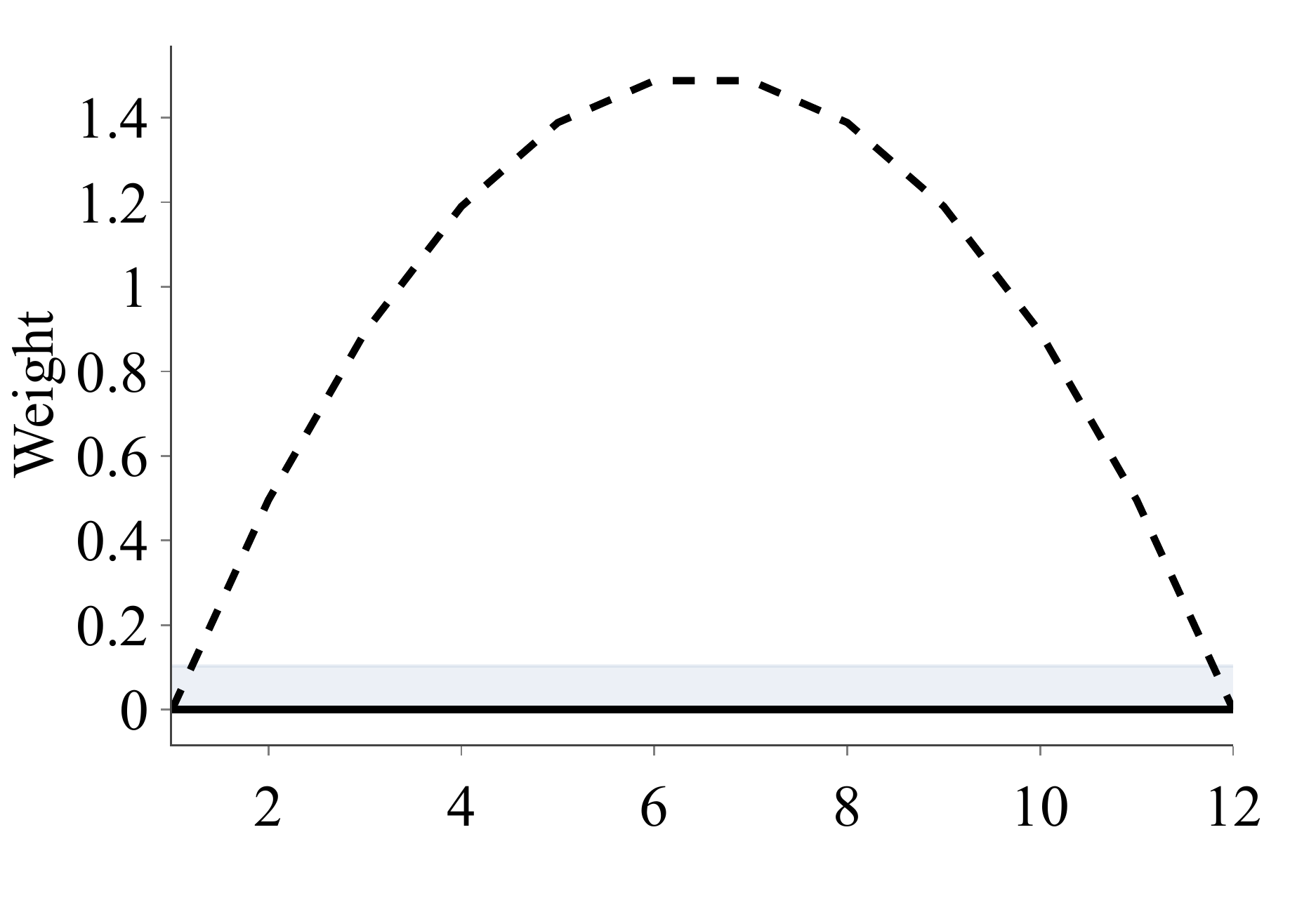}
	\end{subfigure}
	\begin{subfigure}[b]{0.25\linewidth}
		\caption{\scriptsize LASSO-MIDAS}
		\centering\includegraphics[width=4cm]{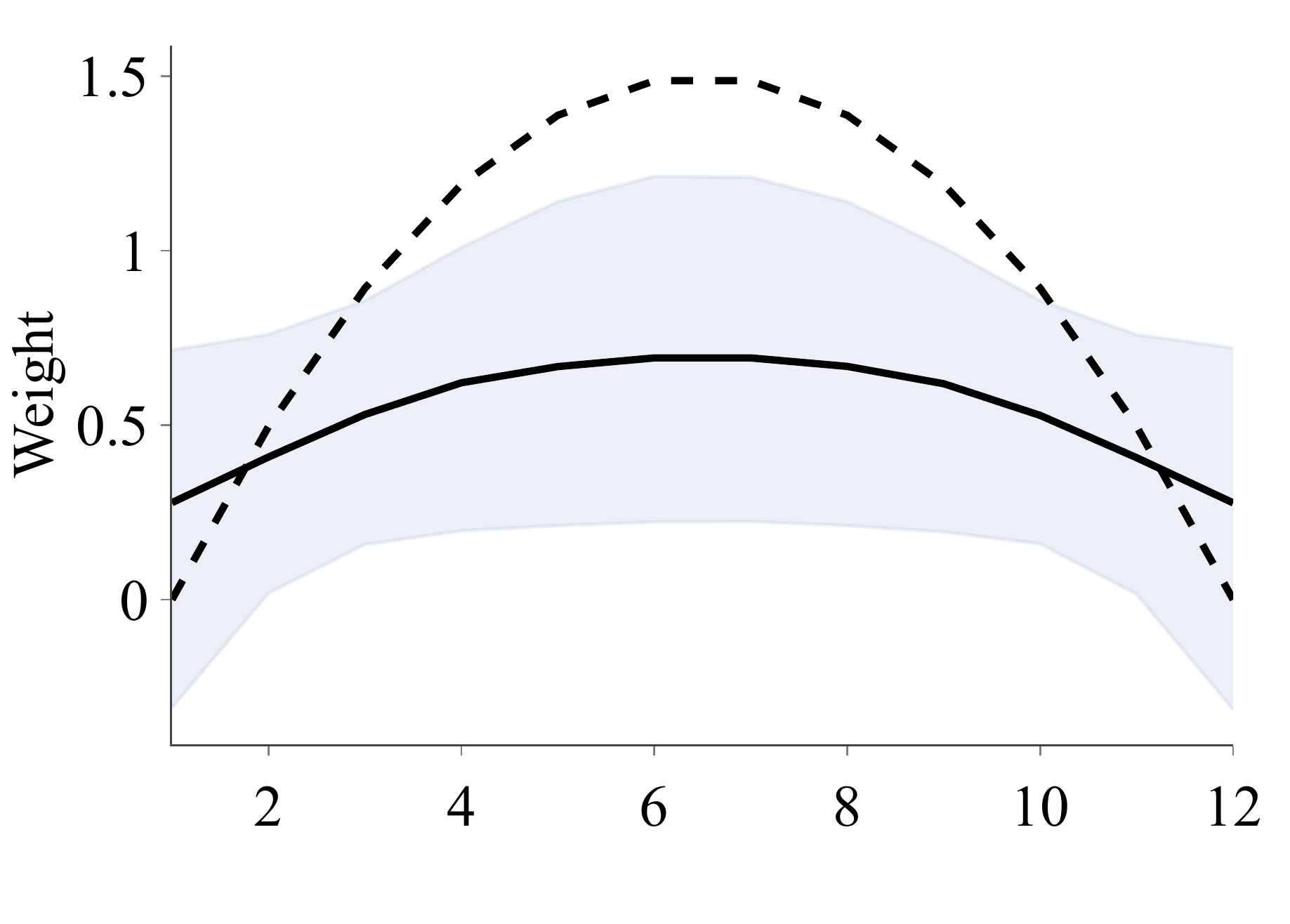}
	\end{subfigure}
	\begin{subfigure}[b]{0.25\linewidth}
		\caption{\scriptsize sg-LASSO-MIDAS}
		\centering\includegraphics[width=4cm]{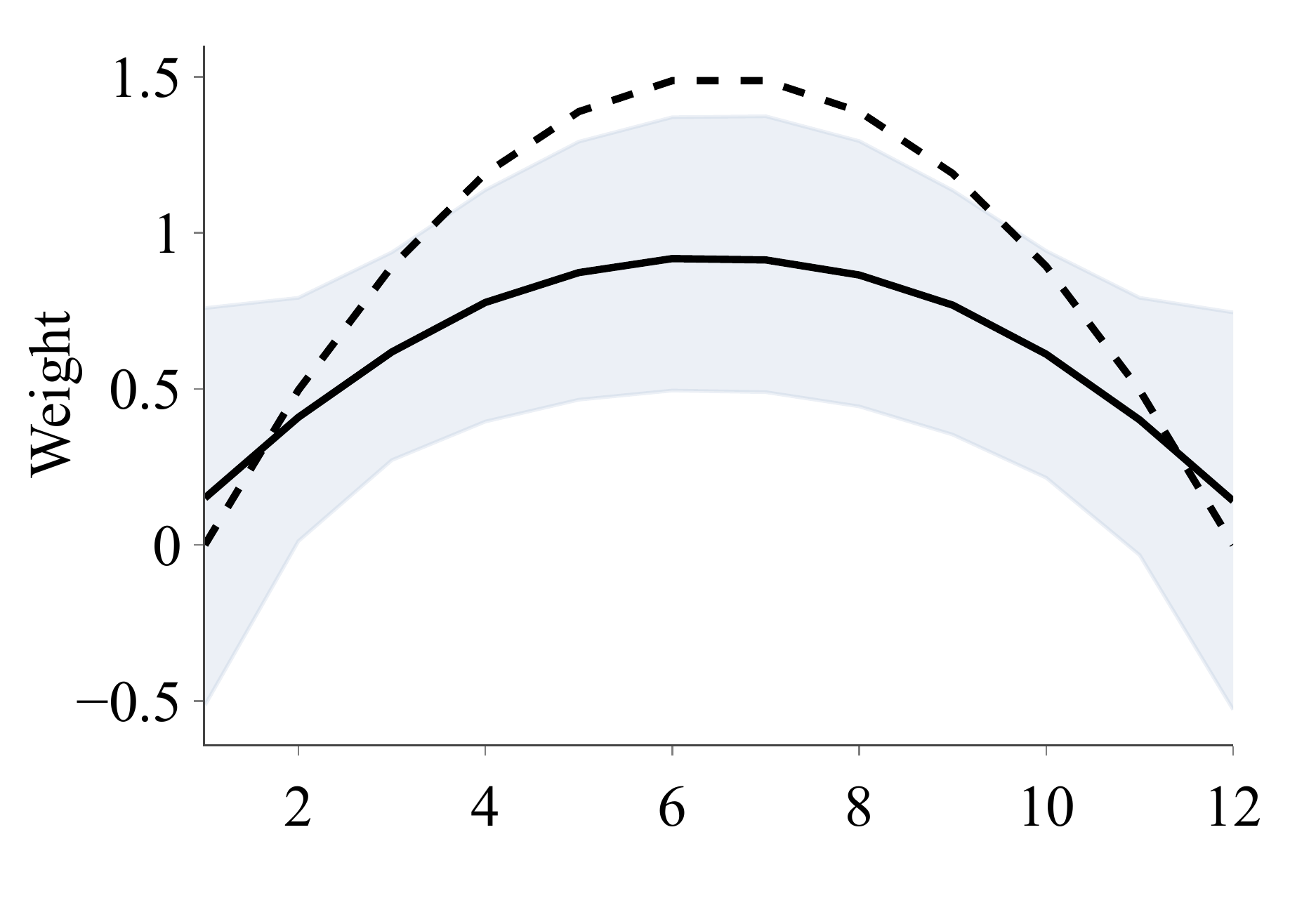}
	\end{subfigure}
	\caption{\footnotesize The figure shows the fitted Beta(2,2) weights. We plot the estimated weights for the LASSO-U-MIDAS, LASSO-MIDAS, and sg-LASSO-MIDAS estimators for the baseline DGP scenario. The first row plots weights for the sample size $T=50$, the second row plots weights for the sample size $T=200$. The black solid line is the median estimate of the weights function, the black dashed line is the population weight function, and the gray area is the 90\% confidence interval.}
	\label{appendix:fig:weights_beta_3}			
\end{figure}

\clearpage

\section{Detailed description of data and models}\label{appendix:detailed_description}

The standard macro variables are collected from {\it Haver Analytics} and {\it ALFRED} databases. ALFRED is a public data source for real-time data made available by the Federal Serve Bank of St. Louis; see the full list of the series with further details in Table \ref{tab:data_macro}. For series that are collected from the Haver Analytics database, we use {\it as reported} data, that is the first release is used for each data point. For the data that we collect from ALFRED, full data vintages are used. All the data is real-time, hence publication delays for each series are taken into consideration and we align each series accordingly. We use twelve monthly and four quarterly lags for each monthly and quarterly series respectively and apply Legendre aggregation with polynomial degree set to three. The groups are defined as lags of each series. 

\smallskip

On top of macro data, we add eight financial series which are collected from {\it FRED} database; the full list of the series appears in Table \ref{tab:data_fin_uncert}. These series are available in real time, hence no publication delays are needed in this case. We use three monthly lags and apply Legendre aggregation with polynomial degree set to two. As for macro, we group all lags of each series.  

\smallskip

Lastly, we add textual analysis covariates which are available at \href{http://structureofnews.com/}{http://structureofnews.com/}. The data is real time, i.e., topic models are estimated for each day and the monthly series are obtained by aggregating daily data; see \cite{bybee2019structure} for further details on the data construction. We use categories of series are potentially closely tied with economic activity, which are {\it Banks}, {\it Economic Growth}, {\it Financial Markets}, {\it Government}, {\it Industry}, {\it International Affairs}, {\it Labor/income}, and {\it Oil \& Mining}. In total, we add 76 news attention series; the full list is available in Table \ref{tab:data_text}. Three lags are used and Legendre aggregation of degree two is applied to each series. In this case, we group variables based on categories.

\smallskip

To make the comparison with the NY Fed nowcasts as close as possible, we use 15 years (60 quarters) of the data and use rolling window estimation. The first nowcast is for the 2002 Q1 (first quarter that NY Fed publishes its historic nowcasts) and the effective sample size starts at 1988 Q1 (taking 15 years of data accounting for lags). We calculate predictions until the sample is exhausted, which is 2017 Q2, the last date for which news attention data is available. Real GDP growth rate data vintages are taken from {\it ALFRED} database. Some macro series start later than 1988 Q1, in which case we impute zero values. Lastly, we use four lags of real GDP growth rate in all models.

\paragraph{Alternative estimators}

We implemented the following alternative machine learning nowcasting methods. The first method is the PCA factor-augmented autoregression, where we estimate the first principal component of the data panel and use it together with four autoregressive lags. We denote this model PCA-OLS. We then consider three alternative penalty functions for the same linear model: ridge, LASSO, and Elastic Net. For these methods, we leave high-frequency lags unrestricted, and thus we call these methods the unrestricted MIDAS (U-MIDAS). As for the sg-LASSO-MIDAS model, we tune one- and two-dimensional regularization parameters via 5-fold cross-validation.

\begin{table}[!htbp]
	\centering
	{\tiny
		\begin{tabular}{r |l ll}
			& Series & Source & Units \\ 
			\hline
			\hline
			1 & ADP nonfarm private payroll employment & Haver & Level change (thousands)\\
			2 & Building permits & ALFRED & Level change (thousands)\\
			3 & Capacity utilization & ALFRED & Ppt. change\\
			4 & Civilian unemployment rate & ALFRED & Ppt. change\\
			5 & CPI-U: all items & ALFRED & MoM \% change\\
			6 & CPI-U: all items less food and energy & ALFRED & MoM \% change\\
			7 & Empire State Mfg. survey: general business conditions & Haver & Index\\
			8 & Exports: goods and services & Haver & MoM \% change\\
			9 & Export price index & Haver & MoM \% change\\
			10 & Housing starts	& ALFRED & MoM \% change\\
			11 & Imports: goods and services & Haver & MoM \% change\\
			12 & Import price index & Haver & MoM \% change\\
			13 & Industrial production index & ALFRED & MoM \% change\\
			14 & Inventories: Total business & ALFRED & MoM \% change\\
			15 & ISM mfg.: PMI composite index & Haver & Index\\
			16 & ISM mfg.: Prices index & Haver & Index\\	
			17 & ISM mfg.: Employment index & Haver & Index\\
			18 & ISM nonmanufacturing: NMI composite index & Haver & Index\\
			19 & JOLTS: Job openings: total & Haver & Level change (thousands)\\
			20 & Manufacturers new orders: durable goods & ALFRED & MoM \% change\\
			21 & Manufacturing payrolls & Haver & Level change (thousands)\\
			22 & Manufacturers shipments: durable goods & Haver & MoM \% change\\
			23 & Manufacturers inventories: durable goods & Haver & MoM \% change\\
			24 & Manufacturers' unfilled orders: total manufacturing & Haver & MoM \% change\\
			25 & New single family houses sold & ALFRED & MoM \% change\\
			26 & Nonfarm business sector: unit labor cost & ALFRED & QoQ \% change (annual rate)\\
			27 & PCE less food and energy: chain price index & ALFRED & MoM \% change\\
			28 & PCE: chain price index & ALFRED & MoM \% change\\
			29 & Philly Fed Mfg. business outlook: current activity & Haver & Index\\	
			30 & Retail sales and food services & ALFRED & MoM \% change\\
			31 & Real personal consumption expenditures  & ALFRED & MoM \% change\\
			32 & Real gross domestic income & Haver & QoQ \% change (annual rate)\\
			33 & Real disposable personal income & ALFRED & MoM \% change\\
			34 & Value of construction put in place & Haver & MoM \% change\\
			\hline 
		\end{tabular}
	}
	\caption{\scriptsize Data description table (macro data)-- The \textit{Series} column gives a time-series name, which is given in the second column \textit{Source}. The column \textit{Units} denotes the data transformation applied to a time-series. \label{tab:data_macro}	}
\end{table}
\begin{table}[!htbp]
	\centering
	{\tiny
		\begin{tabular}{r |l ll}
			& Series & Source & Units \\ 
			\hline
			\hline
			1 & BAA less AAA corporate bond spread & FRED & Level \\
			2 & BAA less 10-year bond spread & FRED & Level \\
			3 & S\&500 & FRED & Log-returns \% \\
			4 & TED spread & FRED & Level \\
			5 & 10-year less 3-month bond spread & FRED & Level \\ 
			6 & VIX & FRED & Level \\
			7 & Economic policy uncertainty index (EPUI) & FRED & Index \\
			8 & Equity market-related economic uncertainty index (EMEUI) & FRED & Index \\
			\hline 
		\end{tabular}
	}
	\caption{\scriptsize Data description table (financial and uncertainty series) -- The \textit{Series} column gives a time-series name, which is given in the second column \textit{Source}. The column \textit{Units} denotes the data transformation applied to a time-series. \label{tab:data_fin_uncert}	}
\end{table}

\clearpage
{\tiny
	\begin{longtable}[!htbp]{r | cc}
		&  Group & Series   \\ 
		\hline
		\hline
		1 & Banks  & Bank loans \\ 
		2 & Banks  & Credit ratings \\ 
		3 & Banks  & Financial crisis \\
		4 & Banks  & Mortgages \\
		5 & Banks  & Nonperforming loans \\
		6 & Banks  & Savings \& loans \\
		7 & Economic Growth & Economic growth \\
		8 & Economic Growth & European sovereign debt \\
		9 & Economic Growth &  Federal Reserve \\
		10 & Economic Growth & Macroeconomic data \\
		11 & Economic Growth & Optimism \\
		12 & Economic Growth & Product prices \\
		13 & Economic Growth & Recession \\
		14 & Economic Growth &  Record high \\
		15 & Financial Markets & Bear/bull market \\
		16 & Financial Markets & Bond yields \\
		17 & Financial Markets & Commodities \\
		18 & Financial Markets & Currencies/metals \\
		19 & Financial Markets & Exchanges/composites \\
		20 & Financial Markets & International exchanges \\
		21 & Financial Markets & IPOs \\
		22 & Financial Markets & Options/VIX \\
		23 & Financial Markets & Share payouts \\
		24 & Financial Markets & Short sales \\
		25 & Financial Markets & Small caps \\
		26 & Financial Markets & Trading activity \\
		27 & Financial Markets & Treasury bonds \\
		28 & Government & Environment \\
		29 & Government & National security \\
		30 & Government & Political contributions \\
		31 & Government & Private/public sector \\
		32 & Government & Regulation \\
		33 & Government & Safety administrations \\
		34 & Government & State politics \\
		35 & Government & Utilities \\
		36 & Government & Watchdogs \\
		37 & Industry & Cable \\
		38 & Industry & Casinos \\
		39 & Industry & Chemicals/paper \\
		40 & Industry & Competition \\
		41 & Industry & Couriers \\
		42 & Industry & Credit cards \\
		43 & Industry & Fast food \\
		44 & Industry & Foods/consumer goods \\
		45 & Industry & Insurance \\
		46 & Industry & Luxury/beverages \\
		47 & Industry & Revenue growth \\
		48 & Industry & Small business \\
		49 & Industry & Soft drinks \\
		50 & Industry & Subsidiaries \\
		51 & Industry & Tobacco \\
		52 & Industry & Venture capital \\
		53 & International Affairs & Canada/South Africa \\
		54 & International Affairs & China \\
		55 & International Affairs & France/Italy \\
		56 & International Affairs & Germany \\
		57 & International Affairs & Japan \\
		58 & International Affairs & Latin America \\
		59 & International Affairs & Russia \\
		60 & International Affairs & Southeast Asia \\
		61 & International Affairs & Trade agreements \\
		62 & International Affairs & UK \\
		63 & Labor/income & Executive pay \\
		64 & Labor/income & Fees \\
		65 & Labor/income & Government budgets \\
		66 & Labor/income & Health insurance \\
		67 & Labor/income & Job cuts \\
		68 & Labor/income & Pensions \\
		69 & Labor/income & Taxes \\
		70 & Labor/income & Unions \\
		71 & Oil \& Mining & Agriculture \\
		72 & Oil \& Mining & Machinery \\
		73 & Oil \& Mining & Mining \\
		74 & Oil \& Mining & Oil drilling \\
		75 & Oil \& Mining & Oil market \\
		76 & Oil \& Mining & Steel \\
		\hline 
		\caption{\scriptsize Data description table (textual data) -- The \textit{Group} column is a group name of individual textual analysis series which appear in the column \textit{Series}. Data is taken in levels. \label{tab:data_text}	}
	\end{longtable}
}

\clearpage

\end{document}